\newtheorem{theorem}{Theorem}
\newtheorem{assumption}{Assumption}
\newtheorem{lemma}{Lemma}
\newtheorem{remark}{Remark}
\newenvironment{proof}[1][Proof]{\noindent \textbf{#1.} }{\  \rule{0.5em}{0.5em}}
\begin{document}

\title{Pooled Bewley Estimator of Long Run Relationships in Dynamic
Heterogenous Panels\thanks{%
The views expressed in this paper are those of the authors and do not
necessarily reflect those of the Federal Reserve Bank of Dallas or the
Federal Reserve System. Monte Carlo simulations in this paper were computed
using computational resources provided by the Big-Tex High Performance
Computing Group at the Federal Reserve Bank of Dallas. We are grateful for
comments by two referees and the editor.}}
\author{Alexander Chudik\thanks{%
Email: alexander.chudik@gmail.com.} \\
Federal Reserve Bank of Dallas \and M. Hashem Pesaran\thanks{%
Email: mhp1@cam.ac.uk.} \\
University of Southern California, USA and Trinity College, Cambridge, UK
\and Ron P. Smith\thanks{%
Email: r.smith@bbk.ac.uk.} \\
Birkbeck, University of London, United Kingdom}
\date{October 30, 2023}
\maketitle

\begin{abstract}
Using a transformation of the autoregressive distributed lag model due to
Bewley, a novel pooled Bewley (PB) estimator of long-run coefficients for
dynamic panels with heterogeneous short-run dynamics is proposed. The PB
estimator is directly comparable to the widely used Pooled Mean Group (PMG)
estimator, and is shown to be consistent and asymptotically normal. Monte
Carlo simulations show good small sample performance of PB compared to the
existing estimators in the literature, namely PMG, panel dynamic OLS
(PDOLS), and panel fully-modified OLS (FMOLS). Application of two
bias-correction methods and a bootstrapping of critical values to conduct
inference robust to cross-sectional dependence of errors are also
considered. The utility of the PB estimator is illustrated in an empirical
application to the aggregate consumption function.\smallskip

\noindent \textbf{Keywords: }Heterogeneous dynamic panels; I(1) regressors;
pooled mean group estimator (PMG), Autoregressive-Distributed Lag model
(ARDL), Bewley transform, PDOLS, FMOLS, bias correction, robust inference,
cross-sectional dependence.

\noindent \textbf{JEL Classification: }C12, C13, C23, C33\bigskip \bigskip
\bigskip 
\end{abstract}

\thispagestyle{empty}\pagebreak

\pagenumbering{arabic}%

\section{Introduction}

\doublespacing%

Estimation of cointegrating relationships in panels with heterogeneous
short-run dynamics is important for empirical research in open economy
macroeconomics as well as in other fields in economics. Existing
single-equation panel estimators in the literature are panel Fully Modified
OLS (FMOLS) by 
\citeANP{Pedroni1996} (\citeyearNP{Pedroni1996}, \citeyearNP{Pedroni2001}, \citeyearNP{Pedroni2001ReStat})%
, panel Dynamic OLS (PDOLS) by 
\citeN{MarkSul2003}%
, and the likelihood based Pooled Mean Group (PMG) estimator by 
\citeN{PesaranShinSmith1999}%
. Multi-equation (system) approach by 
\citeN{Breitung2005}%
, and the related system PMG approach by 
\citeN{ChudikPesaranSmith2022}
are another contributions in the literature on estimating cointegrating
vectors in a panel context. In this paper, we propose a pooled Bewley (PB)
estimator of long-run relationships, relying on the Bewley transform of an
autoregressive distributed lag (ARDL) model (%
\citeANP{Bewley1979}, \citeyearNP{Bewley1979}%
). See also 
\citeANP{WickensBreusch1988} (\citeyearNP{WickensBreusch1988})
for a discussion of the Bewley transform.

Our setting is the same as that of 
\citeN{PesaranShinSmith1999}%
. Under this setting, any short-run feedbacks between the outcome variable ($%
y$) and regressors ($x$) are allowed, but the direction of the long run
causality is assumed to go from $x$ to $y$. Hence, as with the PMG, the PB
estimator allows for heterogeneity in short-run feedbacks, but restricts the
direction of long run causality. The PB estimator is computed analytically,
and does not rely on numerical maximization of the likelihood function that
underlies the PMG estimation. We derive the asymptotic distribution of the
PB estimator when the cross-section dimension ($n$) and the time dimension ($%
T$) diverge to infinity jointly such that $n=\Theta \left( T^{\theta
}\right) $, for $0<\theta <2$, where we use the notation $\Theta \left(
.\right) $ to denote the same order of magnitude asymptotically, namely if $%
\left\{ f_{s}\right\} _{s=1}^{\infty }$ and $\left\{ g_{s}\right\}
_{s=1}^{\infty }$ are both positive sequences of real numbers, then $%
f_{s}=\ominus \left( g_{s}\right) $ if there exists $S_{0}\geq 1$ and
positive finite constants $C_{0}$ and $C_{1}$, such that $\inf_{s\geq
S_{0}}\left( f_{s}/g_{s}\right) \geq C_{0},$ and $\sup_{s\geq S_{0}}\left(
f_{s}/g_{s}\right) \leq C_{1}$. Our asymptotic analysis is an advance over
the theoretical results currently available for PMG, PDOLS and FMOLS
estimators where it is assumed that $n$ is small relative to $T$ (which
corresponds to the case where $\theta $ is close to zero).

How well individual estimators work in samples of interest in practice where 
$n$ and/or $T$ are often less than 50 is a different matter, which we shed
light on using Monte Carlo experiments. Monte Carlo evidence shows PB
estimator can be superior to PMG, PDOLS and FMOLS, in terms of its overall
precision as measured by the Root Mean Square Error (RMSE), and in terms of
accuracy of inference as measured by size distortions. These experiments
reveal PB is a useful addition to the literature.

Monte Carlo evidence also shows that the time dimension is very important
for the performance of these estimators, and that all of the four estimators
under consideration suffer from the same two drawbacks: small sample bias
and size distortion. Although the size distortions are found to be less
serious for the PB estimator in our experiments, all four estimators exhibit
notable over-rejections in sample sizes relevant in practice. In addition,
all four estimators (perhaps unsurprisingly) suffer from bias in finite
samples, albeit a rather small one. Both drawbacks diminish as $T$ is
increased relative to $n$.

To conduct reliable inference regardless of the cross-sectional dependence
of errors, we make use of the sieve wild bootstrap procedure. To accommodate
cross-sectional dependence, we resample the cross-section vectors of
residuals, an idea that was originally proposed by 
\citeN{MaddalaWu1999}%
. We found the sieve wild bootstrap procedure to be remarkably effective for
all four estimators, regardless of cross-sectional dependence of errors, and
we therefore recommend using it in empirical research.

Regarding the small sample bias, we consider the application of two
bias-correction methods taken from the literature, relying either on
split-panel jackknife (%
\citeANP{DhaeneJochmansy2015}, \citeyearNP{DhaeneJochmansy2015}%
) or sieve wild bootstrap approaches. In contrast to split-panel approaches
in panels without stochastic trends, as, for instance, considered by of 
\citeN{DhaeneJochmansy2015}
or 
\citeN{ChudikPesaranYang2018}%
, in this paper we need to combine the full sample and half-panel subsamples
using different weighting due to the fact that the rate of convergence of
the estimators of long run coefficients is faster, at the rate of $T\sqrt{n}$%
, as compared to the standard rate of $\sqrt{nT}$. We find that both of
these approaches can be helpful in reducing the bias (for all four
estimators). However, given that the bias is small to begin with, the value
of bias correction methods is limited.

The relevance of choosing a particular estimation approach is illustrated in
the context of a consumption function application for OECD economies taken
from 
\citeN{PesaranShinSmith1999}%
. This application shows that quite a different conclusion would be reached
when using PB estimator, which does not reject the zero long-run coefficient
on inflation, in line with the long-run neutrality of monetary policy,
whereas the PMG estimator results in a highly statistically significant
negative long-run coefficient. Estimates of the long-run coefficient on real
income are less diverse across estimators, but the inference on whether a
unit long-run coefficient on real income (as suggested by balanced growth
path models in the literature) can be rejected or not depends on the choice
of a particular estimator.

The remainder of this paper is organized as follows. Section \ref{Model}
presents the model and assumptions, introduces the PB estimator, and
provides asymptotic results. Application of bias correction methods and
bootstrapping critical values are also discussed in Section \ref{Model}.
Section \ref{MC} presents Monte Carlo evidence. Section \ref{EA} revisits
the aggregate consumption function empirical application in 
\citeN{PesaranShinSmith1999}%
. Section \ref{CON} concludes. Mathematical derivations and proofs are
provided in Appendix A. Details on the implementation of individual
estimators and bootstrapping, and additional Monte Carlo results are
provided in Appendix B.

\section{Pooled Bewley estimator of long-run relationships\label{Model}}

We adopt the same setting as in 
\citeN{PesaranShinSmith1999}%
, and consider the following illustrative model%
\begin{eqnarray}
\Delta y_{it} &=&c_{i}-\alpha _{i}\left( y_{i,t-1}-\beta x_{i,t-1}\right)
+u_{y,it}\text{,}  \label{y} \\
\Delta x_{it} &=&u_{x,it}\text{,}  \label{x}
\end{eqnarray}%
for $i=1,2,...,n$, and $t=1,2,...,T$. For expositional clarity and
notational simplicity, we focus on a single regressor and one lag, but it is
understood that our analysis is applicable to multiple lags of $\Delta 
\mathbf{z}_{it}=\left( \Delta y_{it},\Delta x_{it}\right) ^{\prime }$
entering both equations (\ref{y})-(\ref{x}), and the approach is also
applicable to multiple $x_{it}$'s with a single long run relationship. We
consider the following assumptions:

\begin{assumption}
\textbf{(Coefficients) }\label{As1} $\sup_{i}\left\vert 1-\alpha
_{i}\right\vert <1$.
\end{assumption}

\begin{assumption}
\textbf{(shocks)}\label{As2} $u_{x,it}\sim IID\left( 0,\sigma
_{xi}^{2}\right) $, and $u_{y,it}$ is given by%
\begin{equation}
u_{y,it}=\delta _{i}u_{x,it}+v_{it}\text{,}  \label{uy}
\end{equation}%
for all $i$ and $t$, where $v_{it}\sim IID\left( 0,\sigma _{vi}^{2}\right) $%
, and $u_{x,it}$ is independently distributed of $v_{i^{\prime }t^{\prime }}$
for all $i$,$i^{\prime },t$, and $t^{\prime }$. In addition, $%
\sup_{i,t}E\left\vert v_{it}\right\vert ^{16}<K$ and $\sup_{i,t}E\left\vert
u_{x,it}\right\vert ^{8}<K$, and $\lim_{n\rightarrow \infty
}n^{-1}\sum_{i=1}^{n}\sigma _{xi}^{2}=\sigma _{x}^{2}>0$ and $\lim
{}_{n\rightarrow \infty }n^{-1}\sum_{i=1}^{n}\sigma _{xi}^{2}\sigma
_{vi}^{2}/\left( 6\alpha _{i}^{2}\right) =\omega _{v}^{2}>0$ exist.
\end{assumption}

\begin{assumption}
\textbf{(Initial values and deterministic terms)}\label{As3} The initial
values, $\mathbf{z}_{i,0}=\left( y_{i,0},x_{i,0}\right) ^{\prime }$, follow
the process%
\begin{equation*}
\mathbf{z}_{i0}=\mathbf{\mu }_{i}+\mathbf{C}_{i}^{\ast }\left( L\right) 
\mathbf{u}_{0}\text{,}
\end{equation*}%
for all $i$ and $t$, and $c_{i}=$ $\alpha _{i}\mu _{i,1}-\alpha _{i}\beta
\mu _{i,2}$ for all $i$, where $\mathbf{u}_{0}=\left(
u_{y,i,0},u_{x,i,0}\right) ^{\prime }$, $\mathbf{\mu }_{i}=\left( \mu
_{i,1},\mu _{i,2}\right) ^{\prime }$, $\left\Vert \mathbf{\mu }%
_{i}\right\Vert <K$, and $\mathbf{C}_{i}^{\ast }\left( L\right) $ is defined
in Section \ref{A1} in Appendix A.
\end{assumption}

\begin{remark}
\label{Ralpha}Assumption \ref{As1} requires that $\alpha _{i}\neq 0$ for 
\textit{all} $i$. In contrast PMG allows $\alpha _{i}=0$ for some (but not
all) units. Although the estimator works for $\alpha _{i}<2,$ empirically, $%
\alpha _{i}<1$ is likely to be the relevant case. \ 
\end{remark}

\begin{remark}
Assumption \ref{As2} allows for $u_{x,it}$ to be correlated with $u_{y,it}$.
Cross-section dependence of $\mathbf{u}_{it}$ is ruled out. Assumption \ref%
{As3} (together with the remaining assumptions) ensure that $\Delta \mathbf{z%
}_{it}$ and $\left( y_{it}-\beta x_{it}\right) $ are covariance stationary.
\end{remark}

\begin{remark}
In comparing our assumptions with the rest of the literature, it should be
noted that the rest of the literature does not consider the case of joint
convergence $n,T\rightarrow _{j}\infty $, but only the case where $n$ is
fixed as $T\rightarrow \infty $. Joint asymptotics typically requires more
stringent assumptions on the errors and restrictions on the relative
expansion rates of $n$ and $T$.
\end{remark}

Substituting first (\ref{uy}) for $u_{y,it}$ in (\ref{y}), and then
substituting $u_{x,it}=\Delta x_{it}$, we obtain the following ARDL
representation for $y_{it}$%
\begin{equation}
\Delta y_{it}=c_{i}-\alpha _{i}\left( y_{i,t-1}-\beta x_{i,t-1}\right)
+\delta _{i}\Delta x_{it}+v_{it}\text{.}  \label{ardl}
\end{equation}%
The pooled Bewley estimator takes advantage of the Bewley transform (%
\citeANP{Bewley1979}, \citeyearNP{Bewley1979}%
). Subtracting $\left( 1-\alpha _{i}\right) y_{it}$ from both sides of (\ref%
{ardl}) and re-arranging, we have%
\begin{equation*}
\alpha _{i}y_{it}=c_{i}-\left( 1-\alpha _{i}\right) \Delta y_{it}+\alpha
_{i}\beta x_{it}+\delta _{i}\Delta x_{it}+v_{it},
\end{equation*}%
or (noting that $\alpha _{i}>0$ for all $i$ and multiplying the equation
above by $\alpha _{i}^{-1}$)%
\begin{equation}
y_{it}=\alpha _{i}^{-1}c_{i}+\beta x_{it}+\mathbf{\psi }_{i}^{\prime }\Delta 
\mathbf{z}_{it}+\alpha _{i}^{-1}v_{it},  \label{b}
\end{equation}%
where $\Delta \mathbf{z}_{it}=\left( \Delta y_{it},\Delta x_{it}\right)
^{\prime }$, and $\mathbf{\psi }_{i}=\left( -\frac{1-\alpha _{i}}{\alpha _{i}%
},\frac{\delta _{i}}{\alpha _{i}}\right) ^{\prime }$. Further, stacking (\ref%
{b}) for $t=1,2,...,T$, we have%
\begin{equation}
\mathbf{y}_{i}=\alpha _{i}^{-1}c_{i}\mathbf{\tau }_{T}+\mathbf{x}_{i}\beta
+\Delta \mathbf{Z}_{i}\mathbf{\psi }_{i}+\alpha _{i}^{-1}\mathbf{v}_{i}\text{%
,}  \label{i}
\end{equation}%
where $\mathbf{y}_{i}=\left( y_{i1},y_{i2},...,y_{iT}\right) ^{\prime }$, $%
\mathbf{x}_{i}=\left( x_{i1},x_{i2},...,x_{iT}\right) ^{\prime }$, $\Delta 
\mathbf{Z}_{i}=\left( \Delta \mathbf{z}_{i1}^{\prime },\Delta \mathbf{z}%
_{i2}^{\prime },...,\Delta \mathbf{z}_{iT}^{\prime }\right) ^{\prime }$, $%
\mathbf{v}_{i}=\left( v_{i,1},v_{i,2},...,v_{i,T}\right) ^{\prime }$, and $%
\mathbf{\tau }_{T}$ is $T\times 1$ vector of ones. Define projection matrix $%
\mathbf{M}_{\tau }=\mathbf{I}_{T}-T^{-1}\mathbf{\tau }_{T}\mathbf{\tau }%
_{T}^{\prime }$. This projection matrix subtracts the period average. Let $%
\mathbf{\tilde{y}}_{i}=\left( \tilde{y}_{i1},\tilde{y}_{i2},...,\tilde{y}%
_{iT}\right) ^{\prime }=\mathbf{M}_{\tau }\mathbf{y}_{i}$, and similarly $%
\mathbf{\tilde{x}}_{i}=\left( \tilde{x}_{i1},\tilde{x}_{i2},...,\tilde{x}%
_{iT}\right) ^{\prime }=\mathbf{M}_{\tau }\mathbf{x}_{i}$, $\Delta \mathbf{%
\tilde{Z}}_{i}=\mathbf{M}_{\tau }\Delta \mathbf{Z}_{i}$, and $\mathbf{\tilde{%
v}}_{i}=\mathbf{M}_{\tau }\mathbf{v}_{i}$. Multiplying (\ref{i}) by $\mathbf{%
M}_{\tau }$, we have%
\begin{equation*}
\mathbf{\tilde{y}}_{i}=\mathbf{\tilde{x}}_{i}\beta +\Delta \mathbf{\tilde{Z}}%
_{i}\mathbf{\psi }_{i}+\alpha _{i}^{-1}\mathbf{\tilde{v}}_{i}\text{.}
\end{equation*}%
Now consider the matrix of instruments 
\begin{equation}
\mathbf{\tilde{H}}_{i}=\left( \mathbf{\tilde{y}}_{i,-1},\mathbf{\tilde{x}}%
_{i},\mathbf{\tilde{x}}_{i,-1}\right) =\mathbf{M}_{\tau }\mathbf{H}_{i}\text{%
, }\mathbf{H}_{i}=\left( \mathbf{y}_{i,-1},\mathbf{x}_{i},\mathbf{x}%
_{i,-1}\right) \text{,}  \label{z}
\end{equation}%
where $\mathbf{y}_{i,-1}=\left( y_{i,1},y_{i,1},...,y_{i,T-1}\right)
^{\prime }$ is the data vector on the first lag of $y_{it}$, similarly $%
\mathbf{x}_{i,-1}=\left( x_{i,1},x_{i,1},...,x_{i,T-1}\right) ^{\prime }$.
The PB estimator of $\beta $ is given by%
\begin{equation}
\hat{\beta}=\left( \sum_{i=1}^{n}\mathbf{\tilde{x}}_{i}^{\prime }\mathbf{M}%
_{i}\mathbf{\tilde{x}}_{i}\right) ^{-1}\left( \sum_{i=1}^{n}\mathbf{\tilde{x}%
}_{i}^{\prime }\mathbf{M}_{i}\mathbf{\tilde{y}}_{i}\right) \text{,}
\label{pmg}
\end{equation}%
where 
\begin{equation}
\mathbf{M}_{i}=\mathbf{P}_{i}-\mathbf{P}_{i}\Delta \mathbf{\tilde{Z}}%
_{i}\left( \Delta \mathbf{\tilde{Z}}_{i}^{\prime }\mathbf{P}_{i}\Delta 
\mathbf{\tilde{Z}}_{i}\right) ^{-1}\Delta \mathbf{\tilde{Z}}_{i}^{\prime }%
\mathbf{P}_{i}\text{,}  \label{mi}
\end{equation}%
and%
\begin{equation}
\mathbf{P}_{i}=\mathbf{\tilde{H}}_{i}\left( \mathbf{\tilde{H}}_{i}{}^{\prime
}\mathbf{\tilde{H}}_{i}\right) ^{-1}\mathbf{\tilde{H}}_{i}^{\prime }\text{,}
\label{pi}
\end{equation}%
is the projection matrix associated with $\mathbf{\tilde{H}}_{i}$.

In addition to Assumptions \ref{As1}-\ref{As3}, we also require the
following high-level conditions to hold in the derivations of the asymptotic
distribution of the PB estimator under the joint asymptotics $n,T\rightarrow
\infty $.

\begin{assumption}
\label{As4} Let $\mathbf{M}_{\tau }=\mathbf{I}_{T}-T^{-1}\mathbf{\tau }_{T}%
\mathbf{\tau }_{T}^{\prime },$ $\mathbf{\tilde{y}}_{i}=\mathbf{M}_{\tau }%
\mathbf{y}_{i}$, $\mathbf{\tilde{x}}_{i}=\mathbf{M}_{\tau }\mathbf{x}_{i}$, 
\newline
$\Delta \mathbf{\tilde{Z}}_{i}=\mathbf{M}_{\tau }\Delta \mathbf{Z}_{i}=%
\mathbf{M}_{\tau }\left( \Delta \mathbf{z}_{i1}^{\prime },\Delta \mathbf{z}%
_{i2}^{\prime },...,\Delta \mathbf{z}_{iT}^{\prime }\right) ^{\prime }$,
where $\Delta \mathbf{z}_{it}=\left( \Delta y_{it},\Delta x_{it}\right)
^{\prime }$. Then there exists $T_{0}\in \mathbb{N}$ such that the following
conditions are satisfied:

\begin{enumerate}
\item[($i$)] $\sup_{i\in \mathbb{N}\text{, }T>T_{0}}E\left[ \lambda _{\min
}^{-2}\left( \mathbf{B}_{iT}\right) \right] <K$, where $\mathbf{B}%
_{iT}=\Delta \mathbf{\tilde{Z}}_{i}^{\prime }\mathbf{P}_{i}\Delta \mathbf{%
\tilde{Z}}_{i}/T$, $\mathbf{P}_{i}$ is given by (\ref{pi}).

\item[($ii$)] $\sup_{i\in \mathbb{N}\text{, }T>T_{0}}E\left[ \lambda _{\min
}^{-2}\left( \mathbf{A}_{T}\mathbf{\tilde{H}}_{i}^{\ast \prime }\mathbf{%
\tilde{H}}_{i}^{\ast }\mathbf{A}_{T}\right) \right] <K$, where $\mathbf{H}%
_{i}^{\ast }=\left( \mathbf{\tilde{x}}_{i},\Delta \mathbf{\tilde{x}}_{i},%
\mathbf{\tilde{\xi}}_{i,-1}\right) $,%
\begin{equation*}
\mathbf{A}_{T}=\left( 
\begin{array}{ccc}
T^{-1} & 0 & 0 \\ 
0 & T^{-1/2} & 0 \\ 
0 & 0 & T^{-1/2}%
\end{array}%
\right) \text{,}
\end{equation*}%
$\mathbf{\tilde{\xi}}_{i,-1}=\left( \tilde{\xi}_{i,0},\tilde{\xi}_{i,1},...,%
\tilde{\xi}_{i,T-1}\right) ^{\prime }$, $\tilde{\xi}_{i,t-1}=\tilde{y}%
_{i,t-1}-\beta \tilde{x}_{i,t-1}.$
\end{enumerate}
\end{assumption}

\begin{remark}
Under Assumptions \ref{As1}-\ref{As3} (and without Assumption \ref{As4}), we
have $\limfunc{plim}_{T\rightarrow \infty }\mathbf{B}_{i,T}=\mathbf{B}_{i}$,
where $\mathbf{B}_{i}$ is nonsingular (see Lemma \ref{lh2} in Appendix A).
Similarly, it can be shown that Assumptions \ref{As1}-\ref{As3} are
sufficient for $\limfunc{plim}_{T\rightarrow \infty }\mathbf{A}_{T}\mathbf{%
\tilde{H}}_{i}^{\ast \prime }\mathbf{\tilde{H}}_{i}^{\ast }\mathbf{A}_{T}$
to exist and to be nonsingular. However, these results are not sufficient
for the moments of $\left\Vert \mathbf{B}_{i}^{-1}\right\Vert $ and $%
\left\Vert \left( \mathbf{A}_{T}\mathbf{H}_{i}^{\ast \prime }\mathbf{H}%
_{i}^{\ast }\mathbf{A}_{T}\right) ^{-1}\right\Vert $ to exist, which we
require for the derivations of the asymptotic distribution of the PB
estimator. This is ensured by Assumption \ref{As4}.
\end{remark}

\subsection{Asymptotic results}

Substituting $\mathbf{\tilde{y}}_{i}=\mathbf{\tilde{x}}_{i}\beta +\Delta 
\mathbf{\tilde{Z}}_{i}\mathbf{\psi }_{i}+\alpha _{i}^{-1}\mathbf{\tilde{v}}%
_{i}$ in (\ref{pmg}), and using $\mathbf{M}_{i}\Delta \mathbf{\tilde{Z}}_{i}=%
\mathbf{0}$, we have%
\begin{equation}
T\sqrt{n}\left( \hat{\beta}-\beta \right) =\left( \frac{1}{n}\sum_{i=1}^{n}%
\frac{\mathbf{\tilde{x}}_{i}^{\prime }\mathbf{M}_{i}\mathbf{\tilde{x}}_{i}}{%
T^{2}}\right) ^{-1}\left( \frac{1}{\sqrt{n}}\sum_{i=1}^{n}\frac{\mathbf{%
\tilde{x}}_{i}^{\prime }\mathbf{M}_{i}\mathbf{\tilde{v}}_{i}}{T\alpha _{i}}%
\right) \text{.}  \label{ts1}
\end{equation}%
Consider the first term on the right side of (\ref{ts1}) first. Since $%
\mathbf{M}_{i}$ is an orthogonal projection matrix, $\mathbf{\tilde{x}}%
_{i}^{\prime }\mathbf{M}_{i}\mathbf{\tilde{x}}_{i}/T^{2}$ is bounded by $%
\mathbf{\tilde{x}}_{i}^{\prime }\mathbf{\tilde{x}}_{i}/T^{2}$. The second
moments of $\mathbf{\tilde{x}}_{i}^{\prime }\mathbf{\tilde{x}}_{i}/T^{2}$
are bounded, and, in addition, $\mathbf{\tilde{x}}_{i}^{\prime }\mathbf{M}%
_{i}\mathbf{\tilde{x}}_{i}/T^{2}$ is cross-sectionally independent. It
follows that $\frac{1}{n}\sum_{i=1}^{n}\mathbf{\tilde{x}}_{i}^{\prime }%
\mathbf{M}_{i}\mathbf{\tilde{x}}_{i}/T^{2}$ converges to a constant, which
we denote by $\omega _{x}^{2}$, as $n,T\rightarrow \infty $. Lemma \ref{slp}
in Appendix A establishes the expression for $\omega _{x}^{2}=\sigma
_{x}^{2}/6$, where $\sigma _{x}^{2}=\lim_{n\rightarrow \infty
}n^{-1}\sum_{i=1}^{n}\sigma _{xi}^{2}$, but the specific expression for $%
\omega _{x}^{2}$ is not relevant for the inference approach that we adopt
below. Consider the second term of (\ref{ts1}) next,%
\begin{eqnarray}
\frac{1}{\sqrt{n}}\sum_{i=1}^{n}\frac{\mathbf{\tilde{x}}_{i}^{\prime }%
\mathbf{M}_{i}\mathbf{\tilde{v}}_{i}}{\alpha _{i}T} &=&\frac{1}{\sqrt{n}}%
\sum_{i=1}^{n}\left[ \frac{\mathbf{\tilde{x}}_{i}^{\prime }\mathbf{M}_{i}%
\mathbf{\tilde{v}}_{i}}{\alpha _{i}T}-E\left( \frac{\mathbf{\tilde{x}}%
_{i}^{\prime }\mathbf{M}_{i}\mathbf{\tilde{v}}_{i}}{\alpha _{i}T}\right) %
\right]  \label{sdab} \\
&&+\frac{1}{\sqrt{n}}\sum_{i=1}^{n}E\left( \frac{\mathbf{\tilde{x}}%
_{i}^{\prime }\mathbf{M}_{i}\mathbf{\tilde{v}}_{i}}{\alpha _{i}T}\right) 
\text{,}  \notag
\end{eqnarray}%
The term in the square brackets has zero mean and is independently
distributed over $i$. For the asymptotic distribution to be correctly
centered we need%
\begin{equation}
\frac{1}{\sqrt{n}}\sum_{i=1}^{n}E\left( \frac{\mathbf{\tilde{x}}_{i}^{\prime
}\mathbf{M}_{i}\mathbf{\tilde{v}}_{i}}{\alpha _{i}T}\right) \rightarrow 0%
\text{,}  \label{bp0}
\end{equation}%
as $n$ and $T\rightarrow \infty $. This condition holds so long as $n=\Theta
\left( T^{\theta }\right) $ for some $0<\theta <2$. See Lemma \ref{leb} for
a proof. The asymptotic distribution of the first term in (\ref{sdab}) is in
turn established by Lemma \ref{lad}, see (\ref{rcd}). The following theorem
now follows for the asymptotic distribution of $\hat{\beta}$.

\begin{theorem}
\label{T1}Let $\left( y_{it},x_{it}\right) $ be generated by model (\ref{y}%
)-(\ref{x}), suppose Assumptions \ref{As1}-\ref{As4} hold, and $%
n,T\rightarrow \infty $ such that $n=\Theta \left( T^{\theta }\right) $, for
some $0<\theta <2$. Consider the PB estimator $\hat{\beta}$ given by (\ref%
{pmg}). Then,%
\begin{equation}
T\sqrt{n}\left( \hat{\beta}-\beta \right) \rightarrow _{d}N\left( 0,\Omega
\right) \text{, }\Omega =\omega _{x}^{-4}\omega _{v}^{2}\text{,}  \label{c1}
\end{equation}%
where $\omega _{x}^{2}=\sigma _{x}^{2}/6$, $\sigma
_{x}^{2}=\lim_{n\rightarrow \infty }n^{-1}\sum_{i=1}^{n}\sigma _{xi}^{2}$
and $\omega _{v}^{2}=\lim {}_{n\rightarrow \infty
}n^{-1}\sum_{i=1}^{n}\sigma _{xi}^{2}\sigma _{vi}^{2}/\left( 6\alpha
_{i}^{2}\right) $.
\end{theorem}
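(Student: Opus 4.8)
The plan is to read off the limit from the ratio representation (\ref{ts1}) by treating its denominator and numerator separately and recombining them with Slutsky's theorem; almost all of the substantive work has already been pushed into the supporting lemmas, so the theorem itself is essentially an assembly step. Write $T\sqrt{n}(\hat{\beta}-\beta)=D_{nT}^{-1}N_{nT}$, where $D_{nT}=n^{-1}\sum_{i=1}^{n}\mathbf{\tilde{x}}_{i}^{\prime }\mathbf{M}_{i}\mathbf{\tilde{x}}_{i}/T^{2}$ and $N_{nT}=n^{-1/2}\sum_{i=1}^{n}\mathbf{\tilde{x}}_{i}^{\prime }\mathbf{M}_{i}\mathbf{\tilde{v}}_{i}/(\alpha _{i}T)$. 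For the denominator, Lemma \ref{slp} gives $D_{nT}\rightarrow _{p}\omega _{x}^{2}=\sigma _{x}^{2}/6$ as $n,T\rightarrow \infty $, and since Assumption \ref{As2} guarantees $\sigma _{x}^{2}>0$ the limit is strictly positive; hence by the continuous mapping theorem $D_{nT}^{-1}\rightarrow _{p}\omega _{x}^{-2}$, with $D_{nT}^{-1}$ well defined with probability approaching one.

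For the numerator I would use the decomposition (\ref{sdab}) into the centered sum and its mean. The mean (bias) term vanishes by Lemma \ref{leb}, and this is precisely where the rate restriction is consumed: the per-unit mean $E(\mathbf{\tilde{x}}_{i}^{\prime }\mathbf{M}_{i}\mathbf{\tilde{v}}_{i}/(\alpha _{i}T))$ is of order $T^{-1}$, so the aggregated bias is of order $\sqrt{n}/T=\Theta(T^{\theta /2-1})$ under $n=\Theta(T^{\theta })$, which tends to zero precisely when $\theta <2$. The centered term is a sum of variables that are independent across $i$ under Assumption \ref{As2}, and Lemma \ref{lad} (see (\ref{rcd})) supplies its central limit theorem, giving $N_{nT}\rightarrow _{d}N(0,\omega _{v}^{2})$ with $\omega _{v}^{2}=\lim_{n\rightarrow \infty }n^{-1}\sum_{i=1}^{n}\sigma _{xi}^{2}\sigma _{vi}^{2}/(6\alpha _{i}^{2})$.

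Finally I would combine the two limits by Slutsky's theorem: since $N_{nT}\rightarrow _{d}N(0,\omega _{v}^{2})$ while $D_{nT}^{-1}\rightarrow _{p}\omega _{x}^{-2}$ (a nonrandom constant), the product converges in distribution to $\omega _{x}^{-2}N(0,\omega _{v}^{2})=N(0,\omega _{x}^{-4}\omega _{v}^{2})$, which is the claimed $N(0,\Omega )$ with $\Omega =\omega _{x}^{-4}\omega _{v}^{2}$. The main obstacle I expect lies not in this recombination but in ensuring that the two component limits hold \emph{jointly} under a single rate condition: Lemma \ref{lad}'s CLT must verify a Lyapunov-type condition for the heterogeneous, $\mathbf{M}_{i}$-weighted quadratic forms $\mathbf{\tilde{x}}_{i}^{\prime }\mathbf{M}_{i}\mathbf{\tilde{v}}_{i}$, for which the finite sixteenth and eighth moments in Assumption \ref{As2} together with the inverse-moment bounds of Assumption \ref{As4} are needed, while Lemma \ref{leb}'s bias bound must be uniform in $i$. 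The delicate point is that the window $0<\theta <2$ must simultaneously annihilate the bias term and preserve a nondegenerate limiting variance.
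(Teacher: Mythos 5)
Your proposal is correct and follows essentially the same route as the paper: the paper also starts from the ratio representation (\ref{ts1}), sends the denominator to $\omega_{x}^{2}$ via Lemma \ref{slp}, splits the numerator as in (\ref{sdab}) with the bias term killed by Lemma \ref{leb} (which is exactly where $n=\Theta(T^{\theta})$, $0<\theta<2$, i.e.\ $\sqrt{n}/T\rightarrow 0$, is used) and the centered term handled by the CLT of Lemma \ref{lad}, and then combines the two limits. Your accounting of the bias order $\sqrt{n}/T=\Theta(T^{\theta/2-1})$ matches the bound $|\mu_{iT}^{\ast}|<K/\sqrt{T}$ established inside Lemma \ref{sle}, so there is nothing further to add.
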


\begin{remark}
Like the PMG estimator in 
\citeN{PesaranShinSmith1999}%
, the PB estimator will also work when variables are integrated of order 0
(the I(0) case), which is not pursued in this paper. In the I(0) case, the
PB\ estimator converges at the standard rate of $\sqrt{nT}$.
\end{remark}

To conduct inference, let%
\begin{equation}
\hat{\omega}_{x}^{2}=n^{-1}\sum_{i=1}^{n}\frac{\mathbf{x}_{i}^{\prime }%
\mathbf{M}_{i}\mathbf{x}_{i}}{T^{2}}\text{,}  \label{omx}
\end{equation}%
and $\hat{\omega}_{v}^{2}=n^{-1}\sum_{i=1}^{n}\left( \frac{\mathbf{x}%
_{i}^{\prime }\mathbf{M}_{i}\mathbf{\hat{v}}_{i}^{\ast }}{T}\right) ^{2}$,
where $\mathbf{\hat{v}}_{i}^{\ast }=\mathbf{M}_{i}\left( \mathbf{y}_{i}-\hat{%
\beta}\mathbf{x}_{i}\right) $, and $\mathbf{M}_{i}$ is defined by (\ref{mi}%
). Accordingly, we propose the following estimator of $\Omega $:%
\begin{equation}
\hat{\Omega}=\hat{\omega}_{x}^{-4}\hat{\omega}_{v}^{2}\text{.}  \label{ve}
\end{equation}

\subsection{Bias mitigation and bootstrapping critical values for robust
inference\label{BM}}

When $n$ is sufficiently large relative to $T$, specifically when $\sqrt{n}%
/T\rightarrow K>0$, then $\sqrt{n}T\left( \hat{\beta}-\beta \right) $ is no
longer asymptotically distributed with zero mean. The asymptotic bias is due
to the nonzero mean of $T^{-1}\mathbf{\tilde{x}}_{i}^{\prime }\mathbf{M}_{i}%
\mathbf{\tilde{v}}_{i}$, and it can be of some relevance for finite sample
performance, as the Monte Carlo evidence in Section \ref{MC} illustrates.
Monte Carlo evidence also reveals that the inference based on PB and other
existing estimators in the literature can suffer from serious size
distortions in finite samples. To deal with these problems, we consider
bootstrapping critical values using sieve wild bootstrap for more accurate
and more robust inference that allows for cross-sectional dependence of
errors. In addition, we also consider two bias-correction techniques - a
bootstrap one as well as the split-panel jackknife method. The same
bias-correction methods are also applied to the three other estimators,
namely PMG, PDOLS, and FMOLS, considered in the paper. In what follows we
focus on the PB\ estimator. A description of bias-corrections applied to the
other three estimators are given in Section \ref{Sbcp} of Appendix B.

\subsubsection{Bootstrap bias reduction\label{Sbbc}}

Once an estimate of the bias of $\hat{\beta}$ is available, denoted as $\hat{%
b}$, then the bias-corrected PB estimator is given by%
\begin{equation}
\tilde{\beta}=\hat{\beta}-\hat{b}\text{.}  \label{bs}
\end{equation}%
One possibility of estimating the bias in the literature is by bootstrap. We
adopt the following sieve wild bootstrap algorithm for generating simulated
data.

\begin{enumerate}
\item Given $\hat{\beta}$, estimate the remaining unknown coefficients in (%
\ref{y})-(\ref{x}) by least squares, and compute residuals denoted by $\hat{u%
}_{y,it},\hat{u}_{x,it}$.

\item For each $r=1,2,...,R$, generate new draws for $\hat{u}_{y,it}^{\left(
r\right) }=a_{t}^{\left( r\right) }\hat{u}_{y,it}$, and $\hat{u}%
_{x,it}^{\left( r\right) }=a_{t}^{\left( r\right) }\hat{u}_{x,it}$, where $%
a_{t}^{\left( r\right) }$ is randomly drawn from Rademacher distribution (%
\citeANP{Liu1988}, \citeyearNP{Liu1988}%
),%
\begin{equation*}
a_{t}^{\left( r\right) }=\left\{ 
\begin{array}{cc}
-1\text{,} & \text{with probability 1/2} \\ 
1\text{,} & \text{with probability 1/2}%
\end{array}%
\right. \text{.}
\end{equation*}%
Given the estimated parameters of (\ref{y})-(\ref{x}) from Step 1, and the
initial values $\left\{ y_{i1},x_{i1}\text{ for }i=1,2,...,n\right\} $
generate the simulated series $y_{it}^{\left( r\right) },x_{it}^{\left(
r\right) }$ for $t=2,3,...,T$ , and $i=1,2,...,n$, and the bootstrap
estimates $\hat{\beta}^{(r)}$ for $r=1,2,...,R$.
\end{enumerate}

Using simulated data with $R=10,000$, we compute an estimate of the bias $%
\hat{b}_{R}=\left[ R^{-1}\sum_{r=1}^{R}\hat{\beta}^{\left( r\right) }-\hat{%
\beta}\right] $. We then compute the $\alpha $ percent critical values using
the $1-\alpha $ percent quantile of $\left\{ \left\vert t^{\left( r\right)
}\right\vert \right\} _{r=1}^{R}$, where $t^{\left( r\right) }=\tilde{\beta}%
^{\left( r\right) }/se\left( \tilde{\beta}^{\left( r\right) }\right) $, $%
\tilde{\beta}^{\left( r\right) }=\hat{\beta}^{\left( r\right) }-\hat{b}$ is
the bias-corrected estimate of $\beta $ using the $r$-$th$ draw of the
simulated data, $se\left( \tilde{\beta}^{\left( r\right) }\right)
=T^{-1}n^{-1/2}\hat{\Omega}^{\left( r\right) }$ is the corresponding
standard error estimate, and $\hat{\Omega}^{\left( r\right) }$ is computed
in the same way as $\hat{\Omega}$ in (\ref{ve}) but using the simulated data.

\subsubsection{Jackknife bias reduction}

The split-panel jackknife bias correction method is given by%
\begin{equation}
\tilde{\beta}_{jk}=\tilde{\beta}_{jk}\left( \kappa \right) =\hat{\beta}%
-\kappa \left( \frac{\hat{\beta}_{a}+\hat{\beta}_{b}}{2}-\hat{\beta}\right) ,
\label{bjk}
\end{equation}%
where $\hat{\beta}$ is the full sample PB estimator, $\hat{\beta}_{a}$ and $%
\hat{\beta}_{b}$ are the first and the second half sub-sample PB estimators,
and $\kappa $ is a suitably chosen weighting parameter. In a stationary
setting, where the bias is of order $O\left( T^{-1}\right) $, $\kappa $ is
chosen to be one, so that $\frac{K}{T}-\kappa \cdot \left( \frac{K}{T/2}-%
\frac{K}{T}\right) =0,$ for any arbitrary choice of $K$. See, for example, 
\citeN{DhaeneJochmansy2015}
and 
\citeN{ChudikPesaranYang2018}%
.

In general, when the bias is of order $O\left( T^{-\epsilon }\right) $ for
some $\epsilon >0$, then $\kappa $ can be chosen to solve $\frac{K}{%
T^{\epsilon }}-\kappa \cdot \left( \frac{K}{\left( T/2\right) ^{\epsilon }}-%
\frac{K}{T^{\epsilon }}\right) =0$, which yields $\kappa =1/\left(
2^{\epsilon }-1\right) $. Under our setup with I(1) variables, we need to
correct $\hat{\beta}$ for its $O\left( T^{-2}\right) $ bias, namely $%
\epsilon =2$, which yields $\kappa =1/3$.

Inference using $\tilde{\beta}^{jk}$ can be conducted based on (\ref{ve})
but with $\hat{\omega}_{v}^{2}$ replaced by%
\begin{equation}
\tilde{\omega}_{v}^{2}=\hat{\omega}_{v}^{2}=\frac{1}{n}\sum_{i=1}^{n}\left( 
\frac{\left[ \left( 1+\kappa \right) \mathbf{x}_{i}^{\prime }\mathbf{M}%
_{i}-2\kappa \mathbf{x}_{ab,i}^{\prime }\mathbf{M}_{ab,i}\right] \mathbf{%
\tilde{v}}_{i}^{\ast }}{T}\right) ^{2}\text{,}  \label{omvj}
\end{equation}%
where $\mathbf{\tilde{v}}_{i}^{\ast }=\mathbf{M}_{i}\left( \mathbf{y}_{i}-%
\tilde{\beta}^{jk}\mathbf{x}_{i}\right) $, 
\begin{equation*}
\mathbf{x}_{ab,i}^{\prime }=\left( 
\begin{array}{c}
\mathbf{x}_{a,i}^{\prime } \\ 
\mathbf{x}_{b,i}^{\prime }%
\end{array}%
\right) \text{, }\mathbf{M}_{ab,i}=\left( 
\begin{array}{c}
\mathbf{M}_{a,i} \\ 
\mathbf{M}_{b,i}%
\end{array}%
\right) \text{,}
\end{equation*}%
$\mathbf{x}_{a,i}^{\prime }$ $\left( \mathbf{x}_{b,i}^{\prime }\right) $ and 
$\mathbf{M}_{a,i}$ ($\mathbf{M}_{b,i}$) are defined in the same way as $%
\mathbf{x}_{i}$, and $\mathbf{M}_{i}$ but using only the first (second) half
of the sample.

We compute bootstrapped critical values to conduct more accurate and robust
small sample inference. Specifically, the $\alpha $ percent critical value
is computed as the $1-\alpha $ percent quantile of $\left\{ \left\vert
t_{jk}^{\left( r\right) }\right\vert \right\} _{r=1}^{R}$, where $%
t_{jk}^{\left( r\right) }=\tilde{\beta}_{jk}^{\left( r\right) }/se\left( 
\tilde{\beta}_{jk}^{\left( r\right) }\right) $, $\tilde{\beta}_{jk}^{\left(
r\right) }$ is the jackknife estimate of $\beta $ using the $r$-$th$ draw of
the simulated data generated using the algorithm described in Subsection \ref%
{Sbbc}, $se\left( \tilde{\beta}_{jk}^{\left( r\right) }\right) $ is the
corresponding standard error estimate, namely $se\left( \tilde{\beta}%
_{jk}^{\left( r\right) }\right) =T^{-1}n^{-1/2}\hat{\Omega}_{jk}^{\left(
r\right) }$, $\hat{\Omega}_{jk}^{\left( r\right) }=\hat{\omega}_{x,\left(
r\right) }^{-4}\tilde{\omega}_{v,\left( r\right) }^{2}$, in which $\tilde{%
\omega}_{v,\left( r\right) }$ and $\hat{\omega}_{x,\left( r\right) }^{2}$
are computed using the simulated data, based on expressions (\ref{omvj}) and
(\ref{omx}), respectively.

\section{Monte Carlo Evidence\label{MC}}

\subsection{Design\label{MCD}}

The Data Generating Process (DGP) is given by (\ref{y})-(\ref{x}), for $%
i=1,2,...,n,$ $T=1,2,...,T$, with starting values satisfying Assumption \ref%
{As3} with $\mathbf{\mu }_{i}\sim IIDN\left( \mathbf{\tau }_{2},\mathbf{I}%
_{2}\right) $, and $c_{i}=$ $\alpha _{i}\mu _{i,1}-\alpha _{i}\beta \mu
_{i,2}$. We generate $\alpha _{i}\sim IIDU\left[ 0.2,0.3\right] $. We
consider two DGPs based on the cross-sectional dependence of errors. In the
cross-sectionally independent DGP, we generate $u_{y,it}=\sigma
_{y,i}e_{y,it}$, $u_{x,it}=\sigma _{x,i}e_{x,it}$, $\sigma _{y,i}^{2},\sigma
_{x,i}^{2}\sim IIDU\left[ 0.8,1.2\right] $,%
\begin{equation*}
\left( 
\begin{array}{c}
e_{y,it} \\ 
e_{x,it}%
\end{array}%
\right) \sim IIDN\left( \mathbf{0}_{2},\mathbf{\Sigma }_{e}\right) \text{, }%
\mathbf{\Sigma }_{e}\sim \left( 
\begin{array}{cc}
1 & \rho _{i} \\ 
\rho _{i} & 1%
\end{array}%
\right) \text{, and }\rho _{i}\sim IIDU\left[ 0.3,0.7\right] \text{.}
\end{equation*}%
In the DGP with cross-sectionally dependent errors, we generate $e_{y,it}$
to contain a factor structure including strong, semi-strong and weak factors:%
\begin{equation*}
e_{y,it}=\varkappa _{i}\left( \varepsilon _{y,it}+\sum_{\ell =1}^{m}\gamma
_{i\ell }f_{\ell t}\right) \text{, }
\end{equation*}%
where $\varepsilon _{y,it}\sim IIDN\left( 0,1\right) $, $f_{\ell ,t}\sim
IIDN\left( 0,1\right) $, $\gamma _{\ell }\sim IIDU\left[ 0,\gamma _{\max
,\ell }\right] $, for $\ell =1,2,...,m$. We choose $m=5$ factors and $\gamma
_{\max ,\ell }=2n^{\alpha _{\ell }-1}$ with $\alpha _{\ell
}=1,0.9,0.8,0.7,0.6$, for $\ell =1,2,...,5$, respectively. Scaling constant $%
\varkappa _{i}$ is set to ensure $E\left( e_{y,it}^{2}\right) =1$, namely $%
\varkappa _{i}=\left( 1+\sum_{\ell =1}^{m}\gamma _{i\ell }^{2}\right)
^{-1/2} $. We generate $e_{x,it}$ to ensure unit variance and $cov\left(
e_{y,it},e_{x,it}\right) =\rho _{i}$. Specifically, $e_{x,it}=\rho
_{i}e_{y,it}+\sqrt{1-\rho _{i}^{2}}\varepsilon _{x,it}$, $\varepsilon
_{x,it}\sim IIDN\left( 0,1\right) $. Both designs features heteroskedastic
(over $i$) and correlated (over $y$ \& $x$ equations) errors, namely $%
E\left( u_{y,it}^{2}\right) =\sigma _{y,i}^{2}$, $E\left(
u_{x,it}^{2}\right) =\sigma _{x,i}^{2}$, and $corr\left(
u_{y,it},u_{x,it}\right) =\rho _{i}$. We consider $n,T=20,30,40,50$ and
compute $R_{MC}=2000$ Monte Carlo replications.

\subsection{Bias, RMSE and inference}

We report bias, root mean square error (RMSE), size ($H_{0}:\beta =1$, 5\%
nominal level) and power ($H_{1}:\beta =0.9$, 5\% nominal level) findings
for the PB estimator $\hat{\beta}$ given by (\ref{pmg}), with variance
estimated using (\ref{ve}). Moreover, we also report findings for the two
bias corrected versions of PB estimator as described in Subsection \ref{BM}
with bootstrapped critical values for inference robust to error
cross-sectional dependence. We compare the performance of the PB estimator
with the PMG estimator by 
\citeN{PesaranShinSmith1999}%
, panel dynamic OLS (PDOLS) estimator by 
\citeN{MarkSul2003}%
, and the group-mean fully modified OLS (FMOLS) estimator by 
\citeANP{Pedroni1996} (\citeyearNP{Pedroni1996}, \citeyearNP{Pedroni2001ReStat})%
. Similarly to the PB estimator, we also consider jackknife and bootstrap
based bias-corrected versions of the PMG, PDOLS and FMOLS estimators with
cross-sectionally robust bootstrapped critical values, described in Appendix
B. We use $R_{b}=10,000$ bootstrap replications (within each MC replication)
for bootstrap bias correction and for computation of robust and more
accurate bootstrapped critical values.

\subsection{Findings}

Table 1 report the results for the original (without bias-correction)
estimators. PB estimator stands out as the most precise estimator in terms
of having the lowest RMSE values among the four estimators. The second best
is PMG estimator with RMSE values 1 to 21 percent larger compared with the
PB estimator, the third is PDOLS with RMSE values 23 to 66 percent larger
compared with PB, and the FMOLS comes last with RMSE\ values 95 to 180
percent larger compared with PB. In terms of the bias alone, the ordering of
the estimators is slightly different with PMG and PB switching their places.
For $T=20$, the bias of PMG estimator is -0.016 to -0.020, the bias of PB
estimator is in the range -0.034 to -0.037, the bias of the PDOLS estimator
is in the range -0.052 to -0.056 and the bias of the FMOLS estimator is in
the range -0.104 to -0.110. For such a small value of $T$, the bias is not
very large, and, as expected, it declines with an increase in $T$.

All four estimators suffer from varying degrees of size distortions. The
inference based on the PB estimator is the most accurate. Specifically, the
size distortions for the PB estimator are lowest among the four estimators -
with reported size in the range between 9.9 and 25.2 percent, exceeding the
chosen nominal value of 5 percent. Size distortions diminish with an
increase in $T$.\bigskip \bigskip

\begin{center}
\textbf{Table 1: }MC findings for the estimation of long-run coefficient $%
\beta $ in experiments with cross-sectionally independent errors.

Estimators without bias correction and inference conducted using standard
critical values.

\bigskip

\footnotesize%

\renewcommand{\arraystretch}{1.0}\setlength{\tabcolsep}{3pt}%
\scriptsize%
\begin{tabular}{rrrrrccrrrrccrrrrccrrrr}
\hline\hline
& \multicolumn{4}{c}{\textbf{Bias (}$\mathbf{\times }$\textbf{\ 100)}} &  & 
& \multicolumn{4}{c}{\textbf{RMSE (}$\mathbf{\times }$\textbf{\ 100)}} &  & 
& \multicolumn{4}{c}{\textbf{Size (5\% level)}} &  &  & \multicolumn{4}{c}{%
\textbf{Power (5\% level)}} \\ 
\cline{2-5}\cline{8-11}\cline{14-17}\cline{20-23}
$n\backslash T$ & \textbf{20} & \textbf{30} & \textbf{40} & \textbf{50} & 
\multicolumn{1}{r}{} & \multicolumn{1}{r}{} & \textbf{20} & \textbf{30} & 
\textbf{40} & \textbf{50} & \multicolumn{1}{r}{} & \multicolumn{1}{r}{} & 
\textbf{20} & \textbf{30} & \textbf{40} & \textbf{50} & \multicolumn{1}{r}{}
& \multicolumn{1}{r}{} & \textbf{20} & \textbf{30} & \textbf{40} & \textbf{50%
} \\ \cline{2-23}
& \multicolumn{22}{l}{PB} \\ \hline
\textbf{20} & -3.69 & -1.75 & -1.07 & -0.73 &  &  & 6.43 & 4.12 & 3.04 & 2.45
&  &  & 18.40 & 13.35 & 11.80 & 11.40 &  &  & 34.00 & 68.30 & 89.95 & 97.70
\\ 
\textbf{30} & -3.39 & -1.79 & -1.04 & -0.74 &  &  & 5.55 & 3.54 & 2.58 & 2.03
&  &  & 19.40 & 14.50 & 11.95 & 10.10 &  &  & 43.70 & 81.50 & 96.50 & 99.60
\\ 
\textbf{40} & -3.56 & -1.87 & -1.06 & -0.74 &  &  & 5.18 & 3.25 & 2.33 & 1.81
&  &  & 21.25 & 15.55 & 12.30 & 10.45 &  &  & 45.90 & 87.35 & 99.05 & 99.95
\\ 
\textbf{50} & -3.58 & -1.90 & -1.09 & -0.74 &  &  & 4.96 & 3.05 & 2.18 & 1.66
&  &  & 25.20 & 15.55 & 13.40 & 9.95 &  &  & 54.05 & 93.25 & 99.65 & 100.00
\\ \hline
& \multicolumn{22}{l}{PMG} \\ \hline
\textbf{20} & -1.97 & -0.89 & -0.51 & -0.32 &  &  & 7.77 & 4.79 & 3.40 & 2.57
&  &  & 39.45 & 28.15 & 21.40 & 17.85 &  &  & 63.40 & 82.20 & 93.75 & 98.95
\\ 
\textbf{30} & -1.56 & -0.97 & -0.41 & -0.33 &  &  & 6.32 & 3.99 & 2.78 & 2.08
&  &  & 41.10 & 28.45 & 22.50 & 16.55 &  &  & 71.20 & 89.60 & 98.10 & 99.85
\\ 
\textbf{40} & -1.64 & -0.86 & -0.44 & -0.31 &  &  & 5.71 & 3.44 & 2.47 & 1.85
&  &  & 43.10 & 29.25 & 23.05 & 18.10 &  &  & 77.25 & 95.15 & 99.60 & 100.00
\\ 
\textbf{50} & -1.70 & -0.93 & -0.48 & -0.31 &  &  & 5.23 & 3.10 & 2.26 & 1.67
&  &  & 42.25 & 28.60 & 23.75 & 16.85 &  &  & 81.00 & 97.00 & 99.80 & 100.00
\\ \cline{2-23}
& \multicolumn{22}{l}{PDOLS} \\ \hline
\textbf{20} & -5.60 & -3.64 & -2.82 & -2.32 &  &  & 7.93 & 5.28 & 4.02 & 3.31
&  &  & 21.75 & 19.10 & 17.30 & 19.30 &  &  & 17.35 & 43.15 & 72.95 & 90.15
\\ 
\textbf{30} & -5.25 & -3.60 & -2.75 & -2.26 &  &  & 7.05 & 4.81 & 3.64 & 2.96
&  &  & 24.10 & 23.10 & 23.10 & 24.20 &  &  & 21.40 & 55.90 & 84.85 & 97.30
\\ 
\textbf{40} & -5.47 & -3.77 & -2.84 & -2.31 &  &  & 6.78 & 4.69 & 3.53 & 2.86
&  &  & 29.90 & 30.60 & 28.30 & 30.10 &  &  & 21.70 & 62.60 & 91.35 & 99.10
\\ 
\textbf{50} & -5.46 & -3.78 & -2.88 & -2.30 &  &  & 6.57 & 4.52 & 3.45 & 2.75
&  &  & 35.10 & 34.30 & 34.40 & 36.45 &  &  & 25.25 & 72.40 & 95.85 & 99.85
\\ 
& \multicolumn{22}{l}{FMOLS} \\ \hline
\textbf{20} & -11.01 & -7.16 & -5.45 & -4.25 &  &  & 12.56 & 8.44 & 6.55 & 
5.18 &  &  & 89.25 & 78.25 & 69.90 & 64.15 &  &  & 45.00 & 56.60 & 79.05 & 
92.60 \\ 
\textbf{30} & -10.44 & -7.06 & -5.30 & -4.17 &  &  & 11.58 & 7.99 & 6.09 & 
4.83 &  &  & 93.80 & 86.15 & 77.90 & 71.95 &  &  & 44.90 & 63.20 & 88.75 & 
98.20 \\ 
\textbf{40} & -10.78 & -7.31 & -5.50 & -4.26 &  &  & 11.59 & 8.00 & 6.08 & 
4.77 &  &  & 97.45 & 92.95 & 85.65 & 82.20 &  &  & 44.60 & 67.85 & 92.70 & 
99.20 \\ 
\textbf{50} & -10.76 & -7.35 & -5.50 & -4.22 &  &  & 11.44 & 7.91 & 5.98 & 
4.65 &  &  & 98.70 & 96.25 & 91.35 & 86.85 &  &  & 46.10 & 73.70 & 95.35 & 
99.85 \\ \hline\hline
\end{tabular}%
\vspace{-0.2in}
\end{center}

\begin{flushleft}
\footnotesize%
\singlespacing%
Notes: DGP is given by $\Delta y_{it}=c_{i}-\alpha _{i}\left(
y_{i,t-1}-\beta x_{i,t-1}\right) +u_{y,it}$ and $\Delta x_{it}=u_{x,it}$,
for $i=1,2,...,n,$ $T=1,2,...,T$, with $\beta =1$ and $\alpha _{i}\sim IIDU%
\left[ 0.2,0.3\right] $. Errors $u_{y,it}$, $u_{x,it}$ are cross-sectionally
independent, heteroskedastic over $i$, and correlated over $y$ \& $x$
equations. See Section \ref{MCD} for complete description of the DGP. The
pooled Bewley estimator is given by (\ref{pmg}), with variance estimated
using (\ref{ve}). PMG is the Pooled Mean Group estimator proposed by 
\citeN{PesaranShinSmith1999}%
. PDOLS is panel dynamic OLS estimator by 
\citeN{MarkSul2003}%
. FMOLS is the group-mean fully modified OLS estimator by 
\citeANP{Pedroni1996} (\citeyearNP{Pedroni1996}, \citeyearNP{Pedroni2001ReStat})%
. The size and power findings are computed using 5\% nominal level and the
reported power is the rejection frequency for testing the hypothesis $\beta
=0.9$. 
\normalsize%
\pagebreak
\end{flushleft}

\begin{center}
\textbf{Table 2: }MC findings for the estimation of long-run coefficient $%
\beta $ in experiments with cross-sectionally independent errors.

Bias corrected estimators and inference conducted using bootstrapped
critical values.

\bigskip

\footnotesize%

\renewcommand{\arraystretch}{0.75}\setlength{\tabcolsep}{3pt}%
\scriptsize%
\begin{tabular}{rrrrrccrrrrccrrrrccrrrr}
\hline\hline
& \multicolumn{4}{c}{\textbf{Bias (}$\mathbf{\times }$\textbf{\ 100)}} &  & 
& \multicolumn{4}{c}{\textbf{RMSE (}$\mathbf{\times }$\textbf{\ 100)}} &  & 
& \multicolumn{4}{c}{\textbf{Size (5\% level)}} &  &  & \multicolumn{4}{c}{%
\textbf{Power (5\% level)}} \\ 
\cline{2-5}\cline{8-11}\cline{14-17}\cline{20-23}
$n\backslash T$ & \textbf{20} & \textbf{30} & \textbf{40} & \textbf{50} & 
\multicolumn{1}{r}{} & \multicolumn{1}{r}{} & \textbf{20} & \textbf{30} & 
\textbf{40} & \textbf{50} & \multicolumn{1}{r}{} & \multicolumn{1}{r}{} & 
\textbf{20} & \textbf{30} & \textbf{40} & \textbf{50} & \multicolumn{1}{r}{}
& \multicolumn{1}{r}{} & \textbf{20} & \textbf{30} & \textbf{40} & \textbf{50%
} \\ \hline
& \multicolumn{22}{l}{\textbf{Jackknife bias-corrected estimators}} \\ \hline
& \multicolumn{22}{l}{PB} \\ \hline
\textbf{20} & -1.60 & -0.52 & -0.25 & -0.14 &  &  & 6.27 & 4.18 & 3.10 & 2.52
&  &  & 5.95 & 5.15 & 5.40 & 5.30 &  &  & 25.10 & 56.75 & 82.95 & 94.30 \\ 
\textbf{30} & -1.34 & -0.55 & -0.25 & -0.18 &  &  & 5.24 & 3.45 & 2.58 & 2.06
&  &  & 6.70 & 5.05 & 5.10 & 5.00 &  &  & 37.45 & 73.70 & 93.10 & 99.05 \\ 
\textbf{40} & -1.45 & -0.60 & -0.24 & -0.15 &  &  & 4.59 & 3.03 & 2.29 & 1.81
&  &  & 5.85 & 5.05 & 4.50 & 4.90 &  &  & 42.80 & 84.10 & 98.15 & 99.85 \\ 
\textbf{50} & -1.53 & -0.62 & -0.26 & -0.15 &  &  & 4.23 & 2.72 & 2.07 & 1.61
&  &  & 6.20 & 4.60 & 5.65 & 4.25 &  &  & 51.10 & 90.80 & 99.55 & 100.00 \\ 
\hline
& \multicolumn{22}{l}{PMG} \\ \hline
\textbf{20} & -0.55 & -0.20 & -0.05 & 0.01 &  &  & 9.26 & 5.59 & 3.78 & 2.86
&  &  & 14.90 & 11.05 & 9.10 & 7.20 &  &  & 33.45 & 57.15 & 83.05 & 95.60 \\ 
\textbf{30} & -0.10 & -0.30 & 0.02 & -0.04 &  &  & 7.35 & 4.48 & 3.09 & 2.28
&  &  & 14.00 & 10.95 & 8.50 & 7.35 &  &  & 45.60 & 72.05 & 93.10 & 99.30 \\ 
\textbf{40} & -0.27 & -0.13 & -0.01 & -0.01 &  &  & 6.73 & 3.83 & 2.72 & 2.06
&  &  & 15.40 & 9.30 & 8.90 & 8.05 &  &  & 48.90 & 82.95 & 97.10 & 99.75 \\ 
\textbf{50} & -0.42 & -0.21 & -0.06 & 0.00 &  &  & 6.13 & 3.43 & 2.47 & 1.83
&  &  & 16.45 & 9.65 & 9.40 & 7.55 &  &  & 57.55 & 88.55 & 98.85 & 100.00 \\ 
\cline{2-23}
& \multicolumn{22}{l}{PDOLS} \\ \hline
\textbf{20} & -4.22 & -2.56 & -1.97 & -1.59 &  &  & 8.00 & 5.05 & 3.77 & 3.04
&  &  & 8.40 & 5.80 & 5.40 & 4.75 &  &  & 8.70 & 27.50 & 55.25 & 77.90 \\ 
\textbf{30} & -3.91 & -2.55 & -1.92 & -1.56 &  &  & 6.89 & 4.42 & 3.30 & 2.64
&  &  & 8.70 & 5.75 & 5.95 & 4.70 &  &  & 10.15 & 37.30 & 69.75 & 91.00 \\ 
\textbf{40} & -4.09 & -2.69 & -1.98 & -1.60 &  &  & 6.35 & 4.17 & 3.08 & 2.47
&  &  & 7.70 & 5.80 & 4.90 & 4.05 &  &  & 10.05 & 41.10 & 75.20 & 94.95 \\ 
\textbf{50} & -4.11 & -2.70 & -2.02 & -1.58 &  &  & 6.01 & 3.91 & 2.93 & 2.30
&  &  & 7.90 & 5.90 & 4.55 & 3.55 &  &  & 11.30 & 45.75 & 83.75 & 97.60 \\ 
& \multicolumn{22}{l}{FMOLS} \\ \hline
\textbf{20} & -8.70 & -5.07 & -3.70 & -2.76 &  &  & 11.19 & 7.19 & 5.52 & 
4.33 &  &  & 10.85 & 5.95 & 4.55 & 4.00 &  &  & 1.05 & 1.40 & 8.10 & 25.75
\\ 
\textbf{30} & -8.17 & -5.02 & -3.60 & -2.73 &  &  & 10.02 & 6.60 & 4.94 & 
3.86 &  &  & 9.95 & 5.00 & 4.65 & 3.00 &  &  & 0.50 & 1.65 & 10.20 & 35.20
\\ 
\textbf{40} & -8.49 & -5.22 & -3.78 & -2.78 &  &  & 9.84 & 6.40 & 4.79 & 3.68
&  &  & 9.20 & 5.00 & 3.45 & 3.25 &  &  & 0.15 & 0.75 & 8.85 & 38.70 \\ 
\textbf{50} & -8.50 & -5.28 & -3.78 & -2.74 &  &  & 9.61 & 6.26 & 4.62 & 3.50
&  &  & 10.65 & 5.00 & 3.75 & 2.85 &  &  & 0.10 & 0.95 & 10.55 & 45.75 \\ 
\hline
& \multicolumn{22}{l}{\textbf{Bootstrap bias-corrected estimators}} \\ \hline
& \multicolumn{22}{l}{PB} \\ \hline
\textbf{20} & -1.28 & -0.33 & -0.16 & -0.11 &  &  & 5.87 & 3.93 & 2.93 & 2.39
&  &  & 7.25 & 6.20 & 6.05 & 5.75 &  &  & 36.20 & 67.60 & 88.30 & 96.95 \\ 
\textbf{30} & -0.98 & -0.39 & -0.15 & -0.13 &  &  & 4.90 & 3.24 & 2.43 & 1.93
&  &  & 7.75 & 6.35 & 6.35 & 5.80 &  &  & 52.20 & 84.00 & 96.90 & 99.55 \\ 
\textbf{40} & -1.07 & -0.43 & -0.13 & -0.10 &  &  & 4.24 & 2.83 & 2.15 & 1.70
&  &  & 7.00 & 6.55 & 5.95 & 5.10 &  &  & 59.75 & 91.35 & 99.45 & 99.95 \\ 
\textbf{50} & -1.09 & -0.44 & -0.15 & -0.10 &  &  & 3.89 & 2.56 & 1.96 & 1.52
&  &  & 8.95 & 6.25 & 6.45 & 5.55 &  &  & 69.40 & 95.50 & 99.90 & 100.00 \\ 
\hline
& \multicolumn{22}{l}{PMG} \\ \hline
\textbf{20} & -1.28 & -0.44 & -0.21 & -0.11 &  &  & 7.88 & 4.83 & 3.41 & 2.57
&  &  & 14.10 & 10.45 & 7.90 & 6.80 &  &  & 35.25 & 63.40 & 86.70 & 97.30 \\ 
\textbf{30} & -0.88 & -0.55 & -0.12 & -0.13 &  &  & 6.40 & 3.99 & 2.79 & 2.08
&  &  & 13.10 & 10.75 & 8.25 & 6.85 &  &  & 47.45 & 77.75 & 95.85 & 99.65 \\ 
\textbf{40} & -0.96 & -0.44 & -0.14 & -0.11 &  &  & 5.73 & 3.42 & 2.47 & 1.85
&  &  & 14.95 & 9.05 & 7.80 & 6.90 &  &  & 53.55 & 87.60 & 98.40 & 99.85 \\ 
\textbf{50} & -1.02 & -0.51 & -0.19 & -0.10 &  &  & 5.20 & 3.06 & 2.24 & 1.66
&  &  & 15.15 & 9.85 & 9.05 & 6.95 &  &  & 60.90 & 91.65 & 99.30 & 100.00 \\ 
\hline
& \multicolumn{22}{l}{PDOLS} \\ \hline
\textbf{20} & -2.19 & -0.90 & -0.59 & -0.42 &  &  & 6.75 & 4.31 & 3.12 & 2.55
&  &  & 10.35 & 7.80 & 7.25 & 7.25 &  &  & 28.70 & 64.45 & 86.65 & 95.85 \\ 
\textbf{30} & -1.89 & -0.96 & -0.59 & -0.43 &  &  & 5.70 & 3.62 & 2.63 & 2.08
&  &  & 10.75 & 8.90 & 7.70 & 7.15 &  &  & 39.90 & 77.10 & 95.35 & 99.30 \\ 
\textbf{40} & -2.00 & -1.04 & -0.59 & -0.41 &  &  & 4.98 & 3.24 & 2.34 & 1.84
&  &  & 10.05 & 8.65 & 7.05 & 7.35 &  &  & 46.40 & 85.75 & 98.65 & 99.95 \\ 
\textbf{50} & -2.00 & -1.05 & -0.64 & -0.40 &  &  & 4.61 & 2.93 & 2.14 & 1.65
&  &  & 10.70 & 9.40 & 8.65 & 6.80 &  &  & 52.85 & 91.55 & 99.50 & 100.00 \\ 
\hline
& \multicolumn{22}{l}{FMOLS} \\ \hline
\textbf{20} & -4.59 & -1.97 & -1.30 & -0.80 &  &  & 8.84 & 5.62 & 4.28 & 3.40
&  &  & 17.10 & 10.90 & 9.20 & 7.40 &  &  & 20.20 & 41.10 & 63.20 & 81.70 \\ 
\textbf{30} & -4.19 & -2.04 & -1.28 & -0.84 &  &  & 7.54 & 4.86 & 3.61 & 2.83
&  &  & 18.50 & 10.90 & 9.15 & 7.70 &  &  & 24.85 & 51.60 & 78.60 & 93.15 \\ 
\textbf{40} & -4.36 & -2.14 & -1.34 & -0.81 &  &  & 6.89 & 4.36 & 3.22 & 2.51
&  &  & 20.10 & 12.05 & 9.80 & 7.75 &  &  & 26.10 & 60.40 & 86.25 & 96.80 \\ 
\textbf{50} & -4.41 & -2.22 & -1.35 & -0.78 &  &  & 6.56 & 4.10 & 2.98 & 2.29
&  &  & 23.80 & 14.35 & 10.50 & 8.90 &  &  & 30.45 & 69.45 & 92.05 & 99.20
\\ \hline\hline
\end{tabular}%
\vspace{-0.2in}
\end{center}

\begin{flushleft}
\footnotesize%
\singlespacing%
Notes: See the notes to Table 1. Bias-corrected versions of the PB estimator
are described in Subsection \ref{BM}. Bias-corrected versions of the PMG,
PDOLS and FMOLS estimator are described in Appendix B. Inference is
conducted using bootstrapped critical values.%
\normalsize%
\pagebreak
\end{flushleft}

We consider next the bias-corrected versions of the four estimators with
inference carried out using robust bootstrap critical values. Upper panel of
Table 2 reports findings for estimators corrected for bias using the
jackknife procedure, and the bottom panel reports on bootstrap bias
corrected estimators. Bias correction did not change the overall ranking of
estimators -- PB continues to be the most precise (lowest RMSE). Both bias
correction approaches are quite effective in reducing the bias. The bias of
PB\ and PMG estimators for any of the two bias corrections are very low. In
addition to reducing the bias, in many cases the bias-correction also
resulted in reduced RMSE values. In the case of the PB estimator, using
bootstrap bias correction resulted in improved RMSE performance for all
choices of $n,T$ - by about 2 to 22 percent. Results in Table 2 also show
notable improvement to inference comes from using bootstrapped critical
values - with PB having virtually no size distortions and size distortions
of the remaining estimators are relatively minor.

Last but not least, we consider the DGP with cross-sectionally correlated
errors. The corresponding results, reported in Tables B1 and B2 in Appendix
B, reveal the same ranking of the four estimators, and, importantly, the
bootstrapped critical values continue to deliver correct size, despite the
error cross-sectional dependence.

The Monte Carlo results show that PB\ estimator can perform better (in terms
of overall precision as measured by RMSE, and in terms of accuracy of
inference) than existing estimators (PMG, PDOLS, and FMOLS) in finite sample
sizes of interest, whether or not bias correction is considered. Of' course,
our results do not imply that PB estimator will always be better, but that
it can be a useful addition to the existing literature as a complement to
PMG, PDOLS, and FMOLS estimators. Bias corrections and bootstrapping
critical values are helpful for all four estimators, resulting not only in
reduced bias, but sometimes also in better RMSE. In all cases, they result
in more accurate inference in our experiments.

\section{Empirical Application\label{EA}}

This section revisits consumption function empirical application undertaken
by 
\citeN{PesaranShinSmith1999}%
, hereafter PSS. The long-run consumption function is assumed to be given by%
\begin{equation*}
c_{it}=d_{i}+\beta _{1}y_{it}^{d}+\beta _{2}\pi _{it}+\vartheta _{it}\text{,}
\end{equation*}%
for country $i=1,2,...,n$, where $c_{it}$ is the logarithm of real
consumption per capita, $y_{it}^{d}$ is the logarithm of real per capita
disposable income, $\pi _{it}$ is the rate of inflation, and $\vartheta
_{it} $ is an $I\left( 0\right) $ process. We take the dataset from PSS,
which consists of $n=24$ countries and a slightly unbalanced time period
covering 1960-1993. PSS estimate $\beta _{1}$ and $\beta _{2}$ using an
ARDL(1,1,1) specification, which can be written as error-correcting panel
regressions%
\begin{equation}
\Delta c_{it}=-\alpha _{i}\left( c_{i,t-1}-d_{i}-\beta
_{1}y_{i,t-1}^{d}-\beta _{2}\pi _{i,t-1}\right) +\delta _{i1}\Delta
y_{it}^{d}+\delta _{i2}\Delta \pi _{it}+v_{it}\text{,}  \label{cf}
\end{equation}%
for $i=1,2,...,n$, where all coefficients, except the long-run coefficients $%
\beta _{1}$ and $\beta _{2}$ are country-specific.

Table 3 presents alternative estimates of the long-run coefficients. The
upper panel presents findings for estimators without bias correction and
standard confidence intervals. The middle and lower panels present jackknife
and bootstrap\textbf{\ }bias-corrected estimates with confidence intervals
based on bootstrapped critical values. Results differ widely across
different approaches to estimation and inference. Depending on which bias
correction approach is conducted, the PB\ estimates of the long-run
coefficient on real income ($\beta _{1}$) is estimated to be 0.921 or 0.926,
and the long-run coefficient on the inflation variable ($\beta _{2}$) is
estimated to be -0.120 or -0.125. The null hypothesis that the coefficient
on $y_{it}^{d}$ is unity cannot be rejected at the 5 percent nominal level,
nor is the hypothesis that the long run coefficient on inflation is zero.
From an economic perspective, unit long-run real income elasticity and no
long-run effects of inflation on consumption seem both plausible - the
former hypothesis is in line with balanced growth path models, and the
latter in line with monetary policy neutrality in the long-run. A different
conclusion would be reached according to PMG estimates - namely both the
unit coefficients on the real income variable and zero coefficient on
inflation would be rejected at the 5 percent nominal level. The results
based on the PDOLS are in line with the PB estimates and do not reject unit
real income and zero inflation long run coefficients. FMOLS estimates of $%
\beta _{1}$ are larger than the other estimates, but the unit coefficient on
the income variable still cannot be rejected. The FMOLS estimates of $\beta
_{2}$ are also quite large. The choice of estimation method clearly matters
in this empirical illustration.\pagebreak

\begin{center}
\textbf{Table 3: Estimated consumption function coefficients for OECD
countries}\bigskip

\small%
\renewcommand{\arraystretch}{0.75}%
\begin{tabular}{lllll}
\hline\hline
& $\beta _{1}$: Income & 95\% Conf. Int. & $\beta _{2}$: Inflation & 95\%
Conf. Int. \\ \hline
& \multicolumn{4}{l}{Estimator without bias correction} \\ \hline
PB & .912 & [.845,.980] & -.134 & [-.260,-.008] \\ 
PMG & .904 & [.889,.919] & -.466 & [-.566,-.365] \\ 
PDOLS & .923 & [.798,1.047] & -.187 & [-.407,.033] \\ 
FMOLS & .951 & [.942,.959] & -.336 & [-.408,-.265] \\ \hline
& \multicolumn{4}{l}{Jackknife bias-corrected estimators} \\ \hline
PB & .926 & [.835,1.017] & -.120 & [-.345,.105] \\ 
PMG & .915 & [.880,.949] & -.403 & [-616.,-.190] \\ 
PDOLS & .940 & [.737,1.143] & -.184 & [-.530,.161] \\ 
FMOLS & .983 & [.912,1.053] & -.397 & [-1.370,.576] \\ \hline
& \multicolumn{4}{l}{Bootstrap bias-corrected estimators} \\ \hline
PB & .921 & [.830,1.012] & -.125 & [-.314,.065] \\ 
PMG & .905 & [.875,.936] & -.477 & [-.657,-.297] \\ 
PDOLS & .932 & [.746,1.118] & -.183 & [-.499,.133] \\ 
FMOLS & .985 & [.941,1.028] & -.438 & [-1.047,.171] \\ \hline\hline
\end{tabular}%
\vspace{-0.4cm}
\end{center}

\begin{flushleft}
\singlespacing%
\scriptsize%

Notes: This table revisits empirical application in Table 1 of 
\citeN{PesaranShinSmith1999}%
, reporting estimates of long-run income elasticity ($\beta _{1}$) and
inflation effect ($\beta _{2}$) coefficients and their 95\% confidence
intervals in the ARDL(1,1,1) consumption functions (\ref{cf}) for OECD
countries using the dataset from 
\citeN{PesaranShinSmith1999}%
. PB stands for pooled Bewley estimator developed in this paper. PMG is the
Pooled Mean Group estimator proposed by 
\citeN{PesaranShinSmith1999}%
. PDOLS is panel dynamic OLS estimator by 
\citeN{MarkSul2003}%
. FMOLS is the group-mean fully modified OLS estimator by 
\citeANP{Pedroni1996} (\citeyearNP{Pedroni1996}, \citeyearNP{Pedroni2001ReStat})%
. Description of bias correction methods is provided in Subsection \ref{BM}\
for PB estimator and in Appendix B for PMG, PDOLS and FMOLS estimators.
Inference in the case of original estimators uncorrected for bias is
conducted using the standard asymptotic critical values, and it is valid
only when errors are not cross-sectionally dependent. Inference in the case
of bias-corrected estimators is conducted using bootstrapped critical values
following 
\citeN{ChudikPesaranSmith2022}%
, and it is robust to cross-section dependence of errors.
\end{flushleft}

\doublespacing%
\normalsize%

\section{Conclusion\label{CON}}

This paper proposes the pooled Bewley (PB)\ estimator of long-run
relationships in heterogeneous dynamic panels. Relative to existing
estimators in the literature -- namely PMG, PDOLS and FMOLS -- Monte Carlo
evidence reveals that PB can perform well in small samples. While we
developed the asymptotic theory of PB estimator under a similar setting to
the PMG estimator, notably we assumed cross-sectionally independent errors,
we have also shown the benefit of bootstrapping critical values for
inference when errors are cross-sectionally correlated for all four
estimators.

While the asymptotic distribution of the other estimators are derived for
the case where $n$ is fixed and $T\rightarrow \infty ,$ we derive the joint $%
\left( n,T\right) $ asymptotic distribution of the PB estimator, when both $%
n $ and $T$ diverge to infinity jointly such that $n=\Theta \left( T^{\theta
}\right) $, for $0<\theta <2.$ This covers a broader range of empirical
applications where both $n$ and $T$ are large. The small sample and
asymptotic results suggest that the PB estimator is a useful addition to
estimators for long run effects in single equation dynamic heterogeneous
panels, where the direction of long-run causality is known.\bigskip

\pagebreak 
\small%

\bibliographystyle{chicago}
\bibliography{ref_CPS_PB}

\pagebreak

\noindent 
\appendix%

\numberwithin{equation}{section}%

\numberwithin{lemma}{section}%

\onehalfspacing%

\LARGE%

\begin{flushleft}
\textbf{Appendices}
\end{flushleft}

\small%

\section{Mathematical derivations}

This appendix is organized in four sections. Section \ref{A1} introduces
some notations and definitions. Section \ref{A2} presents lemmas and proofs
needed for the proof of Theorem \ref{T1} presented in the body of the paper.

\subsection{Notations and definitions\label{A1}}

Let $\mathbf{z}_{it}=\left( y_{it},x_{it}\right) ^{\prime }$, and define $%
\mathbf{C}_{i}\left( L\right) =\sum_{\ell =0}^{\infty }\mathbf{C}_{i\ell
}L^{\ell }$ and $\mathbf{C}_{i}^{\ast }\left( L\right) =\sum_{\ell
=0}^{\infty }\mathbf{C}_{i\ell }^{\ast }L^{\ell }$, where%
\begin{eqnarray*}
\mathbf{C}_{i0} &=&\mathbf{I}_{2}\text{,} \\
\mathbf{C}_{i\ell } &=&\left( \mathbf{\Phi }_{i}-\mathbf{I}_{2}\right) 
\mathbf{\Phi }_{i}^{\ell -1}\text{, }\ell =1,2,....\text{,}
\end{eqnarray*}%
\begin{equation}
\mathbf{\Phi }_{i}=\left( 
\begin{array}{cc}
1-\alpha _{i} & \alpha _{i}\beta \\ 
0 & 1%
\end{array}%
\right) \text{,}
\end{equation}%
\begin{equation*}
\mathbf{C}_{i}(1)=\mathbf{C}_{i0}+\mathbf{C}_{i1}+.....=\lim_{\ell
\rightarrow \infty }\mathbf{\Phi }_{i}^{\ell }=\left( 
\begin{array}{cc}
0 & \beta \\ 
0 & 1%
\end{array}%
\right) \text{,}
\end{equation*}%
and 
\begin{eqnarray*}
\mathbf{C}_{i0}^{\ast } &=&\mathbf{C}_{i0}-\mathbf{C}_{i}(1)=\left( 
\begin{array}{cc}
1 & -\beta \\ 
0 & 0%
\end{array}%
\right) \text{,} \\
\mathbf{C}_{i\ell }^{\ast } &=&\mathbf{C}_{i,\ell -1}^{\ast }+\mathbf{C}%
_{i\ell }=\left( 
\begin{array}{cc}
\left( 1-\alpha _{i}\right) ^{\ell } & -\left( 1-\alpha _{i}\right) ^{\ell
}\beta \\ 
0 & 0%
\end{array}%
\right) \text{, for }\ell =1,2,...\text{.}
\end{eqnarray*}

Model (\ref{y})-(\ref{x}) can be equivalently written as%
\begin{equation*}
\mathbf{\Phi }_{i}\left( L\right) \mathbf{z}_{it}=\mathbf{c}_{i}+\mathbf{u}%
_{it}\text{,}
\end{equation*}%
for $i=1,2,...,n$ and $t=1,2,...,T$, where $\mathbf{c}_{i}=\left(
c_{i},0\right) ^{\prime }$, 
\begin{equation}
\mathbf{\Phi }_{i}\left( L\right) =\mathbf{I}_{2}-\mathbf{\Phi }_{i}L\text{,}
\label{Phil}
\end{equation}%
and $\mathbf{I}_{2}$ is a $2\times 2$ identity matrix. The lag polynomial $%
\mathbf{\Phi }_{i}\left( L\right) $ can be re-written in the following
(error correcting) form%
\begin{equation}
\mathbf{\Phi }_{i}\left( L\right) =-\Pi _{i}L+\left( 1-L\right) \mathbf{I}%
_{2}\text{,}
\end{equation}%
where 
\begin{equation}
\Pi _{i}=-\left( \mathbf{I}_{2}-\mathbf{\Phi }_{i}\right) =\left( 
\begin{array}{cc}
-\alpha _{i} & \alpha _{i}\beta \\ 
0 & 0%
\end{array}%
\right) \text{.}
\end{equation}%
The VAR model (\ref{1}) can be also rewritten in the following form%
\begin{equation}
\mathbf{\Phi }_{i}\left( L\right) \left( \mathbf{z}_{it}-\mathbf{\mu }%
_{i}\right) =\mathbf{u}_{it}\text{,}  \label{1}
\end{equation}
where $\mathbf{c}_{i}=-\Pi _{i}\mathbf{\mu }_{i}=\left( c_{i},0\right)
^{\prime }$, namely $c_{i}=\alpha _{i}\mu _{i,1}-\alpha _{i}\beta \mu _{i,2}$%
.

Using Granger representation theorem, the process $\mathbf{z}_{it}$ under
the assumptions \ref{As1}-\ref{As3} has representation%
\begin{eqnarray}
y_{it} &=&\mu _{yi}+\beta s_{it}+\sum_{\ell =0}^{\infty }\left( 1-\alpha
_{i}\right) ^{\ell }\left( u_{y,i,t-\ell }-\beta u_{x,i,t-\ell }\right) 
\text{,}  \label{ma_y} \\
x_{it} &=&\mu _{xi}+s_{it}\text{,}  \label{ma_x}
\end{eqnarray}%
where%
\begin{equation}
s_{it}=\sum_{\ell =1}^{t}u_{x,it}\text{,}  \label{s}
\end{equation}%
is the stochastic trend.

\subsection{Lemmas: Statements and proofs\label{A2}}

\begin{lemma}
\label{sla}Suppose Assumptions \ref{As2} and \ref{As3} hold, and consider $%
\mathbf{\tilde{x}}_{i}=\left( \tilde{x}_{i,1},\tilde{x}_{i,2},...,\tilde{x}%
_{i,T}\right) ^{\prime }$, where $\tilde{x}_{it}=x_{it}-\bar{x}_{i}$, $%
x_{it}=\sum_{s=1}^{t}u_{x,it}$, and $\bar{x}_{i}=T^{-1}\sum_{t=1}^{T}x_{it}$%
. Then%
\begin{equation}
n^{-1}\sum_{i=1}^{n}\frac{\mathbf{\tilde{x}}_{i}^{\prime }\mathbf{\tilde{x}}%
_{i}}{T^{2}}\rightarrow _{p}\omega _{x}^{2}=\frac{\sigma _{x}^{2}}{6}\text{,
as }n,T\rightarrow \infty \text{,}  \label{sl1}
\end{equation}%
where $\sigma _{x}^{2}=\lim_{n\rightarrow \infty }n^{-1}\sum_{i=1}^{n}\sigma
_{xi}^{2}$.
\end{lemma}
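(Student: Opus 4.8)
The plan is to show that $n^{-1}\sum_{i=1}^{n}T^{-2}\mathbf{\tilde{x}}_{i}^{\prime}\mathbf{\tilde{x}}_{i}$ converges in probability to $\sigma_{x}^{2}/6$ by first computing the limit of $E\left(T^{-2}\mathbf{\tilde{x}}_{i}^{\prime}\mathbf{\tilde{x}}_{i}\right)$ for a single unit $i$, and then invoking a law of large numbers across $i$. Since $\mathbf{\tilde{x}}_{i}=\mathbf{M}_{\tau}\mathbf{x}_{i}$ is the demeaned partial-sum (random-walk) sequence $\tilde{x}_{it}=x_{it}-\bar{x}_{i}$ with $x_{it}=\sum_{s=1}^{t}u_{x,is}$, the object $T^{-2}\mathbf{\tilde{x}}_{i}^{\prime}\mathbf{\tilde{x}}_{i}=T^{-2}\sum_{t=1}^{T}\tilde{x}_{it}^{2}$ is the standard demeaned sample second moment of a random walk. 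First I would exploit the functional central limit theorem: under Assumption \ref{As2}, $T^{-1/2}x_{i,\lfloor Tr\rfloor}\Rightarrow \sigma_{xi}W_{i}(r)$ for a standard Brownian motion $W_{i}$, so that by the continuous mapping theorem
\begin{equation*}
\frac{1}{T^{2}}\sum_{t=1}^{T}\tilde{x}_{it}^{2}\rightarrow_{d}\sigma_{xi}^{2}\int_{0}^{1}\left(W_{i}(r)-\int_{0}^{1}W_{i}(s)\,ds\right)^{2}dr.
\end{equation*}
Taking expectations and using the well-known moment $E\int_{0}^{1}\left(W-\bar{W}\right)^{2}=1/6$ for demeaned Brownian motion gives $E\left(T^{-2}\mathbf{\tilde{x}}_{i}^{\prime}\mathbf{\tilde{x}}_{i}\right)\rightarrow \sigma_{xi}^{2}/6$ as $T\rightarrow\infty$. (Equivalently, one can compute $E(\tilde{x}_{it}^{2})$ directly from $\mathbf{M}_{\tau}$ and the covariance structure $E(x_{it}x_{is})=\sigma_{xi}^{2}\min(t,s)$, summing the resulting polynomial in $t$ over $t=1,\dots,T$; this is a routine but tedious computation that yields the same $\sigma_{xi}^{2}T^{2}/6+O(T)$ leading term.)

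Next I would pass from the per-unit expectation to the pooled average. Writing $\xi_{iT}=T^{-2}\mathbf{\tilde{x}}_{i}^{\prime}\mathbf{\tilde{x}}_{i}$, the quantities $\{\xi_{iT}\}_{i=1}^{n}$ are independent across $i$ by Assumption \ref{As2} (the $u_{x,it}$ are independent across $i$). To apply a law of large numbers in the joint limit I need $E(\xi_{iT})\rightarrow\sigma_{xi}^{2}/6$ uniformly enough that $n^{-1}\sum_{i}E(\xi_{iT})\rightarrow \lim_{n}n^{-1}\sum_{i}\sigma_{xi}^{2}/6=\sigma_{x}^{2}/6$, together with a variance bound $\operatorname{Var}(\xi_{iT})<K$ uniformly in $i,T$ so that $n^{-1}\sum_{i}\left[\xi_{iT}-E(\xi_{iT})\right]\rightarrow_{p}0$. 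The uniform second-moment bound follows from $\sup_{i,t}E|u_{x,it}|^{8}<K$, which controls $E(\xi_{iT}^{2})$ uniformly; combining the mean convergence with Chebyshev's inequality then gives the stated result $n^{-1}\sum_{i}\xi_{iT}\rightarrow_{p}\sigma_{x}^{2}/6$.

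The main obstacle will be making the two limits interchange cleanly in the \emph{joint} $n,T\rightarrow\infty$ regime rather than sequentially. Specifically, the FCLT delivers convergence in distribution of $\xi_{iT}$ for fixed $i$ as $T\rightarrow\infty$, but I need uniform-in-$i$ control on both $E(\xi_{iT})-\sigma_{xi}^{2}/6$ and $\operatorname{Var}(\xi_{iT})$ so the cross-sectional averaging does not accumulate error as $n$ grows with $T$. The cleanest route is to avoid the FCLT entirely for the quantitative part and instead compute $E(\xi_{iT})$ and bound $\operatorname{Var}(\xi_{iT})$ by direct moment calculations using $E(x_{it}x_{is})=\sigma_{xi}^{2}\min(t,s)$ and the finite eighth moment of $u_{x,it}$; these calculations give explicit $O(T^{-1})$ remainders that are uniform in $i$ because all constants depend only on $\sup_{i}\sigma_{xi}^{2}$ and $\sup_{i,t}E|u_{x,it}|^{8}$, both finite by Assumption \ref{As2}. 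This uniformity is exactly what licenses the joint limit.
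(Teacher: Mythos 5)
Your proposal is correct and, in its final form, is essentially the paper's own proof: the paper likewise computes $E\left(\mathbf{\tilde{x}}_{i}^{\prime }\mathbf{\tilde{x}}_{i}/T^{2}\right)$ exactly from the covariance structure of the random walk (obtaining $\sigma _{xi}^{2}\varkappa _{T}$ with $\varkappa _{T}\rightarrow 1/6$ uniformly in $i$), and disposes of the deviation term via cross-sectional independence, a uniform bound on the second moment (the paper needs only bounded fourth moments of $u_{x,it}$, so your eighth-moment appeal is stronger than necessary), and Chebyshev. Your opening FCLT heuristic is dispensable --- as you yourself note, convergence in distribution does not by itself yield the uniform-in-$i$ convergence of expectations required for the joint $n,T\rightarrow \infty$ limit, and the direct moment calculation you settle on as the ``cleanest route'' is exactly what the paper does.
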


\begin{proof}
Recall that $\mathbf{M}_{\tau }=\mathbf{I}_{T}-T^{-1}\mathbf{\tau }_{T}%
\mathbf{\tau }_{T}^{\prime }$, where $\mathbf{I}_{T}$ is $T\times T$
identity matrix and $\mathbf{\tau }_{T}$ is $T\times 1$ vector of ones.
Since $\mathbf{\tilde{x}}_{i}=\mathbf{M}_{\tau }\mathbf{x}_{i}$, and $%
\mathbf{M}_{\tau }$ is symmetric and idempotent ($\mathbf{M}_{\tau }^{\prime
}\mathbf{M}_{\tau }=\mathbf{M}_{\tau }=\mathbf{M}_{\tau }^{\prime }$) we can
write $\mathbf{\tilde{x}}_{i}^{\prime }\mathbf{\tilde{x}}_{i}$ as $\mathbf{%
\tilde{x}}_{i}^{\prime }\mathbf{\tilde{x}}_{i}=\mathbf{x}_{i}^{\prime }%
\mathbf{M}_{\tau }^{\prime }\mathbf{M}_{\tau }\mathbf{x}_{i}=\mathbf{x}%
_{i}^{\prime }\mathbf{M}_{\tau }^{\prime }\mathbf{x}_{i}=\mathbf{\tilde{x}}%
_{i}^{\prime }\mathbf{x}_{i}$. Denote $S_{i,T}=\mathbf{\tilde{x}}%
_{i}^{\prime }\mathbf{x}_{i}/T^{2}$. We have%
\begin{equation}
n^{-1}\sum_{i=1}^{n}\frac{\mathbf{\tilde{x}}_{i}^{\prime }\mathbf{\tilde{x}}%
_{i}}{T^{2}}=n^{-1}\sum_{i=1}^{n}S_{i,T}=n^{-1}\sum_{i=1}^{n}E\left(
S_{i,T}\right) +n^{-1}\sum_{i=1}^{n}\left[ S_{i,T}-E\left( S_{i,T}\right) %
\right] \text{.}  \label{pl1}
\end{equation}%
Consider $E\left( S_{i,T}\right) $ first. Noting that $\tilde{x}%
_{it}=\sum_{s=1}^{t}u_{x,it}-\bar{x}_{i}$, $\bar{x}_{i}=T^{-1}\sum_{s=1}^{T}%
\left( T-s+1\right) u_{st}$, and $x_{it}=\sum_{s=1}^{t}u_{x,it}$, $S_{i,T}$
can be written as 
\begin{eqnarray*}
S_{i,T} &=&\frac{1}{T^{2}}\sum_{t=1}^{T}\tilde{x}_{it}x_{it}\text{,} \\
&=&\frac{1}{T^{2}}\sum_{t=1}^{T}\left[ \left( \sum_{s=1}^{t}u_{x,is}\right)
^{2}-\bar{x}_{i}\sum_{s=1}^{t}u_{x,is}\right] \text{,} \\
&=&\frac{1}{T^{2}}\sum_{t=1}^{T}\left[ \left( \sum_{s=1}^{t}u_{x,is}\right)
^{2}-\sum_{s=1}^{t}\frac{T-s+1}{T}u_{x,is}\cdot \sum_{s=1}^{t}u_{x,is}\right]
\text{.}
\end{eqnarray*}%
Taking expectations, we obtain%
\begin{equation*}
E\left( S_{i,T}\right) =\frac{\sigma _{xi}^{2}}{T^{2}}\sum_{t=1}^{T}\left[
t-\sum_{s=1}^{t}\frac{T-s+1}{T}\right] \text{.}
\end{equation*}%
Using $\sum_{s=1}^{t}\frac{T-s+1}{T}=\sum_{s=1}^{t}\left( 1-s/T+1/T\right)
=t-\left( t+1\right) t/\left( 2T\right) +t/T$, we have%
\begin{equation*}
E\left( S_{i,T}\right) =\frac{\sigma _{xi}^{2}}{T^{2}}\sum_{t=1}^{T}\left[
t-t+\frac{\left( t+1\right) t}{2T}-\frac{t}{T}\right] =\frac{\sigma _{xi}^{2}%
}{T^{2}}\sum_{t=1}^{T}\frac{\left( t+1\right) t}{2T}-\frac{t}{T}\text{.}
\end{equation*}%
Finally, noting that $\sum_{t=1}^{T}\left( t+1\right) t=\left( T+2\right)
\left( T+1\right) T/3$, and $\sum_{t=1}^{T}t=\left( T+1\right) T/2$, we
obtain%
\begin{equation}
E\left( S_{i,T}\right) =\sigma _{xi}^{2}\varkappa _{T}<K<\infty \text{,}
\label{plb}
\end{equation}%
for all $T>0$, where 
\begin{equation}
\varkappa _{T}=\left[ \frac{\left( T+2\right) \left( T+1\right) T}{6T^{3}}-%
\frac{\left( T+1\right) T}{2T^{3}}\right] \text{.}  \label{kt}
\end{equation}%
In addition, $\varkappa _{T}\rightarrow 1/6$, as $T\rightarrow \infty $, and 
\begin{equation*}
\frac{1}{n}\sum_{i=1}^{n}E\left( S_{i,T}\right) =\varkappa _{T}\frac{1}{n}%
\sum_{i=1}^{n}\sigma _{xi}^{2}\rightarrow \frac{\sigma _{x}^{2}}{6}\text{,}
\end{equation*}%
as $n,T\rightarrow \infty $. This establishes the limit of the first term on
the right side of (\ref{pl1}). Consider the second term next. Since $E\left[
S_{i,T}-E\left( S_{i,T}\right) \right] =0$, and $S_{i,T}$ is independent
over $i$, we have%
\begin{equation*}
E\left\{ n^{-1}\sum_{i=1}^{n}\left[ S_{i,T}-E\left( S_{i,T}\right) \right]
\right\} ^{2}=\frac{1}{n^{2}}\sum_{i=1}^{n}E\left( S_{i,T}^{2}\right) -\frac{%
1}{n^{2}}\sum_{i=1}^{n}\left[ E\left( S_{i,T}\right) \right] ^{2}\text{.}
\end{equation*}%
But it follows from (\ref{plb}) that there exist finite positive constant $%
K_{1}<\infty $ (which does not depend on $n,T$) such that $\left[ E\left(
S_{i,T}\right) \right] ^{2}<K_{1}$. In addition, due to existence of
uniformly bounded fourth moments of $u_{x,it}$, it also can be shown that $%
E\left( S_{i,T}^{2}\right) <K_{2}<\infty $. Hence, $E\left\{
n^{-1}\sum_{i=1}^{n}\left[ S_{i,T}-E\left( S_{i,T}\right) \right] \right\}
^{2}=O\left( n^{-1}\right) $, which implies $n^{-1}\sum_{i=1}^{n}\left[
S_{i,T}-E\left( S_{i,T}\right) \right] \rightarrow _{p}0$, as $%
n,T\rightarrow \infty $. This completes the proof.
\end{proof}

\begin{lemma}
\label{lb1}Suppose Assumptions \ref{As1}-\ref{As2} hold. Then there exists
finite positive constant $K$ that does not depend on $i$ and/or $T$ such that%
\begin{equation}
E\left( \frac{1}{T}\sum_{t=1}^{T}u_{x,it}\tilde{x}_{it}\right) ^{\varrho }<K%
\text{,}  \label{ba}
\end{equation}%
and%
\begin{equation}
E\left( \frac{1}{T}\sum_{t=1}^{T}\Delta y_{it}\tilde{x}_{it}\right)
^{\varrho }<K\text{,}  \label{bb}
\end{equation}%
for $\varrho =4$, where $\tilde{x}_{it}=x_{it}-\bar{x}_{i}$, $%
x_{it}=\sum_{s=1}^{t}u_{x,it}$, $\bar{x}_{i}=T^{-1}\sum_{t=1}^{T}x_{it}$,
and $\Delta y_{it}=\delta _{i}u_{x,it}+v_{it}-\alpha _{i}\sum_{\ell
=1}^{\infty }\left( 1-\alpha _{i}\right) ^{\ell -1}\left[ v_{i,t-\ell
}+\left( \delta _{i}-\beta \right) u_{x,i,t-\ell }\right] $.
\end{lemma}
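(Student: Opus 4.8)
The plan is to establish the fourth-moment bounds in (\ref{ba}) and (\ref{bb}) by reducing both quantities to normalized sums of products of the underlying innovations $u_{x,it}$ and $v_{it}$, and then bound the fourth moments of these sums using the uniformly bounded higher moments granted by Assumption \ref{As2} (which supplies $\sup_{i,t}E|v_{it}|^{16}<K$ and $\sup_{i,t}E|u_{x,it}|^{8}<K$). The key structural observation is that $\tilde{x}_{it}=x_{it}-\bar{x}_{i}$ is a linear combination of the partial sums $x_{it}=\sum_{s=1}^{t}u_{x,is}$, so each summand $u_{x,it}\tilde{x}_{it}$ is a quadratic form in the $u_{x,is}$. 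Consequently $T^{-1}\sum_{t}u_{x,it}\tilde{x}_{it}$ is a normalized double sum $T^{-1}\sum_{t}\sum_{s}a_{ts}u_{x,it}u_{x,is}$ with deterministic coefficients $a_{ts}$ that are uniformly $O(1)$ (after accounting for the $T^{-1}$ in $\bar{x}_{i}$). The target is to show the fourth moment of such an object stays bounded uniformly in $i$ and $T$.

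**First I would** write $\tilde{x}_{it}$ explicitly in terms of the innovations, as is already done in the proof of Lemma \ref{sla}, namely $\tilde{x}_{it}=\sum_{s=1}^{t}u_{x,is}-T^{-1}\sum_{s=1}^{T}(T-s+1)u_{x,is}$. Substituting into $T^{-1}\sum_{t}u_{x,it}\tilde{x}_{it}$ yields a homogeneous degree-two polynomial in $\{u_{x,is}\}_{s=1}^{T}$ whose fourth power is a degree-eight polynomial. Expanding $E\left(T^{-1}\sum_t u_{x,it}\tilde{x}_{it}\right)^4$ produces a sum over index octuples, and because the $u_{x,is}$ are $IID$ with mean zero, only terms in which every distinct index appears at least twice contribute. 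The highest-order surviving terms involve at most four distinct time indices, and since each carries at most an eighth moment of $u_{x,it}$ (bounded by Assumption \ref{As2}), the combinatorial count of such terms is $O(T^4)$; combined with the $T^{-4}$ normalization this gives an $O(1)$ bound. For (\ref{bb}) the plan is identical once $\Delta y_{it}$ is expanded using the $MA(\infty)$ representation $\Delta y_{it}=\delta_i u_{x,it}+v_{it}-\alpha_i\sum_{\ell=1}^{\infty}(1-\alpha_i)^{\ell-1}[v_{i,t-\ell}+(\delta_i-\beta)u_{x,i,t-\ell}]$ given in the statement; Assumption \ref{As1} guarantees $\sup_i|1-\alpha_i|<1$ so the geometric coefficients are absolutely summable uniformly in $i$, and the coefficients $\delta_i,\alpha_i,\delta_i-\beta$ are bounded.

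**The main obstacle** I anticipate is the bookkeeping in the fourth-moment expansion: one must verify that no combination of index coincidences produces a term growing faster than $T^4$, which requires carefully tracking which index-coincidence patterns survive the mean-zero $IID$ structure and checking that mixed terms (involving both $u_{x}$ and $v$ in the (\ref{bb}) case) do not blow up. The cleanest way to organize this is to bound $|E(\text{product})|$ via repeated Cauchy--Schwarz (or a multivariate Hölder/Minkowski argument) so that each factor is controlled by a power of $\sup_{i,t}E|u_{x,it}|^{8}$ and $\sup_{i,t}E|v_{it}|^{16}$, rather than tabulating every term. An alternative and perhaps tidier route is to apply the Marcinkiewicz--Zygmund or Rosenthal inequality to the martingale-difference sequence obtained by conditioning successively, which directly delivers the fourth-moment bound for the normalized quadratic form; this avoids hand-counting coincidence patterns entirely. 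Either way, the required moment conditions in Assumption \ref{As2} are exactly calibrated (eighth moments of $u_{x,it}$, sixteenth of $v_{it}$) so that the products arising in a fourth power remain integrable, so the bound $K$ can be taken uniform in $i$ and $T$ as claimed.
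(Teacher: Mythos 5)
Your proposal is correct and follows essentially the same route as the paper: both reduce the statistic to (bi)linear forms in the IID mean-zero innovations, expand the moment, and use independence together with the uniformly bounded eighth moments of $u_{x,it}$ (and higher moments of $v_{it}$) to bound the surviving index-coincidence terms, handling (\ref{bb}) via the MA$(\infty)$ representation of $\Delta y_{it}$ with geometric coefficients uniform in $i$ by Assumption \ref{As1}. The only difference is organizational: the paper first bounds the $\varrho=2$ case by splitting the square into four terms $A_{i,T,j}$ and then obtains $\varrho=4$ from $E\left( A_{i,T,j}^{2}\right) <K$, whereas you expand the fourth power directly and count coincidence patterns (or invoke Rosenthal-type inequalities), which amounts to the same bookkeeping.
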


\begin{proof}
Consider $\sum_{t=1}^{T}u_{it}\tilde{x}_{it}/T$ and $\varrho =2$ first, and
note that $\tilde{x}_{it}=\sum_{s=1}^{t}u_{x,is}-\bar{x}_{i}$, where $\bar{x}%
_{i}=T^{-1}\sum_{s=1}^{T}\left( T-s+1\right) u_{x,is}$. We have%
\begin{eqnarray*}
\left( \frac{1}{T}\sum_{t=1}^{T}u_{x,it}\tilde{x}_{it}\right) ^{2} &=&\frac{1%
}{T^{2}}\sum_{t=1}^{T}\sum_{t^{\prime }=1}^{T}u_{x,it}u_{x,it^{\prime }}%
\tilde{x}_{it}\tilde{x}_{it^{\prime }}\text{,} \\
&=&\frac{1}{T^{2}}\sum_{t=1}^{T}\sum_{t^{\prime
}=1}^{T}u_{x,it}u_{x,it^{\prime }}\left( \sum_{s=1}^{t}u_{x,is}-\bar{x}%
_{i}\right) \left( \sum_{s=1}^{t^{\prime }}u_{x,is}-\bar{x}_{i}\right) \text{%
,} \\
&=&A_{i,T,1}+A_{i,T,2}-A_{i,T,3}-A_{i,T,4}\text{,}
\end{eqnarray*}%
where 
\begin{eqnarray*}
A_{i,T,1} &=&\frac{1}{T^{2}}\sum_{t=1}^{T}\sum_{t^{\prime
}=1}^{T}u_{x,it}u_{x,it^{\prime }}\left( \sum_{s=1}^{t}u_{x,is}\right)
\left( \sum_{s=1}^{t^{\prime }}u_{x,is}\right) \text{,} \\
A_{i,T,2} &=&\frac{1}{T^{2}}\sum_{t=1}^{T}\sum_{t^{\prime
}=1}^{T}u_{x,it}u_{x,it^{\prime }}\bar{x}_{i}^{2}\text{,} \\
A_{i,T,3} &=&\frac{1}{T^{2}}\sum_{t=1}^{T}\sum_{t^{\prime
}=1}^{T}u_{x,it}u_{x,it^{\prime }}\bar{x}_{i}\sum_{s=1}^{t}u_{x,is}\text{,}
\\
A_{i,T,4} &=&\frac{1}{T^{2}}\sum_{t=1}^{T}\sum_{t^{\prime
}=1}^{T}u_{x,it}u_{x,it^{\prime }}\bar{x}_{i}\sum_{s=1}^{t^{\prime }}u_{x,is}%
\text{.}
\end{eqnarray*}%
Taking expectations and noting that $u_{x,it}$ is independent of $%
u_{x,it^{\prime }}$ for any $t\neq t^{\prime }$, we have%
\begin{equation*}
E\left( A_{i,T,1}\right) =\frac{1}{T^{2}}\left(
\sum_{t=1}^{T}\sum_{t^{\prime }=1}^{t-1}\sigma
_{ix}^{2}+\sum_{t=1}^{T}E\left( u_{x,it}^{4}\right)
+\sum_{t=1}^{T}\sum_{t^{\prime }=1}^{t-1}\sigma _{ix}^{2}\right) \text{.}
\end{equation*}%
Under Assumption \ref{As2}, there exists a finite constant $K$ that does not
depend on $i$ and/or $t$, such that $\sigma _{ix}^{2}<K$ and $E\left(
u_{x,it}^{4}\right) <K$. Hence $\left\vert E\left( A_{i,T,1}\right)
\right\vert <K$. Similarly, we can bound the remaining elements, $\left\vert
E\left( A_{i,T,j}\right) \right\vert <K$, for $j=2,3,4$. It now follows that 
$E\left( \frac{1}{T}\sum_{t=1}^{T}u_{x,it}\tilde{x}_{it}\right) ^{2}<K$,
where the upper bound $K$ does not depend on $i$ or $T$. This establishes (%
\ref{ba}) hold for $\varrho =2$. Sufficient conditions for (\ref{ba}) to
hold when $\varrho =4$ are: $E\left( A_{i,T,j}^{2}\right) <K$ for $j=1,2,3,4$%
. These conditions follow from uniformly bounded eights moments of $u_{x,it}$%
. This completes the proof of (\ref{ba}). Result (\ref{bb}) can be
established in the same way by using the first difference of representation (%
\ref{ma_y}).
\end{proof}

\bigskip

\begin{lemma}
\label{slb}Suppose Assumptions \ref{As1}-\ref{As4} hold, and consider $%
s_{iT} $ given by 
\begin{equation}
s_{iT}=\mathbf{\tilde{x}}_{i}^{\prime }\Delta \mathbf{\tilde{Z}}_{i}\left(
\Delta \mathbf{\tilde{Z}}_{i}^{\prime }\mathbf{P}_{i}\Delta \mathbf{\tilde{Z}%
}_{i}\right) ^{-1}\Delta \mathbf{\tilde{Z}}_{i}^{\prime }\mathbf{\tilde{x}}%
_{i}\text{,}  \label{sit}
\end{equation}%
where $\mathbf{P}_{i}$ is given by (\ref{pi}), and $\mathbf{\tilde{x}}_{i}$
and $\Delta \mathbf{\tilde{Z}}_{i}$ are defined below (\ref{i}). Then, 
\begin{equation}
n^{-1}\sum_{i=1}^{n}\frac{s_{iT}}{T^{2}}\rightarrow _{p}0\text{, as }%
n,T\rightarrow \infty \text{. }  \label{rb1}
\end{equation}
\end{lemma}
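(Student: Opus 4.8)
The plan is to recognize $s_{iT}/T^{2}$ as a scaled quadratic form in which an $I(0)$-by-$I(1)$ inner product (of order $O\left( T\right) $) is sandwiched around the inverse of the very matrix regularized by Assumption \ref{As4}$(i)$. First I would write
\[
\frac{s_{iT}}{T^{2}}=\frac{1}{T}\,\mathbf{d}_{iT}^{\prime }\mathbf{B}_{iT}^{-1}\mathbf{d}_{iT}\text{, where }\mathbf{d}_{iT}=\frac{\Delta \mathbf{\tilde{Z}}_{i}^{\prime }\mathbf{\tilde{x}}_{i}}{T}\text{ and }\mathbf{B}_{iT}=\frac{\Delta \mathbf{\tilde{Z}}_{i}^{\prime }\mathbf{P}_{i}\Delta \mathbf{\tilde{Z}}_{i}}{T}\text{,}
\]
with $\mathbf{B}_{iT}$ exactly the object appearing in Assumption \ref{As4}$(i)$. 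Using that $\mathbf{M}_{\tau }$ is symmetric and idempotent and $\mathbf{\tilde{x}}_{i}=\mathbf{M}_{\tau }\mathbf{x}_{i}$, the vector $\mathbf{d}_{iT}$ collapses to $T^{-1}\Delta \mathbf{Z}_{i}^{\prime }\mathbf{\tilde{x}}_{i}$, whose two components are precisely $T^{-1}\sum_{t=1}^{T}\Delta y_{it}\tilde{x}_{it}$ and $T^{-1}\sum_{t=1}^{T}u_{x,it}\tilde{x}_{it}$ (recall $\Delta x_{it}=u_{x,it}$). These are exactly the quantities whose fourth moments are shown to be uniformly bounded in Lemma \ref{lb1}, so $\sup_{i,T>T_{0}}E\left\Vert \mathbf{d}_{iT}\right\Vert ^{4}<K$.

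Next I would bound the quadratic form. Since $\mathbf{B}_{iT}$ is almost surely symmetric positive definite for $T>T_{0}$ (as implied by Assumption \ref{As4}$(i)$), the Loewner bound $\mathbf{B}_{iT}^{-1}\preceq \lambda _{\min }^{-1}\left( \mathbf{B}_{iT}\right) \mathbf{I}$ gives
\[
0\leq \frac{s_{iT}}{T^{2}}\leq \frac{1}{T}\lambda _{\min }^{-1}\left( \mathbf{B}_{iT}\right) \left\Vert \mathbf{d}_{iT}\right\Vert ^{2}\text{.}
\]
Taking expectations and applying the Cauchy--Schwarz inequality to the product of $\lambda _{\min }^{-1}\left( \mathbf{B}_{iT}\right) $ and $\left\Vert \mathbf{d}_{iT}\right\Vert ^{2}$,
\[
E\left( \frac{s_{iT}}{T^{2}}\right) \leq \frac{1}{T}\left[ E\lambda _{\min }^{-2}\left( \mathbf{B}_{iT}\right) \right] ^{1/2}\left[ E\left\Vert \mathbf{d}_{iT}\right\Vert ^{4}\right] ^{1/2}\text{.}
\]
Assumption \ref{As4}$(i)$ supplies $\sup_{i,T>T_{0}}E\lambda _{\min }^{-2}\left( \mathbf{B}_{iT}\right) <K$, which combined with the moment bound on $\mathbf{d}_{iT}$ from Lemma \ref{lb1} yields $E\left( s_{iT}/T^{2}\right) \leq K/T$, uniformly in $i$.

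Finally, since the summands are nonnegative,
\[
E\left\vert n^{-1}\sum_{i=1}^{n}\frac{s_{iT}}{T^{2}}\right\vert =n^{-1}\sum_{i=1}^{n}E\left( \frac{s_{iT}}{T^{2}}\right) \leq \frac{K}{T}\rightarrow 0\text{,}
\]
as $n,T\rightarrow \infty $ (in fact for any growth of $n$ relative to $T$), so the average converges to zero in $L^{1}$ and hence in probability, which is (\ref{rb1}). Notably, neither cross-sectional independence nor a variance computation is required; the $L^{1}$ bound alone suffices. I expect the main obstacle to be the first step -- verifying that $\mathbf{d}_{iT}$ reduces to the exact inner products controlled by Lemma \ref{lb1} and that the inverted matrix is the very $\mathbf{B}_{iT}$ whose smallest eigenvalue is regularized by Assumption \ref{As4}$(i)$. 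Once these identifications are made, the uniform $O\left( T^{-1}\right) $ mean bound and the averaging argument are routine.
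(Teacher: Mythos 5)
Your proposal is correct and follows essentially the same route as the paper's own proof: the identical identity $s_{iT}/T^{2}=T^{-1}\mathbf{a}_{iT}^{\prime }\mathbf{B}_{iT}^{-1}\mathbf{a}_{iT}$ (your $\mathbf{d}_{iT}$ is the paper's $\mathbf{a}_{iT}$), the bound $\mathbf{a}_{iT}^{\prime }\mathbf{B}_{iT}^{-1}\mathbf{a}_{iT}\leq \lambda _{\min }^{-1}\left( \mathbf{B}_{iT}\right) \left\Vert \mathbf{a}_{iT}\right\Vert ^{2}$ followed by Cauchy--Schwarz, the fourth-moment bound from Lemma \ref{lb1} together with Assumption \ref{As4}$(i)$, and the resulting uniform $K/T$ bound on the $L^{1}$ norm of the average. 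The only cosmetic differences are that you make the nonnegativity of $s_{iT}$ explicit and phrase the conclusion as $L^{1}$ convergence, which is exactly what the paper's Markov-type argument delivers.
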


\begin{proof}
Consider $s_{i,T}/T$, which can be written as 
\begin{equation}
\frac{s_{iT}}{T}\leq \mathbf{a}_{iT}^{\prime }\mathbf{B}_{iT}^{-1}\mathbf{a}%
_{iT}\text{,}  \label{pb1}
\end{equation}%
where 
\begin{equation}
\mathbf{a}_{iT}=\frac{\Delta \mathbf{\tilde{Z}}_{i}^{\prime }\mathbf{\tilde{x%
}}_{i}}{T}=\frac{\Delta \mathbf{Z}_{i}^{\prime }\mathbf{\tilde{x}}_{i}}{T}%
\text{,}  \label{ait}
\end{equation}%
and%
\begin{equation}
\mathbf{B}_{iT}=\frac{\Delta \mathbf{\tilde{Z}}_{i}^{\prime }\mathbf{P}%
_{i}\Delta \mathbf{\tilde{Z}}_{i}}{T}\text{.}  \label{bit}
\end{equation}%
Using these notations, we have%
\begin{equation*}
E\left\vert \frac{1}{n}\sum_{i=1}^{n}\frac{s_{iT}}{T^{2}}\right\vert \leq 
\frac{1}{nT}\sum_{i=1}^{n}E\left\vert \mathbf{a}_{iT}^{\prime }\mathbf{B}%
_{iT}^{-1}\mathbf{a}_{iT}\right\vert \text{.}
\end{equation*}%
Using $\left\vert \mathbf{a}_{iT}^{\prime }\mathbf{B}_{iT}^{-1}\mathbf{a}%
_{iT}\right\vert \leq \lambda _{\min }^{-1}\left( \mathbf{B}_{iT}\right) 
\mathbf{a}_{iT}^{\prime }\mathbf{a}_{iT}$, and Cauchy-Schwarz inequality, we
obtain%
\begin{equation*}
E\left\vert \frac{1}{n}\sum_{i=1}^{n}\frac{s_{iT}}{T^{2}}\right\vert \leq 
\frac{1}{nT}\sum_{i=1}^{n}\sqrt{E\left[ \left( \mathbf{a}_{iT}^{\prime }%
\mathbf{a}_{iT}\right) ^{2}\right] }\sqrt{E\left[ \lambda _{\min
}^{-2}\left( \mathbf{B}_{iT}\right) \right] }\text{.}
\end{equation*}%
Lemma \ref{lb1} implies the fourth moments of the individual elements of $%
\mathbf{a}_{i,T}$ are uniformly bounded in $i$ and $T$, which is sufficient
for $E\left[ \left( \mathbf{a}_{iT}^{\prime }\mathbf{a}_{iT}\right) ^{2}%
\right] <K$. In addition, $E\left[ \lambda _{\min }^{-2}\left( \mathbf{B}%
_{iT}\right) \right] <K$ by Assumption \ref{As4}. Hence, there exists $%
K<\infty $, which does not depend on $\left( n,T\right) $ such that 
\begin{equation}
E\left\vert n^{-1}\sum_{i=1}^{n}\frac{s_{iT}}{T^{2}}\right\vert <\frac{K}{T}%
\text{,}
\end{equation}%
and result (\ref{rb1}) follows.
\end{proof}

\bigskip

\begin{lemma}
\label{slp}Suppose Assumptions \ref{As1}-\ref{As4} hold. Then%
\begin{equation}
n^{-1}\sum_{i=1}^{n}\frac{\mathbf{\tilde{x}}_{i}^{\prime }\mathbf{M}_{i}%
\mathbf{\tilde{x}}_{i}}{T^{2}}\rightarrow _{p}\omega _{x}^{2}=\frac{\sigma
_{x}^{2}}{6}\text{, as }n,T\rightarrow \infty ,  \label{cr1}
\end{equation}%
where $\sigma _{x}^{2}=\lim_{n\rightarrow \infty }n^{-1}\sum_{i=1}^{n}\sigma
_{xi}^{2}$, $\mathbf{M}_{i}$ is defined in (\ref{mi}) and $\mathbf{\tilde{x}}%
_{i}$ is defined below (\ref{i}).
\end{lemma}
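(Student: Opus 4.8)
The plan is to reduce the quadratic form $\mathbf{\tilde{x}}_{i}^{\prime }\mathbf{M}_{i}\mathbf{\tilde{x}}_{i}$ to two pieces that are already controlled by Lemmas \ref{sla} and \ref{slb}. The crucial observation is that $\mathbf{\tilde{x}}_{i}$ appears as one of the three columns of the instrument matrix $\mathbf{\tilde{H}}_{i}=\left( \mathbf{\tilde{y}}_{i,-1},\mathbf{\tilde{x}}_{i},\mathbf{\tilde{x}}_{i,-1}\right) $ in (\ref{z}). Hence $\mathbf{\tilde{x}}_{i}$ lies in the column space of $\mathbf{\tilde{H}}_{i}$, and the orthogonal projector $\mathbf{P}_{i}$ defined in (\ref{pi}) leaves it fixed, i.e. $\mathbf{P}_{i}\mathbf{\tilde{x}}_{i}=\mathbf{\tilde{x}}_{i}$.

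First I would substitute the definition (\ref{mi}) of $\mathbf{M}_{i}$ into $\mathbf{\tilde{x}}_{i}^{\prime }\mathbf{M}_{i}\mathbf{\tilde{x}}_{i}$ and use $\mathbf{P}_{i}\mathbf{\tilde{x}}_{i}=\mathbf{\tilde{x}}_{i}$ together with the symmetry of $\mathbf{P}_{i}$ (so that $\mathbf{\tilde{x}}_{i}^{\prime }\mathbf{P}_{i}=\mathbf{\tilde{x}}_{i}^{\prime }$) to obtain
\begin{equation*}
\mathbf{\tilde{x}}_{i}^{\prime }\mathbf{M}_{i}\mathbf{\tilde{x}}_{i}=\mathbf{\tilde{x}}_{i}^{\prime }\mathbf{\tilde{x}}_{i}-\mathbf{\tilde{x}}_{i}^{\prime }\Delta \mathbf{\tilde{Z}}_{i}\left( \Delta \mathbf{\tilde{Z}}_{i}^{\prime }\mathbf{P}_{i}\Delta \mathbf{\tilde{Z}}_{i}\right) ^{-1}\Delta \mathbf{\tilde{Z}}_{i}^{\prime }\mathbf{\tilde{x}}_{i}=\mathbf{\tilde{x}}_{i}^{\prime }\mathbf{\tilde{x}}_{i}-s_{iT},
\end{equation*}
where $s_{iT}$ is exactly the quantity defined in (\ref{sit}) of Lemma \ref{slb}. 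Note that the inner $\mathbf{P}_{i}$ in the middle inverse survives, while the two outer $\mathbf{P}_{i}$'s are absorbed by $\mathbf{P}_{i}\mathbf{\tilde{x}}_{i}=\mathbf{\tilde{x}}_{i}$, which is precisely the form of $s_{iT}$.

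Dividing by $T^{2}$ and averaging over $i$ then yields the decomposition
\begin{equation*}
n^{-1}\sum_{i=1}^{n}\frac{\mathbf{\tilde{x}}_{i}^{\prime }\mathbf{M}_{i}\mathbf{\tilde{x}}_{i}}{T^{2}}=n^{-1}\sum_{i=1}^{n}\frac{\mathbf{\tilde{x}}_{i}^{\prime }\mathbf{\tilde{x}}_{i}}{T^{2}}-n^{-1}\sum_{i=1}^{n}\frac{s_{iT}}{T^{2}}.
\end{equation*}
Lemma \ref{sla} shows that the first term on the right converges in probability to $\omega _{x}^{2}=\sigma _{x}^{2}/6$, while Lemma \ref{slb} shows that the second term converges in probability to zero as $n,T\rightarrow \infty $. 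A Slutsky-type argument then gives (\ref{cr1}).

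The only genuine obstacle is the opening observation: once one notices that $\mathbf{P}_{i}$ acts as the identity on $\mathbf{\tilde{x}}_{i}$, the seemingly complicated idempotent $\mathbf{M}_{i}$ collapses and the statement reduces to the two previously established limits. The step I would verify most carefully is that the cross term matches $s_{iT}$ \emph{verbatim}, so that Lemma \ref{slb} applies without any modification; this is where keeping track of which $\mathbf{P}_{i}$'s are removed by $\mathbf{P}_{i}\mathbf{\tilde{x}}_{i}=\mathbf{\tilde{x}}_{i}$ matters.
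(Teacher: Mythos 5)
Your proposal is correct and follows essentially the same route as the paper's own proof: both exploit $\mathbf{P}_{i}\mathbf{\tilde{x}}_{i}=\mathbf{\tilde{x}}_{i}$ (since $\mathbf{\tilde{x}}_{i}$ is a column of $\mathbf{\tilde{H}}_{i}$) to collapse $\mathbf{\tilde{x}}_{i}^{\prime }\mathbf{M}_{i}\mathbf{\tilde{x}}_{i}$ into $\mathbf{\tilde{x}}_{i}^{\prime }\mathbf{\tilde{x}}_{i}-s_{iT}$ with $s_{iT}$ exactly as in (\ref{sit}), and then invoke Lemma \ref{sla} for the first term and Lemma \ref{slb} for the second. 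Your extra care in verifying that the outer $\mathbf{P}_{i}$'s are absorbed while the inner one inside the inverse survives is precisely the bookkeeping the paper's one-line decomposition relies on.
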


\begin{proof}
Noting that $\mathbf{\tilde{x}}_{i}$ is one of the column vectors of $%
\mathbf{H}_{i}$, we have $\mathbf{P}_{i}\mathbf{\tilde{x}}_{i}=\mathbf{%
\tilde{x}}_{i}$, and $\mathbf{\tilde{x}}_{i}^{\prime }\mathbf{M}_{i}\mathbf{%
\tilde{x}}_{i}$ can be written as 
\begin{equation}
\mathbf{\tilde{x}}_{i}^{\prime }\mathbf{M}_{i}\mathbf{\tilde{x}}_{i}=\mathbf{%
\tilde{x}}_{i}^{\prime }\mathbf{\tilde{x}}_{i}-s_{i,T}\text{,}
\end{equation}%
where $s_{i,T}$ is given by (\ref{sit}). Sufficient conditions for result (%
\ref{cr1}) are:%
\begin{equation}
n^{-1}\sum_{i=1}^{n}\frac{\mathbf{\tilde{x}}_{i}^{\prime }\mathbf{\tilde{x}}%
_{i}}{T^{2}}\rightarrow _{p}\omega _{x}^{2}=\frac{\sigma _{x}^{2}}{6}\text{,
as }n,T\rightarrow \infty ,  \label{rs1}
\end{equation}%
and%
\begin{equation}
n^{-1}\sum_{i=1}^{n}\frac{s_{i,T}}{T^{2}}\rightarrow _{p}0\text{, as }%
n,T\rightarrow \infty \text{. }  \label{rs2}
\end{equation}%
Condition (\ref{rs1}) is established by Lemma \ref{sla}, and condition (\ref%
{rs2}) is established by Lemma \ref{slb}.
\end{proof}

\begin{lemma}
\label{sld}Let Assumptions \ref{As1}-\ref{As3} hold. Then%
\begin{equation}
\frac{1}{\sqrt{n}}\sum_{i=1}^{n}\frac{\mathbf{\tilde{x}}_{i}^{\prime }%
\mathbf{\tilde{v}}_{i}}{T\alpha _{i}}\rightarrow _{d}N\left( 0,\omega
_{v}^{2}\right) \text{, as }n,T\rightarrow \infty \text{,}
\end{equation}%
where $\omega _{v}^{2}=\lim {}_{n\rightarrow \infty
}n^{-1}\sum_{i=1}^{n}\sigma _{xi}^{2}\sigma _{vi}^{2}/\left( 6\alpha
_{i}^{2}\right) $, and $\mathbf{\tilde{x}}_{i}$ and $\mathbf{\tilde{v}}_{i}$
are defined below (\ref{i}).
\end{lemma}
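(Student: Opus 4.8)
The plan is to view the left-hand side as a normalized sum of cross-sectionally independent, mean-zero random variables and to invoke a Lyapunov central limit theorem for triangular arrays. Write $\eta_{iT}=\mathbf{\tilde{x}}_{i}^{\prime}\mathbf{\tilde{v}}_{i}/\left(T\alpha_{i}\right)$. Since $\mathbf{M}_{\tau}$ is symmetric and idempotent, $\mathbf{\tilde{x}}_{i}^{\prime}\mathbf{\tilde{v}}_{i}=\mathbf{\tilde{x}}_{i}^{\prime}\mathbf{v}_{i}=\sum_{t=1}^{T}\tilde{x}_{it}v_{it}$. By Assumption \ref{As2}, $\left\{u_{x,it}\right\}$ and $\left\{v_{it}\right\}$ are independent, so $\mathbf{\tilde{x}}_{i}$ and $\mathbf{v}_{i}$ are independent; because $E\left(\mathbf{v}_{i}\right)=\mathbf{0}$, each $\eta_{iT}$ has exactly zero mean, and the $\eta_{iT}$ are independent across $i$ as the shocks are cross-sectionally independent. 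The task therefore reduces to establishing the variance limit and a Lyapunov moment condition.

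First I would compute the variance. Conditioning on $\mathbf{x}_{i}$ and using $\mathrm{Var}\left(\mathbf{\tilde{v}}_{i}\right)=\sigma_{vi}^{2}\mathbf{M}_{\tau}$ together with $\mathbf{\tilde{x}}_{i}^{\prime}\mathbf{M}_{\tau}\mathbf{\tilde{x}}_{i}=\mathbf{\tilde{x}}_{i}^{\prime}\mathbf{\tilde{x}}_{i}$ gives
\[
\mathrm{Var}\left(\eta_{iT}\right)=\frac{\sigma_{vi}^{2}}{\alpha_{i}^{2}T^{2}}E\left(\mathbf{\tilde{x}}_{i}^{\prime}\mathbf{\tilde{x}}_{i}\right).
\]
The quantity $E\left(\mathbf{\tilde{x}}_{i}^{\prime}\mathbf{\tilde{x}}_{i}/T^{2}\right)=\sigma_{xi}^{2}\varkappa_{T}$, with $\varkappa_{T}\rightarrow 1/6$, is exactly what is computed in the proof of Lemma \ref{sla}, so $\mathrm{Var}\left(\eta_{iT}\right)=\sigma_{xi}^{2}\sigma_{vi}^{2}\alpha_{i}^{-2}\varkappa_{T}$ and
\[
n^{-1}\sum_{i=1}^{n}\mathrm{Var}\left(\eta_{iT}\right)=\varkappa_{T}\,n^{-1}\sum_{i=1}^{n}\frac{\sigma_{xi}^{2}\sigma_{vi}^{2}}{\alpha_{i}^{2}}\rightarrow\omega_{v}^{2},
\]
using Assumption \ref{As2} for the limit of the cross-sectional average. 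Since $\omega_{v}^{2}>0$, the limit is nondegenerate.

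Next I would verify the Lyapunov condition using fourth moments, so the crux is the uniform bound $\sup_{i,T}E\left(\eta_{iT}^{4}\right)<K$. As $\alpha_{i}$ is bounded away from zero and above by Assumption \ref{As1}, it suffices to bound $E\left(T^{-1}\sum_{t=1}^{T}\tilde{x}_{it}v_{it}\right)^{4}$. Expanding the fourth power and using the independence of $\left\{v_{it}\right\}$ from $\mathbf{\tilde{x}}_{i}$, each cross-moment factorizes as $E\left[\tilde{x}_{it_{1}}\tilde{x}_{it_{2}}\tilde{x}_{it_{3}}\tilde{x}_{it_{4}}\right]E\left[v_{it_{1}}v_{it_{2}}v_{it_{3}}v_{it_{4}}\right]$; the $v$-factor vanishes unless the four time indices coincide or split into two pairs, and counting these configurations against the $O\left(T^{2}\right)$ order of the surviving $\mathbf{\tilde{x}}$ moments leaves an $O\left(1\right)$ bound. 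This is the same accounting as in Lemma \ref{lb1}, and is in fact simpler here because $v_{it}$ is fully independent of $\tilde{x}_{it}$ rather than entering it; the moment conditions $\sup_{i,t}E\left\vert v_{it}\right\vert^{16}<K$ and $\sup_{i,t}E\left\vert u_{x,it}\right\vert^{8}<K$ from Assumption \ref{As2} are more than sufficient. Given this uniform bound, the Lyapunov ratio satisfies
\[
\frac{\sum_{i=1}^{n}E\left(\eta_{iT}^{4}\right)}{\left(\sum_{i=1}^{n}\mathrm{Var}\left(\eta_{iT}\right)\right)^{2}}\leq\frac{nK}{\left(n\omega_{v}^{2}+o\left(n\right)\right)^{2}}=O\left(n^{-1}\right)\rightarrow 0.
\]

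Finally, the Lyapunov CLT for independent triangular arrays delivers $S_{n}/s_{n}\rightarrow_{d}N\left(0,1\right)$ with $s_{n}^{2}=\sum_{i=1}^{n}\mathrm{Var}\left(\eta_{iT}\right)$, and since $s_{n}^{2}/n\rightarrow\omega_{v}^{2}$, Slutsky's theorem yields $n^{-1/2}\sum_{i=1}^{n}\eta_{iT}\rightarrow_{d}N\left(0,\omega_{v}^{2}\right)$, as claimed. I expect the fourth-moment bound to be the main obstacle: one must control expectations of quartic products of the demeaned integrated regressor $\tilde{x}_{it}$, whose magnitude grows with $T$, and ensure the bounds are uniform in both $i$ and $T$ so that the triangular-array CLT applies. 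Unlike the centering step handled in Lemma \ref{leb}, no restriction on the relative rates of $n$ and $T$ is required here, precisely because the summands are exactly mean-zero.
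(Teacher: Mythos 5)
Your proof is correct, and it shares the paper's skeleton — the reduction $\mathbf{\tilde{x}}_{i}^{\prime }\mathbf{\tilde{v}}_{i}=\mathbf{\tilde{x}}_{i}^{\prime }\mathbf{v}_{i}$ via idempotency of $\mathbf{M}_{\tau }$, the exact variance $\varkappa _{T}\sigma _{xi}^{2}\sigma _{vi}^{2}/\alpha _{i}^{2}$ recycled from the proof of Lemma \ref{sla} (equation (\ref{plb})), and cross-sectional independence of the summands — but the final step takes a different route. The paper standardizes each summand as $Q_{i,T}=C_{i}^{-1}\mathbf{\tilde{x}}_{i}^{\prime }\mathbf{v}_{i}/\left( T\alpha _{i}\right) $ with $C_{i}=\sigma _{xi}\sigma _{vi}/\alpha _{i}$, notes that $E\left( Q_{i,T}^{2}\right) =\varkappa _{T}\rightarrow 1/6$ and that finite fourth moments make $Q_{i,T}^{2}$ uniformly integrable, and then invokes Theorem 3 of \citeN{PhillipsMoon1999}, a packaged CLT for independent panel arrays under joint $\left( n,T\right) $ asymptotics. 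You instead verify Lyapunov's condition by hand through the uniform bound $\sup_{i,T}E\left( \eta _{iT}^{4}\right) <K$ and apply the classical triangular-array CLT plus Slutsky; your combinatorial accounting is sound (two-pair index configurations contribute $O\left( T^{2}\right) \cdot O\left( T^{2}\right) /T^{4}=O\left( 1\right) $, the all-coincident terms $O\left( T^{-1}\right) $, and odd configurations vanish). What each approach buys: yours is elementary and self-contained, making the moment bookkeeping explicit; the paper's delegates the subtlety of joint convergence to an established result and requires only uniform integrability of squares, which is weaker than a uniform fourth-moment Lyapunov bound. The one step you should make explicit is how a single-index triangular-array CLT delivers joint $\left( n,T\right) $ convergence: since your moment bounds and the variance limit are uniform in $T$ (the $T$-dependence of the variances enters only through the common factor $\varkappa _{T}$), the Lyapunov condition holds along every path $\left( n_{k},T_{k}\right) \rightarrow \left( \infty ,\infty \right) $, and joint convergence in distribution is equivalent to convergence along all such paths — this is precisely the bookkeeping the Phillips--Moon theorem encapsulates for the paper.
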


\begin{proof}
Recall $\mathbf{M}_{\tau }=\mathbf{I}_{T}-T^{-1}\mathbf{\tau }_{T}\mathbf{%
\tau }_{T}^{\prime }$, where $\mathbf{I}_{T}$ is $T\times T$ identity matrix
and $\mathbf{\tau }_{T}$ is $T\times 1$ vector of ones. Since $\mathbf{M}%
_{\tau }^{\prime }\mathbf{M}_{\tau }=\mathbf{M}_{\tau }^{\prime }$,\textbf{\ 
}we have%
\begin{equation*}
\mathbf{\tilde{x}}_{i}^{\prime }\mathbf{\tilde{v}}_{i}=\mathbf{x}%
_{i}^{\prime }\mathbf{M}_{\tau }^{\prime }\mathbf{M}_{\tau }\mathbf{v}_{i}=%
\mathbf{x}_{i}^{\prime }\mathbf{M}_{\tau }^{\prime }\mathbf{v}_{i}=\mathbf{%
\tilde{x}}_{i}^{\prime }\mathbf{v}_{i}\text{.}
\end{equation*}%
Let $C_{i}=\frac{\sigma _{xi}\sigma _{vi}}{\alpha _{i}}$ and $%
Q_{i,T}=C_{i}^{-1}\frac{\mathbf{\tilde{x}}_{i}^{\prime }\mathbf{v}_{i}}{%
T\alpha _{i}}$. We have $E\left( Q_{i,T}\right) =0$, and (under independence
of $v_{it}$ over $t$ and independence of $v_{it}$ and $u_{x,it^{\prime }}$
for any $t,t^{\prime }$) 
\begin{equation*}
E\left[ \left( \frac{\mathbf{\tilde{x}}_{i}^{\prime }\mathbf{v}_{i}}{T\alpha
_{i}}\right) ^{2}\right] =\frac{1}{T^{2}\alpha _{i}^{2}}\sum_{t=1}^{T}E%
\left( \tilde{x}_{it}^{2}\right) E\left( v_{it}^{2}\right) \text{,}
\end{equation*}%
where $E\left( v_{it}^{2}\right) =\sigma _{vi}^{2}$. In addition, (\ref{plb}%
) established that $\frac{1}{T^{2}}\sum_{t=1}^{T}E\left( \tilde{x}%
_{it}^{2}\right) =\sigma _{xi}^{2}\varkappa _{T}$, where $\varkappa _{T}$ is
given by (\ref{kt}). Hence,%
\begin{equation*}
E\left[ \left( \frac{\mathbf{\tilde{x}}_{i}^{\prime }\mathbf{v}_{i}}{T\alpha
_{i}}\right) ^{2}\right] =\frac{\sigma _{vi}^{2}\sigma _{xi}^{2}}{\alpha
_{i}^{2}}\varkappa _{T}=C_{i}^{2}\varkappa _{T}\text{.}
\end{equation*}%
It follows that 
\begin{equation*}
E\left( Q_{i,T}^{2}\right) =\varkappa _{T}\text{,}
\end{equation*}%
where $\varkappa _{T}\rightarrow 1/6<\infty $. Finite fourth moments of $%
u_{x,it}$ and $v_{it}$ imply $Q_{i,T}^{4}$ is uniformly bounded in $T$, and
therefore $Q_{i,T}^{2}$ is uniformly integrable in $T$. We can apply Theorem
3 of 
\citeN{PhillipsMoon1999}
to obtain%
\begin{equation*}
\frac{1}{\sqrt{n}}\sum_{i=1}^{n}C_{i}Q_{i,T}=\frac{1}{\sqrt{n}}\sum_{i=1}^{n}%
\frac{\mathbf{\tilde{x}}_{i}^{\prime }\mathbf{v}_{i}}{T\alpha _{i}}%
\rightarrow _{d}N\left( 0,\omega _{v}^{2}\right) \text{, as }n,T\rightarrow
\infty \text{,}
\end{equation*}%
where $\omega _{v}^{2}=\lim_{n\rightarrow \infty }C_{i}^{2}\varkappa
_{T}=\lim {}_{n\rightarrow \infty }n^{-1}\sum_{i=1}^{n}\sigma
_{xi}^{2}\sigma _{vi}^{2}/\left( 6\alpha _{i}^{2}\right) $.
\end{proof}

\bigskip

\begin{lemma}
\label{lra} Suppose Assumptions \ref{As1}-\ref{As4} hold, and consider $%
\mathbf{q}_{iT}=\alpha _{i}^{-1}\Delta \mathbf{\tilde{Z}}_{i}^{\prime }%
\mathbf{P}_{i}\mathbf{\tilde{v}}_{i}/\sqrt{T}$. Then, 
\begin{equation}
E\left\Vert \mathbf{q}_{iT}\right\Vert _{2}^{4}<K\text{,}  \label{rap}
\end{equation}%
and%
\begin{equation}
\left\vert E\left( \mathbf{q}_{iT}\right) \right\vert <\frac{K}{\sqrt{T}}%
\text{.}  \label{rbp}
\end{equation}
\end{lemma}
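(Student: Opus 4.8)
The plan is to first rewrite $\mathbf{q}_{iT}$ so that Assumption \ref{As4}(ii) can be applied. The key observation is that the column space of the instrument matrix $\mathbf{\tilde{H}}_i$ in (\ref{pi}) equals that of $\mathbf{\tilde{H}}_i^{\ast }=\left( \mathbf{\tilde{x}}_i,\Delta \mathbf{\tilde{x}}_i,\mathbf{\tilde{\xi}}_{i,-1}\right) $ from Assumption \ref{As4}, since $\mathbf{\tilde{y}}_{i,-1}=\mathbf{\tilde{\xi}}_{i,-1}+\beta \mathbf{\tilde{x}}_{i,-1}$ and $\mathbf{\tilde{x}}_{i,-1}=\mathbf{\tilde{x}}_i-\Delta \mathbf{\tilde{x}}_i$, so the change of basis is invertible. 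Because $\mathbf{P}_i$ is invariant both to this change of basis and to the diagonal rescaling $\mathbf{A}_T$, I can write $\mathbf{P}_i=\mathbf{\tilde{H}}_i^{\ast }\mathbf{A}_T\mathbf{G}_{iT}^{-1}\mathbf{A}_T\mathbf{\tilde{H}}_i^{\ast \prime }$ with $\mathbf{G}_{iT}=\mathbf{A}_T\mathbf{\tilde{H}}_i^{\ast \prime }\mathbf{\tilde{H}}_i^{\ast }\mathbf{A}_T$ and $\mathbf{r}_{iT}=\mathbf{A}_T\mathbf{\tilde{H}}_i^{\ast \prime }\mathbf{\tilde{v}}_i$, which gives $\mathbf{q}_{iT}=\alpha _i^{-1}\mathbf{W}_{iT}\mathbf{G}_{iT}^{-1}\mathbf{r}_{iT}$ for $\mathbf{W}_{iT}=T^{-1/2}\Delta \mathbf{\tilde{Z}}_i^{\prime }\mathbf{\tilde{H}}_i^{\ast }\mathbf{A}_T$. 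Assumption \ref{As4}(ii) then supplies $E\left[ \lambda _{\min }^{-2}\left( \mathbf{G}_{iT}\right) \right] <K$.

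Second, I would establish that the building blocks $\mathbf{W}_{iT}$ and $\mathbf{r}_{iT}$ have uniformly (in $i$ and $T$) bounded moments of high order. Using the Granger representation (\ref{ma_y})--(\ref{ma_x}) and repeating the computations of Lemma \ref{lb1}, the chosen normalizations render every entry $O_p(1)$: in $\mathbf{r}_{iT}$ the $I(1)$ instrument $\mathbf{\tilde{x}}_i$ carries the $T^{-1}$ scaling (as $\mathbf{\tilde{x}}_i^{\prime }\mathbf{\tilde{v}}_i=O_p(T)$, consistent with Lemma \ref{sld}), while the $I(0)$ instruments $\Delta \mathbf{\tilde{x}}_i$ and $\mathbf{\tilde{\xi}}_{i,-1}$ carry $T^{-1/2}$, and similarly for $\mathbf{W}_{iT}$. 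The moment orders that can be controlled are dictated by Assumption \ref{As2} (sixteen moments of $v_{it}$ and eight of $u_{x,it}$).

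For the fourth-moment bound (\ref{rap}), I would use idempotence to factor $\Delta \mathbf{\tilde{Z}}_i^{\prime }\mathbf{P}_i\mathbf{\tilde{v}}_i=\left( \mathbf{P}_i\Delta \mathbf{\tilde{Z}}_i\right) ^{\prime }\left( \mathbf{P}_i\mathbf{\tilde{v}}_i\right) $, so that $\left\Vert \mathbf{q}_{iT}\right\Vert ^4\leq \alpha _i^{-4}T^{-2}\left\Vert \mathbf{P}_i\Delta \mathbf{\tilde{Z}}_i\right\Vert _F^4\left\Vert \mathbf{P}_i\mathbf{\tilde{v}}_i\right\Vert ^4$; then bound $\left\Vert \mathbf{P}_i\Delta \mathbf{\tilde{Z}}_i\right\Vert _F\leq \left\Vert \Delta \mathbf{\tilde{Z}}_i\right\Vert _F$ (a contraction, $O_p(\sqrt{T})$) and $\left\Vert \mathbf{P}_i\mathbf{\tilde{v}}_i\right\Vert ^4=\left( \mathbf{r}_{iT}^{\prime }\mathbf{G}_{iT}^{-1}\mathbf{r}_{iT}\right) ^2\leq \lambda _{\min }^{-2}\left( \mathbf{G}_{iT}\right) \left\Vert \mathbf{r}_{iT}\right\Vert ^4$. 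This routes the bound through the squared inverse eigenvalue precisely covered by Assumption \ref{As4}(ii), after which H\"older's inequality combined with the preceding step would close it. I expect the main obstacle to sit exactly here: only the second moment $E[\lambda _{\min }^{-2}]$ is available, so the inverse eigenvalue cannot be split off as an independent H\"older factor against the random $\left\Vert \mathbf{r}_{iT}\right\Vert ^4$ and $\left\Vert \Delta \mathbf{\tilde{Z}}_i\right\Vert _F^4$ at the fourth-moment level; closing the gap relies on the high moments of Assumption \ref{As2} together with the near-martingale structure of $\mathbf{r}_{iT}$ (a sum of the innovations $v_{it}$ against instruments predictable with respect to $v$), which decouples the noise from the instrument-based eigenvalue.

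For the mean bound (\ref{rbp}) I would split $\Delta \mathbf{\tilde{Z}}_i^{\prime }\mathbf{P}_i\mathbf{\tilde{v}}_i$ by the two columns of $\Delta \mathbf{\tilde{Z}}_i$. For the $\Delta \mathbf{\tilde{x}}_i$ column, $\mathbf{P}_i\Delta \mathbf{\tilde{x}}_i=\Delta \mathbf{\tilde{x}}_i$ (it is an instrument), so the contribution is $E\left( \Delta \mathbf{\tilde{x}}_i^{\prime }\mathbf{\tilde{v}}_i\right) =E\left( \Delta \mathbf{x}_i^{\prime }\mathbf{M}_{\tau }\mathbf{v}_i\right) =0$ exactly, because $\Delta x_{it}=u_{x,it}$ is independent of every $v_{is}$ and both are mean zero (Assumption \ref{As2}). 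Hence the entire mean comes from the $\Delta \mathbf{\tilde{y}}_i$ column, where a nonzero mean arises only through the shared $v$-innovations in the instruments $\mathbf{\tilde{y}}_{i,-1},\mathbf{\tilde{\xi}}_{i,-1}$ entering $\mathbf{P}_i$ and in $\Delta y_{it}$. Using the representation of $\Delta y_{it}$, the geometric decay of $E\left( \tilde{v}_{it}\Delta \tilde{y}_{is}\right) $, and the rank-three, $v$-predictable structure of $\mathbf{P}_i$ (whose data dependence is again controlled by Assumption \ref{As4}), I would show $E\left( \Delta \mathbf{\tilde{y}}_i^{\prime }\mathbf{P}_i\mathbf{\tilde{v}}_i\right) =O(1)$ uniformly in $i,T$, so that the explicit $T^{-1/2}$ in $\mathbf{q}_{iT}$ delivers $\left\vert E\left( \mathbf{q}_{iT}\right) \right\vert <K/\sqrt{T}$.
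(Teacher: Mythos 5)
Your setup overlaps with the paper's at several points (the change of basis from $\mathbf{\tilde{H}}_{i}$ to $\mathbf{\tilde{H}}_{i}^{\ast }$, the scaling $\mathbf{A}_{T}$, and the exact zero mean coming from the $\Delta \mathbf{\tilde{x}}_{i}$ component), but your route to the fourth-moment bound (\ref{rap}) has a genuine gap, and it is the one you flag yourself. The factorization $\left\Vert \mathbf{q}_{iT}\right\Vert ^{4}\leq \alpha _{i}^{-4}\left( T^{-1}\left\Vert \Delta \mathbf{\tilde{Z}}_{i}\right\Vert _{F}^{2}\right) ^{2}\lambda _{\min }^{-2}\left( \mathbf{G}_{iT}\right) \left\Vert \mathbf{r}_{iT}\right\Vert ^{4}$ leaves a product of three unbounded random factors in which the eigenvalue factor has only one finite moment under Assumption \ref{As4}($ii$); H\"{o}lder then forces exponent one on that factor and infinite exponents on the other two, so the bound cannot be closed. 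The ``near-martingale decoupling'' you invoke to rescue it is not available: $\mathbf{G}_{iT}$ and $\mathbf{r}_{iT}$ are built from the same innovations ($\mathbf{\tilde{\xi}}_{i,-1}$ contains past $v$'s, and the demeaning in $\mathbf{\tilde{v}}_{i}$ and $\mathbf{\tilde{x}}_{i}$ makes every entry depend on the whole sample path), so conditioning on the instruments does not separate the inverse eigenvalue from the noise sum.

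The paper avoids the three-way product by one algebraic step your proposal is missing: substitute $\Delta \mathbf{\tilde{y}}_{i}=-\alpha _{i}\mathbf{\tilde{\xi}}_{i,-1}+\delta _{i}\Delta \mathbf{\tilde{x}}_{i}+\mathbf{\tilde{v}}_{i}$, equation (\ref{yd}), into $q_{iT,1}$ and use that $\mathbf{P}_{i}$ fixes both $\mathbf{\tilde{\xi}}_{i,-1}$ and $\Delta \mathbf{\tilde{x}}_{i}$. This gives $q_{iT,1}=\varsigma _{a,iT}+\varsigma _{b,iT}+\varsigma _{c,iT}$, where $\varsigma _{a,iT}=-\mathbf{\tilde{\xi}}_{i,-1}^{\prime }\mathbf{\tilde{v}}_{i}/\sqrt{T}$ and $\varsigma _{b,iT}=\alpha _{i}^{-1}\Delta \mathbf{\tilde{x}}_{i}^{\prime }\mathbf{\tilde{v}}_{i}/\sqrt{T}$ never see the projection -- their fourth moments follow from Assumption \ref{As2} with no eigenvalue factor at all -- while the projection survives only in the nonnegative quadratic form $\varsigma _{c,iT}=\mathbf{\tilde{v}}_{i}^{\prime }\mathbf{P}_{i}\mathbf{\tilde{v}}_{i}/\left( \alpha _{i}\sqrt{T}\right) $. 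That term carries its own explicit $T^{-1/2}$ and satisfies $\varsigma _{c,iT}\leq \alpha _{i}^{-1}T^{-1/2}\lambda _{\min }^{-1}\left( \mathbf{G}_{iT}\right) \left\Vert \mathbf{r}_{iT}\right\Vert ^{2}$, so a single two-factor Cauchy--Schwarz pairing the eigenvalue against $\left\Vert \mathbf{r}_{iT}\right\Vert $ (whose elements have eighth moments under Assumption \ref{As2}) suffices; this is the only place Assumption \ref{As4}($ii$) enters. (Even here the paper invokes $E\left[ \lambda _{\min }^{-4}\left( \mathbf{G}_{iT}\right) \right] <K$ ``under Assumption \ref{As4}'' although the assumption literally states the second moment -- so your instinct about moment accounting is partly right -- but the essential structural point is that the eigenvalue is only ever paired with one high-moment factor, never two.) The same decomposition is what delivers the mean bound (\ref{rbp}): $E\left( \varsigma _{b,iT}\right) =0$ exactly, $\left\vert E\left( \varsigma _{a,iT}\right) \right\vert \leq K/\sqrt{T}$ because $E\left( \xi _{i,t-1}v_{it}\right) =0$ and the demeaning cross-terms are $O\left( T^{-1}\right) $, and $\left\vert E\left( \varsigma _{c,iT}\right) \right\vert \leq \left[ E\left( \varsigma _{c,iT}^{2}\right) \right] ^{1/2}\leq K/\sqrt{T}$. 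Your sketch for (\ref{rbp}) points in this direction (the $\Delta \mathbf{\tilde{x}}_{i}$ column is handled identically), but the $\Delta \mathbf{\tilde{y}}_{i}$ column is left as an unproven $O(1)$ claim about a ``$v$-predictable'' projection; making it precise leads exactly to this three-term split.
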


\begin{proof}
Denote the individual elements of $2\times 1$ vector $\mathbf{q}_{iT}$ as $%
q_{iT,j}$, $j=1,2$. Sufficient conditions for (\ref{rap}) to hold are 
\begin{equation}
E\left( q_{iT,j}\right) ^{4}<K\text{, for }j=1,2\text{.}  \label{ra1}
\end{equation}%
We establish (\ref{ra1}) for $j=1$ first. We have%
\begin{equation*}
q_{iT,1}=\frac{\Delta \mathbf{\tilde{y}}_{i}^{\prime }\mathbf{P}_{i}\mathbf{%
\tilde{v}}_{i}}{\alpha _{i}\sqrt{T}}\text{,}
\end{equation*}%
where $\Delta \mathbf{y}_{i}$ can be written as%
\begin{equation}
\Delta \mathbf{\tilde{y}}_{i}=-\alpha _{i}\mathbf{\tilde{\xi}}_{i,-1}+\delta
_{i}\Delta \mathbf{\tilde{x}}_{i}+\mathbf{v}_{i}\text{,}  \label{yd}
\end{equation}%
where $\mathbf{\tilde{\xi}}_{i,-1}=\mathbf{\tilde{y}}_{i,-1}-\mathbf{\tilde{x%
}}_{i,-1}$. Note that $\mathbf{P}_{i}=\mathbf{\tilde{H}}_{i}\left( \mathbf{%
\tilde{H}}_{i}^{\prime }\mathbf{\tilde{H}}_{i}\right) ^{-1}\mathbf{\tilde{H}}%
_{i}^{\prime }$ and $\mathbf{\tilde{H}}_{i}=\left( \mathbf{\tilde{y}}_{i,-1},%
\mathbf{\tilde{x}}_{i},\mathbf{\tilde{x}}_{i,-1}\right) $. Hence $\Delta 
\mathbf{\tilde{x}}_{i}^{\prime }\mathbf{P}_{i}=\Delta \mathbf{\tilde{x}}%
_{i}^{\prime }$ and $\mathbf{\tilde{\xi}}_{i,-1}^{\prime }\mathbf{P}_{i}=%
\mathbf{\tilde{\xi}}_{i,-1}^{\prime }$, since $\Delta \mathbf{\tilde{x}}_{i}$
and $\mathbf{\tilde{\xi}}_{i,-1}$ can be both obtained as a linear
combinations of the column vectors of $\mathbf{\tilde{H}}_{i}$. Hence%
\begin{equation}
q_{iT,1}=-\frac{\mathbf{\tilde{\xi}}_{i,-1}^{\prime }\mathbf{\tilde{v}}_{i}}{%
\sqrt{T}}+\frac{\Delta \mathbf{\tilde{x}}_{i}^{\prime }\mathbf{\tilde{v}}_{i}%
}{\alpha _{i}\sqrt{T}}+\frac{\mathbf{\tilde{v}}_{i}^{\prime }\mathbf{P}_{i}%
\mathbf{\tilde{v}}_{i}}{\alpha _{i}\sqrt{T}}\equiv \varsigma
_{a,iT}+\varsigma _{b,iT}+\varsigma _{c,iT}\text{,}  \label{q1e}
\end{equation}%
where we simplified notations by introducing $\varsigma _{a,iT}=-\mathbf{%
\tilde{\xi}}_{i,-1}^{\prime }\mathbf{\tilde{v}}_{i}/\sqrt{T}$, $\varsigma
_{b,iT}=\alpha _{i}^{-1}\Delta \mathbf{\tilde{x}}_{i}^{\prime }\mathbf{%
\tilde{v}}_{i}/\sqrt{T}$ and $\varsigma _{c,iT}=\alpha _{i}^{-1}\Delta 
\mathbf{\tilde{v}}_{i}^{\prime }\mathbf{P}_{i}\mathbf{\tilde{v}}_{i}/\sqrt{T}
$ to denote the individual terms in the expression (\ref{q1e}) for $q_{iT,1}$%
. Sufficient conditions for $E\left( q_{iT,1}^{4}\right) <K$ are $E\left(
\varsigma _{s,iT}^{4}\right) <K$ for $s\in \left\{ a,b,c\right\} $.

For $s=a$, we have%
\begin{equation*}
\varsigma _{a,iT}=-\frac{\mathbf{\tilde{\xi}}_{i,-1}^{\prime }\mathbf{\tilde{%
v}}_{i}}{\sqrt{T}}=-\frac{1}{\sqrt{T}}\sum_{i=1}^{T}\left( \xi _{i,t-1}-\bar{%
\xi}_{i,-1}\right) \left( v_{it}-\bar{v}_{i}\right) =-\frac{1}{\sqrt{T}}%
\sum_{i=1}^{T}\xi _{i,t-1}v_{it}+\sqrt{T}\bar{\xi}_{i,-1}\bar{v}_{i}\text{,}
\end{equation*}%
where $\bar{\xi}_{i,-1}=T^{-1}\sum_{t=1}^{T}\xi _{i,t-1}$, and%
\begin{eqnarray*}
\xi _{it} &=&\sum_{\ell =0}^{\infty }\left( 1-\alpha _{i}\right) ^{\ell
}\left( u_{y,i,t-\ell }-\beta u_{x,i,t-\ell }\right) \text{,} \\
&=&\sum_{\ell =0}^{\infty }\left( 1-\alpha _{i}\right) ^{\ell }\left( \delta
_{i}-\beta \right) u_{x,i,t-\ell }+\sum_{\ell =0}^{\infty }\left( 1-\alpha
_{i}\right) ^{\ell }v_{it}\text{.}
\end{eqnarray*}%
Noting that $\sup_{i}\left\vert 1-\alpha _{i}\right\vert <1$ under
Assumption \ref{As1}, and fourth moments of $u_{x,i,t}$ and eights moments
of $v_{it}$ are bounded, we obtain%
\begin{equation*}
E\left[ \left( \frac{1}{\sqrt{T}}\sum_{i=1}^{T}\xi _{i,t-1}v_{it}\right) ^{4}%
\right] \leq K\text{,}
\end{equation*}%
and%
\begin{equation*}
T^{2}\cdot E\left( \bar{\xi}_{i,-1}^{4}\bar{v}_{i}^{4}\right) \leq K\text{,}
\end{equation*}%
which are sufficient conditions for $E\left( \varsigma _{a,iT}^{4}\right)
\leq K$.

For $s=b$, we have%
\begin{equation*}
\varsigma _{b,iT}=\frac{\Delta \mathbf{\tilde{x}}_{i}^{\prime }\mathbf{%
\tilde{v}}_{i}}{\alpha _{i}\sqrt{T}}=\frac{1}{\alpha _{i}\sqrt{T}}%
\sum_{t=1}^{T}\left( u_{x,it}-\bar{u}_{x,i}\right) \left( v_{it}-\bar{v}%
_{i}\right) =\frac{1}{\alpha _{i}\sqrt{T}}\sum_{t=1}^{T}u_{x,it}v_{it}-\frac{%
\sqrt{T}}{\alpha _{i}}\bar{u}_{x,i}\bar{v}_{i}\text{.}
\end{equation*}%
Using Assumption \ref{As2}, we obtain the following upper bound%
\begin{equation}
E\left( \varsigma _{b,iT}^{4}\right) \leq \left\vert \alpha
_{i}^{-4}\right\vert \frac{1}{T}\sum_{t=1}^{T}E\left( u_{x,it}^{4}\right)
E\left( v_{it}^{4}\right) +\left\vert \alpha _{i}^{-4}\right\vert TE\left( 
\bar{u}_{x,i}^{4}\right) E\left( \bar{v}_{i}^{4}\right) \leq K\text{,}
\label{rvb}
\end{equation}%
where $\left\vert \alpha _{i}^{-4}\right\vert <K$, $E\left(
u_{x,it}^{4}\right) <K$, $E\left( v_{it}^{4}\right) <K$, $E\left( \bar{u}%
_{x,i}^{4}\right) <K/T^{2}$, and $E\left( \bar{v}_{i}^{4}\right) <K/T^{2}$.

For $s=c$, we have%
\begin{equation*}
\varsigma _{c,iT}=\frac{\mathbf{\tilde{v}}_{i}^{\prime }\mathbf{P}_{i}%
\mathbf{\tilde{v}}_{i}}{\alpha _{i}\sqrt{T}}=\frac{\Delta \mathbf{\tilde{v}}%
_{i}^{\prime }\mathbf{\tilde{H}}_{i}\left( \mathbf{\tilde{H}}_{i}^{\prime }%
\mathbf{\tilde{H}}_{i}\right) ^{-1}\mathbf{\tilde{H}}_{i}^{\prime }\mathbf{%
\tilde{v}}_{i}}{\alpha _{i}\sqrt{T}}\text{.}
\end{equation*}%
Consider $\mathbf{\tilde{H}}_{i}^{\ast }=\left( \mathbf{\tilde{x}}%
_{i},\Delta \mathbf{\tilde{x}}_{i},\mathbf{\tilde{\xi}}_{i,-1}\right) $ and
note that%
\begin{equation*}
\mathbf{\tilde{H}}_{i}=\left( \mathbf{\tilde{y}}_{i,-1},\mathbf{\tilde{x}}%
_{i},\mathbf{\tilde{x}}_{i,-1}\right) =\mathbf{B}^{\ast }\mathbf{\tilde{H}}%
_{i}^{\ast }\text{,}
\end{equation*}%
where 
\begin{equation*}
\mathbf{B}^{\ast }=\left( 
\begin{array}{ccc}
-\beta & \beta & 1 \\ 
1 & 0 & 0 \\ 
-1 & 1 & 0%
\end{array}%
\right) \text{,}
\end{equation*}%
is nonsingular (for any $\beta $). Hence $\mathbf{P}_{i}=\mathbf{\tilde{H}}%
_{i}\left( \mathbf{\tilde{H}}_{i}^{\prime }\mathbf{\tilde{H}}_{i}\right)
^{-1}\mathbf{\tilde{H}}_{i}^{\prime }=\mathbf{\tilde{H}}_{i}^{\ast }\left( 
\mathbf{\tilde{H}}_{i}^{\ast \prime }\mathbf{\tilde{H}}_{i}^{\ast }\right)
^{-1}\mathbf{\tilde{H}}_{i}^{\ast \prime }$, and we can write $\varsigma
_{c,iT}$ as%
\begin{equation*}
\varsigma _{c,iT}=\frac{\mathbf{\tilde{v}}_{i}^{\prime }\mathbf{\tilde{H}}%
_{i}^{\ast }\left( \mathbf{\tilde{H}}_{i}^{\ast \prime }\mathbf{\tilde{H}}%
_{i}^{\ast }\right) ^{-1}\mathbf{\tilde{H}}_{i}^{\ast \prime }\mathbf{\tilde{%
v}}_{i}}{\alpha _{i}\sqrt{T}}\text{.}
\end{equation*}%
Consider the scaling matrix 
\begin{equation}
\mathbf{A}_{T}=\left( 
\begin{array}{ccc}
T^{-1} & 0 & 0 \\ 
0 & T^{-1/2} & 0 \\ 
0 & 0 & T^{-1/2}%
\end{array}%
\right) \text{.}
\end{equation}%
We have%
\begin{equation*}
\varsigma _{c,iT}=\frac{1}{\alpha _{i}\sqrt{T}}\mathbf{\tilde{v}}%
_{i}^{\prime }\mathbf{\tilde{H}}_{i}^{\ast }\mathbf{A}_{T}\left( \mathbf{A}%
_{T}\mathbf{\tilde{H}}_{i}^{\ast \prime }\mathbf{\tilde{H}}_{i}^{\ast }%
\mathbf{A}_{T}\right) ^{-1}\mathbf{A}_{T}\mathbf{\tilde{H}}_{i}^{\ast \prime
}\mathbf{\tilde{v}}_{i}\geq 0\text{.}
\end{equation*}%
Using the inequality $\mathbf{x}^{\prime }\mathbf{A}^{-1}\mathbf{x}\leq
\lambda _{\min }\left( \mathbf{A}\right) \left\Vert \mathbf{x}\right\Vert
^{2}$, we have%
\begin{equation*}
0\leq \varsigma _{c,iT}\leq \frac{1}{\alpha _{i}\sqrt{T}}\lambda _{\min
}^{-1}\left( \mathbf{A}_{T}\mathbf{\tilde{H}}_{i}^{\ast \prime }\mathbf{%
\tilde{H}}_{i}^{\ast }\mathbf{A}_{T}\right) \left\Vert \mathbf{A}_{T}\mathbf{%
\tilde{H}}_{i}^{\ast \prime }\mathbf{\tilde{v}}_{i}\right\Vert _{2}^{2}\text{%
.}
\end{equation*}%
Using Cauchy-Schwarz inequality, we obtain%
\begin{equation*}
E\left( \varsigma _{c,iT}^{4}\right) \leq \frac{1}{\alpha _{i}^{4}T^{2}}%
\sqrt{E\left[ \lambda _{\min }^{-4}\left( \mathbf{A}_{T}\mathbf{\tilde{H}}%
_{i}^{\ast \prime }\mathbf{\tilde{H}}_{i}^{\ast }\mathbf{A}_{T}\right) %
\right] }\sqrt{E\left\Vert \mathbf{A}_{T}\mathbf{\tilde{H}}_{i}^{\ast \prime
}\mathbf{\tilde{v}}_{i}\right\Vert _{2}^{8}}\text{.}
\end{equation*}%
But $\alpha _{i}^{-4}<K$ under Assumption \ref{As1}, and $E\left[ \lambda
_{\min }^{-4}\left( \mathbf{A}_{T}\mathbf{\tilde{H}}_{i}^{\ast \prime }%
\mathbf{\tilde{H}}_{i}^{\ast }\mathbf{A}_{T}\right) \right] <K$ under
Assumption \ref{As4}. It follows%
\begin{equation*}
E\left( \varsigma _{c,iT}^{4}\right) \leq \frac{K}{T^{2}}\sqrt{E\left\Vert 
\mathbf{A}_{T}\mathbf{\tilde{H}}_{i}^{\ast \prime }\mathbf{\tilde{v}}%
_{i}\right\Vert _{2}^{8}}\text{.}
\end{equation*}%
Let $\mathbf{A}_{T}\mathbf{\tilde{H}}_{i}^{\ast \prime }\mathbf{\tilde{v}}%
_{i}\equiv \mathbf{h}_{viT}$ and consider the individual elements of $%
\mathbf{h}_{viT}$, denoted as $h_{viT,j}$ for $j=1,2,3$,%
\begin{equation*}
\mathbf{h}_{viT}=\mathbf{A}_{T}\mathbf{\tilde{H}}_{i}^{\ast \prime }\mathbf{%
\tilde{v}}_{i}=\left( 
\begin{array}{c}
h_{viT,1} \\ 
h_{viT,2} \\ 
h_{viT,3}%
\end{array}%
\right) =\left( 
\begin{array}{c}
\frac{1}{T}\sum_{t=1}^{T}\tilde{x}_{it}\tilde{v}_{it} \\ 
\frac{1}{\sqrt{T}}\sum_{t=1}^{T}\tilde{u}_{it}\tilde{v}_{it} \\ 
\frac{1}{\sqrt{T}}\sum_{t=1}^{T}\bar{\xi}_{i,t-1}\tilde{v}_{it}%
\end{array}%
\right) \text{.}
\end{equation*}%
Under Assumption \ref{As2}, it can be shown that%
\begin{equation*}
E\left( h_{viT,j}^{8}\right) <K\text{, for }j=1,2,3\text{,}
\end{equation*}%
which is sufficient for $E\left\Vert \mathbf{A}_{T}\mathbf{\tilde{H}}%
_{i}^{\ast \prime }\mathbf{\tilde{v}}_{i}\right\Vert _{2}^{8}<K$. It follows
that 
\begin{equation}
E\left( \varsigma _{c,iT}^{4}\right) <\frac{K}{T^{2}}\text{.}  \label{zc}
\end{equation}%
This completes the proof of (\ref{rap}) for $j=1$. Consider next (\ref{rap})
for $j=2$, and note $q_{iT,2}$ is the same as $\varsigma _{b,iT}$, namely 
\begin{equation*}
q_{iT,2}=\frac{\Delta \mathbf{\tilde{x}}_{i}^{\prime }\mathbf{P}_{i}\mathbf{%
\tilde{v}}_{i}}{\alpha _{i}\sqrt{T}}=\frac{\Delta \mathbf{\tilde{x}}%
_{i}^{\prime }\mathbf{\tilde{v}}_{i}}{\alpha _{i}\sqrt{T}}=\varsigma _{b,iT}%
\text{.}
\end{equation*}%
But $E\left( \varsigma _{b,iT}^{4}\right) <K$, see (\ref{rvb}). This
completes the proof of (\ref{rap}).

Next we establish (\ref{rbp}). As before we consider the individual elements
of $2\times 1$ vector $\mathbf{q}_{iT}$, denoted as $q_{iT,s}$ for $s=1,2$,
separately. For $s=1$ we have (using the individual terms in expression (\ref%
{q1e})) 
\begin{equation}
\left\vert E\left( q_{iT,1}\right) \right\vert =\left\vert E\left( -\frac{%
\mathbf{\tilde{\xi}}_{i,-1}^{\prime }\mathbf{\tilde{v}}_{i}}{\sqrt{T}}+\frac{%
\Delta \mathbf{\tilde{x}}_{i}^{\prime }\mathbf{\tilde{v}}_{i}}{\alpha _{i}%
\sqrt{T}}+\frac{\Delta \mathbf{\tilde{v}}_{i}^{\prime }\mathbf{P}_{i}\mathbf{%
\tilde{v}}_{i}}{\alpha _{i}\sqrt{T}}\right) \right\vert \leq \left\vert
E\left( \varsigma _{a,iT}\right) \right\vert +\left\vert E\left( \varsigma
_{b,iT}\right) \right\vert +\left\vert E\left( \varsigma _{c,iT}\right)
\right\vert \text{.}  \label{rbp1}
\end{equation}%
For the first term in (\ref{rbp1}), we obtain%
\begin{eqnarray*}
\left\vert E\left( \varsigma _{a,iT}\right) \right\vert &=&\left\vert E 
\left[ -\frac{\mathbf{\tilde{\xi}}_{i,-1}^{\prime }\mathbf{\tilde{v}}_{i}}{%
\sqrt{T}}=-\frac{1}{\sqrt{T}}\sum_{i=1}^{T}\left( \xi _{i,t-1}-\bar{\xi}%
_{i,-1}\right) \left( v_{it}-\bar{v}_{i}\right) \right] \right\vert \text{,}
\\
&\leq &\frac{1}{\sqrt{T}}\sum_{i=1}^{T}\left\vert E\left( \xi
_{i,t-1}v_{it}\right) \right\vert +\sqrt{T}E\left\vert \bar{\xi}_{i,-1}\bar{v%
}_{i}\right\vert \text{.}
\end{eqnarray*}%
But $E\left( \xi _{i,t-1}v_{it}\right) =0$ and $E\left\vert \bar{\xi}_{i,-1}%
\bar{v}_{i}\right\vert <K/T$ under Assumptions \ref{As1}-\ref{As2}. Hence,%
\begin{equation*}
\left\vert E\left( \varsigma _{a,iT}\right) \right\vert \leq \frac{K}{\sqrt{T%
}}\text{.}
\end{equation*}%
For the second term in (\ref{rbp1}), we obtain%
\begin{eqnarray*}
\left\vert E\left( \varsigma _{b,iT}\right) \right\vert &=&\left\vert
E\left( \frac{1}{\alpha _{i}\sqrt{T}}\sum_{t=1}^{T}u_{x,it}v_{it}-\frac{%
\sqrt{T}}{\alpha _{i}}\bar{u}_{x,i}\bar{v}_{i}\right) \right\vert \text{,} \\
&\leq &K\frac{1}{\sqrt{T}}\sum_{t=1}^{T}\left\vert E\left(
u_{x,it}v_{it}\right) \right\vert +K\sqrt{T}\left\vert E\left( \bar{u}_{x,i}%
\bar{v}_{i}\right) \right\vert \text{.}
\end{eqnarray*}%
But $E\left( u_{x,it}v_{it}\right) =0$ and $E\left( \bar{u}_{x,i}\bar{v}%
_{i}\right) =0$ under Assumption \ref{As2}. Hence%
\begin{equation*}
\left\vert E\left( \varsigma _{b,iT}\right) \right\vert =0\text{.}
\end{equation*}%
Finally, for the last term we note that 
\begin{equation*}
\left\vert E\left( \varsigma _{c,iT}\right) \right\vert \leq E\left\vert
\varsigma _{c,iT}\right\vert \leq \sqrt{E\left( \varsigma _{c,iT}^{2}\right) 
}\text{,}
\end{equation*}%
and using result (\ref{zc}), we obtain%
\begin{equation*}
\left\vert E\left( \varsigma _{c,iT}\right) \right\vert <\frac{K}{\sqrt{T}}%
\text{.}
\end{equation*}%
It now follows that $\left\vert E\left( q_{iT,1}\right) \right\vert <K/\sqrt{%
T}$, as desired.

Consider $\left\vert E\left( q_{iT,s}\right) \right\vert $ for $s=2$ next.
We have%
\begin{equation*}
\left\vert E\left( q_{iT,2}\right) \right\vert =\left\vert E\left( \varsigma
_{b,iT}\right) \right\vert =0\text{.}
\end{equation*}%
This completes the proof of result (\ref{rbp}).
\end{proof}

\bigskip

\begin{lemma}
\label{lh2} Let Assumptions \ref{As1}-\ref{As4} hold, and consider $\mathbf{B%
}_{iT}$ defined by (\ref{bit}). Then we have 
\begin{equation}
T^{\varphi }\left\Vert \mathbf{B}_{iT}-\mathbf{B}_{i}\right\Vert \rightarrow
_{p}0\text{ as }T\rightarrow \infty \text{, for any }\varphi <1/2\text{,}
\label{lh2r}
\end{equation}%
where 
\begin{equation}
\mathbf{B}_{i}=\limfunc{plim}_{T\rightarrow \infty }\mathbf{B}_{iT}=\left( 
\begin{array}{cc}
\alpha _{i}^{2}E\left( \xi _{it}^{2}\right) +\delta _{i}^{2}\sigma _{xi}^{2}
& \delta _{i}\sigma _{xi}^{2} \\ 
\delta _{i}\sigma _{xi}^{2} & \sigma _{xi}^{2}%
\end{array}%
\right) \text{,}  \label{bi}
\end{equation}%
and $\xi _{it}=\sum_{\ell =0}^{\infty }\left( 1-\alpha _{i}\right) ^{\ell
}\left( u_{y,i,t-\ell }-\beta u_{x,i,t-\ell }\right) $.
\end{lemma}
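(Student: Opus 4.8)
The plan is to exploit the projection geometry already set up in the proof of Lemma \ref{lra}, namely that $\mathbf{P}_{i}$ is the orthogonal projector onto the span of $\mathbf{\tilde{H}}_{i}^{\ast }=\left( \mathbf{\tilde{x}}_{i},\Delta \mathbf{\tilde{x}}_{i},\mathbf{\tilde{\xi}}_{i,-1}\right) $, so that $\mathbf{P}_{i}\Delta \mathbf{\tilde{x}}_{i}=\Delta \mathbf{\tilde{x}}_{i}$ and $\mathbf{P}_{i}\mathbf{\tilde{\xi}}_{i,-1}=\mathbf{\tilde{\xi}}_{i,-1}$. Writing the columns of $\Delta \mathbf{\tilde{Z}}_{i}$ as $\left( \Delta \mathbf{\tilde{y}}_{i},\Delta \mathbf{\tilde{x}}_{i}\right) $ and substituting (\ref{yd}) in the form $\Delta \mathbf{\tilde{y}}_{i}=\mathbf{w}_{i}+\mathbf{\tilde{v}}_{i}$ with $\mathbf{w}_{i}=-\alpha _{i}\mathbf{\tilde{\xi}}_{i,-1}+\delta _{i}\Delta \mathbf{\tilde{x}}_{i}$ (so $\mathbf{P}_{i}\mathbf{w}_{i}=\mathbf{w}_{i}$), the three distinct entries of $\mathbf{B}_{iT}$ collapse to
\begin{align*}
\left( \mathbf{B}_{iT}\right) _{22}&=T^{-1}\Delta \mathbf{\tilde{x}}_{i}^{\prime }\Delta \mathbf{\tilde{x}}_{i}\text{,} \\
\left( \mathbf{B}_{iT}\right) _{12}&=T^{-1}\mathbf{w}_{i}^{\prime }\Delta \mathbf{\tilde{x}}_{i}+T^{-1}\mathbf{\tilde{v}}_{i}^{\prime }\Delta \mathbf{\tilde{x}}_{i}\text{,} \\
\left( \mathbf{B}_{iT}\right) _{11}&=T^{-1}\mathbf{w}_{i}^{\prime }\mathbf{w}_{i}+2T^{-1}\mathbf{w}_{i}^{\prime }\mathbf{\tilde{v}}_{i}+T^{-1}\mathbf{\tilde{v}}_{i}^{\prime }\mathbf{P}_{i}\mathbf{\tilde{v}}_{i}\text{,}
\end{align*}
so that, apart from the single quadratic remainder $T^{-1}\mathbf{\tilde{v}}_{i}^{\prime }\mathbf{P}_{i}\mathbf{\tilde{v}}_{i}$, every term is an explicit sample second moment of the stationary sequences $u_{x,it}$, $\xi _{i,t-1}$ and $v_{it}$.

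Next I would read off the probability limits term by term. By a law of large numbers for the i.i.d. sequence $u_{x,it}$ (with $\Delta x_{it}=u_{x,it}$), $T^{-1}\Delta \mathbf{\tilde{x}}_{i}^{\prime }\Delta \mathbf{\tilde{x}}_{i}\rightarrow _{p}\sigma _{xi}^{2}$, and since $\xi _{it}$ in (\ref{ma_y}) is a covariance-stationary linear process with geometrically summable coefficients $\left( 1-\alpha _{i}\right) ^{\ell }$ (Assumption \ref{As1}), $T^{-1}\mathbf{\tilde{\xi}}_{i,-1}^{\prime }\mathbf{\tilde{\xi}}_{i,-1}\rightarrow _{p}E\left( \xi _{it}^{2}\right) $. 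The cross terms $T^{-1}\mathbf{\tilde{\xi}}_{i,-1}^{\prime }\Delta \mathbf{\tilde{x}}_{i}$, $T^{-1}\mathbf{\tilde{v}}_{i}^{\prime }\Delta \mathbf{\tilde{x}}_{i}$ and $T^{-1}\mathbf{w}_{i}^{\prime }\mathbf{\tilde{v}}_{i}$ each vanish because $E\left( \xi _{i,t-1}u_{x,it}\right) =0$, $E\left( u_{x,it}v_{it}\right) =0$ and $E\left( \xi _{i,t-1}v_{it}\right) =0$ under Assumption \ref{As2} (independence of $v_{it}$ from the contemporaneous $u_{x,it}$ and from all past shocks), while $T^{-1}\mathbf{\tilde{v}}_{i}^{\prime }\mathbf{P}_{i}\mathbf{\tilde{v}}_{i}=\alpha _{i}T^{-1/2}\varsigma _{c,iT}\rightarrow _{p}0$ by the bound (\ref{zc}) of Lemma \ref{lra}. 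Collecting the surviving limits reproduces exactly the entries of $\mathbf{B}_{i}$ in (\ref{bi}).

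To upgrade this to the rate in (\ref{lh2r}), the plan is to show $E\left\Vert \mathbf{B}_{iT}-\mathbf{B}_{i}\right\Vert ^{2}=O\left( T^{-1}\right) $, after which Markov's inequality gives $P\left( T^{\varphi }\left\Vert \mathbf{B}_{iT}-\mathbf{B}_{i}\right\Vert >\epsilon \right) =O\left( T^{2\varphi -1}\right) \rightarrow 0$ for every $\varphi <1/2$. Each mean-zero martingale-difference cross term, such as $T^{-1}\sum_{t}\xi _{i,t-1}u_{x,it}$ and $T^{-1}\sum_{t}v_{it}u_{x,it}$, has second moment $O\left( T^{-1}\right) $ directly from the orthogonality together with finite fourth moments of the shocks; the demeaning corrections involving $\bar{u}_{x,i}$ and $\bar{\xi}_{i,-1}$ are $O_{p}\left( T^{-1}\right) $ and hence negligible; and the projection remainder satisfies $E[ ( T^{-1}\mathbf{\tilde{v}}_{i}^{\prime }\mathbf{P}_{i}\mathbf{\tilde{v}}_{i}) ^{2}] =\alpha _{i}^{2}T^{-1}E( \varsigma _{c,iT}^{2}) \leq \alpha _{i}^{2}T^{-2}\sqrt{K}=O\left( T^{-2}\right) $ again by (\ref{zc}).

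The hard part will be the two genuine quadratic forms $T^{-1}\sum_{t}\xi _{i,t-1}^{2}$ and $T^{-1}\sum_{t}u_{x,it}^{2}$: unlike the cross terms these are not martingale differences, so to obtain the $O\left( T^{-1}\right) $ variance I must show that $\left\{ \xi _{it}^{2}\right\} $ has absolutely summable autocovariances. This holds because $\xi _{it}$ is a causal linear process in i.i.d. shocks with coefficients decaying like $\left\vert 1-\alpha _{i}\right\vert ^{\ell }$ and with finite fourth moments (guaranteed by the bounded eighth moments of $u_{x,it}$ and sixteenth moments of $v_{it}$ in Assumption \ref{As2}), so $\mathrm{Cov}\left( \xi _{i,t-1}^{2},\xi _{i,s-1}^{2}\right) $ decays geometrically in $\left\vert t-s\right\vert $ and $\mathrm{Var}( T^{-1}\sum_{t}\xi _{i,t-1}^{2}) =T^{-2}\sum_{t,s}\mathrm{Cov}( \xi _{i,t-1}^{2},\xi _{i,s-1}^{2}) =O\left( T^{-1}\right) $; the i.i.d. term $T^{-1}\sum_{t}u_{x,it}^{2}$ is the trivial special case. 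Assembling these bounds yields $E\left\Vert \mathbf{B}_{iT}-\mathbf{B}_{i}\right\Vert ^{2}=O\left( T^{-1}\right) $ and hence (\ref{lh2r}).
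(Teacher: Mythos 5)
Your proposal is correct and follows essentially the same route as the paper's proof: the same entrywise decomposition of $\mathbf{B}_{iT}$ via the projection identities $\mathbf{P}_{i}\Delta \mathbf{\tilde{x}}_{i}=\Delta \mathbf{\tilde{x}}_{i}$ and $\mathbf{P}_{i}\mathbf{\tilde{\xi}}_{i,-1}=\mathbf{\tilde{\xi}}_{i,-1}$ (your $T^{-1}\mathbf{w}_{i}^{\prime }\mathbf{w}_{i}$ just groups the paper's $\zeta _{aa,iT}+\zeta _{bb,iT}+2\zeta _{ab,iT}$), the same treatment of the remainder $T^{-1}\mathbf{\tilde{v}}_{i}^{\prime }\mathbf{P}_{i}\mathbf{\tilde{v}}_{i}$ via the bound (\ref{zc}), and the same rate claim for the stationary sample moments. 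The only difference is one of completeness: where the paper simply asserts $T^{\varphi }[T^{-1}\sum_{t}\xi _{i,t-1}^{2}-E(\xi _{i,t-1}^{2})]\rightarrow _{p}0$ from the linear-process structure, you spell out the summable-autocovariance variance bound $O(T^{-1})$ and the Chebyshev step, which is exactly the argument that assertion rests on.
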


\begin{proof}
We have%
\begin{equation*}
\mathbf{B}_{iT}=\frac{\Delta \mathbf{\tilde{Z}}_{i}^{\prime }\mathbf{P}%
_{i}\Delta \mathbf{\tilde{Z}}_{i}}{T}=\frac{1}{T}\left( 
\begin{array}{cc}
\Delta \mathbf{\tilde{y}}_{i}^{\prime }\mathbf{P}_{i}\Delta \mathbf{\tilde{y}%
}_{i} & \Delta \mathbf{\tilde{y}}_{i}^{\prime }\mathbf{P}_{i}\Delta \mathbf{%
\tilde{x}}_{i} \\ 
\Delta \mathbf{\tilde{x}}_{i}^{\prime }\mathbf{P}_{i}\Delta \mathbf{\tilde{y}%
}_{i} & \Delta \mathbf{\tilde{x}}_{i}^{\prime }\mathbf{P}_{i}\Delta \mathbf{%
\tilde{x}}_{i}%
\end{array}%
\right) =\left( 
\begin{array}{cc}
b_{iT,11} & b_{iT,12} \\ 
b_{iT,21} & b_{iT,22}%
\end{array}%
\right) \text{.}
\end{equation*}%
Consider the element $b_{iT,22}$ first. Since $\Delta \mathbf{\tilde{x}}%
_{i}^{\prime }\mathbf{P}_{i}=\Delta \mathbf{\tilde{x}}_{i}^{\prime }$, and $%
\Delta \tilde{x}_{it}=u_{x,it}-\bar{u}_{x,i}$, we have%
\begin{equation*}
b_{iT,12}=\left( \frac{1}{T}\sum_{t=1}^{T}u_{x,it}^{2}\right) -\bar{u}%
_{x,i}^{2}\text{.}
\end{equation*}%
Under Assumption \ref{As2}, $u_{x,it}\sim IID\left( 0,\sigma
_{xi}^{2}\right) $ with finite fourth order moments, and therefore%
\begin{equation*}
T^{\varphi }\left( \frac{1}{T}\sum_{t=1}^{T}u_{x,it}^{2}-\sigma
_{xi}^{2}\right) \overset{p}{\rightarrow }0\text{, for any }\varphi <1/2%
\text{. }
\end{equation*}%
In addition, $E\left( \bar{u}_{x,i}^{2}\right) <K/T$, which implies $%
T^{\varphi }\bar{u}_{x,i}^{2}\overset{p}{\rightarrow }0$, for any $\varphi
<1/2$. It follows 
\begin{equation}
T^{\varphi }\left( b_{iT,22}-\sigma _{xi}^{2}\right) \overset{p}{\rightarrow 
}0\text{, for any }\varphi <1/2\text{. }  \label{b22}
\end{equation}%
Consider the element $b_{iT,11}$ next. We will use similar arguments as in
the proof of Lemma \ref{lra}. In particular, $\Delta \mathbf{\tilde{y}}_{i}$
can be written as in (\ref{yd}), and, since $\mathbf{P}_{i}\mathbf{\tilde{\xi%
}}_{i,-1}=\mathbf{\tilde{\xi}}_{i,-1}$ and $\mathbf{P}_{i}\Delta \mathbf{%
\tilde{x}}_{i}=\Delta \mathbf{\tilde{x}}_{i}$, we have%
\begin{equation}
b_{i,T,11}=\frac{\Delta \mathbf{\tilde{y}}_{i}^{\prime }\mathbf{P}_{i}\Delta 
\mathbf{\tilde{y}}_{i}}{T}=\zeta _{aa,iT}+\zeta _{bb,iT}+\zeta
_{cc,iT}+2\zeta _{ab,iT}+2\zeta _{ac,iT}+2\zeta _{bc,iT}\text{,}
\label{b11s}
\end{equation}%
where%
\begin{eqnarray*}
\zeta _{aa,iT} &=&\alpha _{i}^{2}\frac{\mathbf{\tilde{\xi}}_{i,-1}^{\prime }%
\mathbf{\tilde{\xi}}_{i,-1}}{T}\text{, }\zeta _{bb,iT}=\delta _{i}^{2}\frac{%
\Delta \mathbf{\tilde{x}}_{i}^{\prime }\Delta \mathbf{\tilde{x}}_{i}}{T}%
\text{,} \\
\zeta _{cc,iT} &=&\frac{\mathbf{\tilde{v}}_{i}^{\prime }\mathbf{P}_{i}%
\mathbf{\tilde{v}}_{i}}{T}\text{,}
\end{eqnarray*}%
and the cross-product terms are%
\begin{equation*}
\zeta _{ab,iT}=\alpha _{i}\delta _{i}\frac{\mathbf{\tilde{\xi}}%
_{i,-1}^{\prime }\Delta \mathbf{\tilde{x}}_{i}}{T},\text{ }\zeta
_{ac,iT}=\alpha _{i}\frac{\mathbf{\tilde{\xi}}_{i,-1}^{\prime }\mathbf{%
\tilde{v}}_{i}}{T}\text{, }\zeta _{bc,iT}=\delta _{i}\frac{\Delta \mathbf{%
\tilde{x}}_{i}^{\prime }\mathbf{\tilde{v}}_{i}}{T}\text{.}
\end{equation*}%
We consider these individual terms $\zeta $ next. Note that 
\begin{eqnarray*}
\xi _{it} &=&\sum_{\ell =0}^{\infty }\left( 1-\alpha _{i}\right) ^{\ell
}\left( u_{y,i,t-\ell }-\beta u_{x,i,t-\ell }\right) \text{,} \\
&=&\sum_{\ell =0}^{\infty }\left( 1-\alpha _{i}\right) ^{\ell }v_{i,t-\ell
}+\left( \delta _{i}-\beta \right) \sum_{\ell =0}^{\infty }\left( 1-\alpha
_{i}\right) ^{\ell }u_{x,i,t-\ell }\text{,}
\end{eqnarray*}%
where $\sup_{i}\left\vert 1-\alpha _{i}\right\vert <1$ under Assumption \ref%
{As1}, and innovations $v_{it}$ and $u_{xit}$ have finite fourth order
moments under Assumption \ref{As2}. Hence, $T^{\varphi }\left[
T^{-1}\sum_{t=1}^{T}\xi _{i,t-1}^{2}-E\left( \xi _{i,t-1}^{2}\right) \right]
\rightarrow _{p}0$, $E\left( \bar{\xi}_{i,-1}^{2}\right) <K/T$, and we obtain%
\begin{equation}
T^{\varphi }\left[ \zeta _{aa,iT}-\alpha _{i}^{2}E\left( \bar{\xi}%
_{i,-1}^{2}\right) \right] \rightarrow _{p}0\text{, for any }\varphi <1/2%
\text{.}  \label{pr1}
\end{equation}%
Noting that $\zeta _{bb,iT}=\delta _{i}^{2}b_{i,T,12}$, and using result (%
\ref{b22}), we have%
\begin{equation}
T^{\varphi }\left[ \zeta _{bb,iT}-\delta _{i}^{2}\sigma _{xi}^{2}\right]
\rightarrow _{p}0\text{, for any }\varphi <1/2\text{.}
\end{equation}%
Consider $\zeta _{cc,iT}$ and note that $\zeta _{cc,iT}=\frac{\alpha _{i}}{%
\sqrt{T}}\varsigma _{c,iT}$, where $\varsigma _{c,iT}=\alpha _{i}^{-1}\Delta 
\mathbf{\tilde{v}}_{i}^{\prime }\mathbf{P}_{i}\mathbf{\tilde{v}}_{i}/\sqrt{T}
$ was introduced in (\ref{q1e}) in proof of Lemma \ref{lra}. But $E\left(
\varsigma _{c,iT}^{2}\right) <\frac{K}{T}$ by (\ref{zc}), and it follows%
\begin{equation}
T^{\varphi }\zeta _{cc,iT}\rightarrow _{p}0\text{, for any }\varphi <1/2%
\text{.}
\end{equation}%
Using similar arguments, we obtain for the cross-product terms,%
\begin{equation}
T^{\varphi }\zeta _{ab,iT}\rightarrow _{p}0\text{, }T^{\varphi }\zeta
_{ac,iT}\rightarrow _{p}0\text{, and }T^{\varphi }\zeta _{bc,iT}\rightarrow
_{p}0\text{, for any }\varphi <1/2\text{, as }T\rightarrow \infty \text{.}
\label{pr4}
\end{equation}%
Using (\ref{pr1})-(\ref{pr4}) in (\ref{b11s}), we obtain%
\begin{equation}
T^{\varphi }\left( b_{i,T,11}-\alpha _{i}^{2}E\left( \xi _{it}^{2}\right)
-\delta _{i}^{2}\sigma _{xi}^{2}\right) \overset{p}{\rightarrow }0\text{,
for any }\varphi <1/2\text{. }  \label{b11}
\end{equation}%
Using the same arguments for the last term $b_{i,T,12}=b_{i,T,21}$, we obtain%
\begin{equation*}
T^{\varphi }\left( b_{i,T,12}-\delta _{i}\sigma _{xi}^{2}\right) \overset{p}{%
\rightarrow }0\text{, for any }\varphi <1/2\text{. }
\end{equation*}%
This completes the proof of (\ref{lh2r}).\bigskip
\end{proof}

\bigskip

\begin{lemma}
\label{lh} Let Assumptions \ref{As1}-\ref{As4} hold, and consider $\mathbf{B}%
_{iT}$ defined by (\ref{bit}) and $\mathbf{B}_{i}=\limfunc{plim}%
_{T\rightarrow \infty }\mathbf{B}_{iT}$ defined by (\ref{bi}). Then we have 
\begin{equation}
T^{\varphi }\left\Vert \mathbf{B}_{iT}^{-1}-\mathbf{B}_{i}^{-1}\right\Vert
\rightarrow _{p}0,\text{ as }T\rightarrow \infty \text{, for any }\varphi
<1/2\text{.}  \label{hb}
\end{equation}
\end{lemma}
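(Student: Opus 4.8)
The plan is to exploit the standard resolvent-type identity for the difference of two matrix inverses, combined with the rate already established in Lemma \ref{lh2}. Since $\mathbf{B}_{iT}$ is invertible (it is symmetric positive semidefinite, being $T^{-1}\left(\mathbf{P}_{i}\Delta \mathbf{\tilde{Z}}_{i}\right)^{\prime }\left(\mathbf{P}_{i}\Delta \mathbf{\tilde{Z}}_{i}\right)$) and $\mathbf{B}_{i}$ in (\ref{bi}) is nonsingular, I would write
\begin{equation*}
\mathbf{B}_{iT}^{-1}-\mathbf{B}_{i}^{-1}=-\mathbf{B}_{iT}^{-1}\left( \mathbf{B}_{iT}-\mathbf{B}_{i}\right) \mathbf{B}_{i}^{-1},
\end{equation*}
which holds because expanding the right-hand side telescopes to the left. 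Taking the spectral norm and using sub-multiplicativity yields
\begin{equation*}
\left\Vert \mathbf{B}_{iT}^{-1}-\mathbf{B}_{i}^{-1}\right\Vert \leq \left\Vert \mathbf{B}_{iT}^{-1}\right\Vert \cdot \left\Vert \mathbf{B}_{iT}-\mathbf{B}_{i}\right\Vert \cdot \left\Vert \mathbf{B}_{i}^{-1}\right\Vert .
\end{equation*}

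Multiplying through by $T^{\varphi }$, the claim reduces to controlling the three factors on the right. The middle factor supplies the required rate: Lemma \ref{lh2} states precisely that $T^{\varphi }\left\Vert \mathbf{B}_{iT}-\mathbf{B}_{i}\right\Vert \rightarrow _{p}0$ for any $\varphi <1/2$. The limiting factor $\left\Vert \mathbf{B}_{i}^{-1}\right\Vert $ is a fixed finite constant, since $\mathbf{B}_{i}$ is nonsingular with entries that are bounded under Assumptions \ref{As1}-\ref{As2} (using $\alpha _{i}^{-1}<K$, $\sigma _{xi}^{2}<K$, $\delta _{i}$ bounded, and $E\left( \xi _{it}^{2}\right) <K$). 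Hence these two factors together are $o_{p}(1)$.

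The only real obstacle is to show that the first factor $\left\Vert \mathbf{B}_{iT}^{-1}\right\Vert $ is $O_{p}(1)$, i.e.\ bounded in probability, so that it does not offset the $o_{p}(1)$ rate coming from Lemma \ref{lh2}. Because $\mathbf{B}_{iT}$ is symmetric positive semidefinite, its spectral-norm inverse equals $\lambda _{\min }^{-1}\left( \mathbf{B}_{iT}\right) $, and Assumption \ref{As4}($i$) provides exactly $\sup_{i,T>T_{0}}E\left[ \lambda _{\min }^{-2}\left( \mathbf{B}_{iT}\right) \right] <K$; thus $E\left\Vert \mathbf{B}_{iT}^{-1}\right\Vert ^{2}$ is uniformly bounded, and Markov's inequality gives $\left\Vert \mathbf{B}_{iT}^{-1}\right\Vert =O_{p}(1)$. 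Combining the three pieces, $T^{\varphi }\left\Vert \mathbf{B}_{iT}^{-1}-\mathbf{B}_{i}^{-1}\right\Vert $ is the product of an $O_{p}(1)$ term, an $o_{p}(1)$ term, and an $O(1)$ term, hence $o_{p}(1)$, which establishes (\ref{hb}). (Alternatively, one could avoid Assumption \ref{As4}($i$) at this step by noting that $\mathbf{B}_{iT}\rightarrow _{p}\mathbf{B}_{i}$, which follows from Lemma \ref{lh2}, forces $\lambda _{\min }\left( \mathbf{B}_{iT}\right) \rightarrow _{p}\lambda _{\min }\left( \mathbf{B}_{i}\right) >0$ by continuity of eigenvalues, yielding the same bound on $\left\Vert \mathbf{B}_{iT}^{-1}\right\Vert $.)
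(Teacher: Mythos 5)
Your proposal is correct, and it starts from the same resolvent identity $\mathbf{B}_{iT}^{-1}-\mathbf{B}_{i}^{-1}=-\mathbf{B}_{iT}^{-1}\left( \mathbf{B}_{iT}-\mathbf{B}_{i}\right) \mathbf{B}_{i}^{-1}$ and the same rate input (Lemma \ref{lh2}) as the paper; where you genuinely part ways is in how the factor $\left\Vert \mathbf{B}_{iT}^{-1}\right\Vert $ is controlled. The paper (following Lemma A.8 of Pesaran and Chudik, 2010) never bounds this factor by an external assumption: writing $p=\left\Vert \mathbf{B}_{i}^{-1}\right\Vert $, $q=\left\Vert \mathbf{B}_{iT}^{-1}-\mathbf{B}_{i}^{-1}\right\Vert $, $r=\left\Vert \mathbf{B}_{iT}-\mathbf{B}_{i}\right\Vert $, it uses the triangle inequality $\left\Vert \mathbf{B}_{iT}^{-1}\right\Vert \leq p+q$ to obtain the self-bounding inequality $q\leq \left( p+q\right) rp$, i.e. $\left( 1-rp\right) \left( T^{\varphi }q\right) \leq p^{2}\left( T^{\varphi }r\right) $, and concludes since $rp\rightarrow _{p}0$ and $p<K$. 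You instead bound $\left\Vert \mathbf{B}_{iT}^{-1}\right\Vert =\lambda _{\min }^{-1}\left( \mathbf{B}_{iT}\right) =O_{p}\left( 1\right) $ via Assumption \ref{As4}($i$) and Markov's inequality; this is perfectly legitimate here because the lemma assumes Assumptions \ref{As1}-\ref{As4}, but it spends an assumption that the paper's algebraic bootstrap does not need, which matters conceptually since Lemma \ref{lh} is really a deterministic-rate consequence of Lemma \ref{lh2} plus nonsingularity of $\mathbf{B}_{i}$. Your parenthetical alternative (eigenvalue continuity, $\lambda _{\min }\left( \mathbf{B}_{iT}\right) \rightarrow _{p}\lambda _{\min }\left( \mathbf{B}_{i}\right) >0$, e.g.\ by Weyl's inequality) recovers exactly that assumption-free character and is arguably more transparent than the paper's rearrangement. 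Two small quibbles: positive semidefiniteness of $\mathbf{B}_{iT}$ alone does not give invertibility --- what you need is $\lambda _{\min }\left( \mathbf{B}_{iT}\right) >0$, which Assumption \ref{As4}($i$) supplies almost surely for $T>T_{0}$ (finiteness of $E\left[ \lambda _{\min }^{-2}\left( \mathbf{B}_{iT}\right) \right] $ forces $\lambda _{\min }>0$ a.s.); and boundedness of $\left\Vert \mathbf{B}_{i}^{-1}\right\Vert $ is better justified, as the paper does, by noting that $\lambda _{\min }\left( \mathbf{B}_{i}\right) $ is bounded away from zero (indeed $\det \left( \mathbf{B}_{i}\right) =\alpha _{i}^{2}E\left( \xi _{it}^{2}\right) \sigma _{xi}^{2}>0$ with all factors bounded away from zero), rather than by "nonsingular with bounded entries," though for fixed $i$ as $T\rightarrow \infty $ this distinction is immaterial.
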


\begin{proof}
This proof closely follows proof of Lemma A.8 in 
\citeN{PesaranChudik2010}%
. Let $p=\left\Vert \mathbf{B}_{i}^{-1}\right\Vert $, $q=\left\Vert \mathbf{B%
}_{iT}^{-1}-\mathbf{B}_{i}^{-1}\right\Vert $, and $r=\left\Vert \mathbf{B}%
_{iT}-\mathbf{B}_{i}\right\Vert $. We suppressed subscripts $i,T$ to
simplify the notations, but it is understood that the terms $p,q,r$ depend
on $\left( i,T\right) $. Using the triangle inequality and the
submultiplicative property of matrix norm $\left\Vert .\right\Vert $, we have%
\begin{eqnarray*}
q &=&\left\Vert \mathbf{B}_{iT}^{-1}\left( \mathbf{B}_{i}-\mathbf{B}%
_{iT}\right) \mathbf{B}_{i}^{-1}\right\Vert \text{,} \\
&\leq &\left\Vert \mathbf{B}_{iT}^{-1}\right\Vert rp\text{,} \\
&\leq &\left\Vert \left( \mathbf{B}_{iT}^{-1}-\mathbf{B}_{i}^{-1}\right) +%
\mathbf{B}_{i}^{-1}\right\Vert rp\text{,} \\
&\leq &\left( p+q\right) rp\text{.}
\end{eqnarray*}%
Subtracting $rpq$ from both sides and multiplying by $T^{\varphi }$, we
have, for any $\varphi <1/2$,%
\begin{equation}
\left( 1-rp\right) \left( T^{\varphi }q\right) \leq p^{2}\left( T^{\varphi
}r\right) \text{.}  \label{2e1}
\end{equation}%
Note that $T^{\varphi }r\overset{p}{\rightarrow }0$ by Lemma \ref{lh2}, and $%
\left\vert p\right\vert <K$ since $\mathbf{B}_{i}$ is invertible and $%
\lambda _{\min }\left( \mathbf{B}_{i}\right) $ is bounded away from zero
(this follows from observing that both $\sigma _{xi}^{2}$ and $E\left( \xi
_{it}^{2}\right) $ as well as $\alpha _{i}^{2}$ in (\ref{bi}) are bounded
away from zero). Hence,%
\begin{equation}
\left( 1-rp\right) \overset{p}{\rightarrow }1\text{,}  \label{2e2}
\end{equation}%
and%
\begin{equation}
p^{2}\left( T^{\varphi }r\right) \overset{p}{\rightarrow }0.  \label{2e3}
\end{equation}%
(\ref{2e1})-(\ref{2e3}) imply $T^{\varphi }q\overset{p}{\rightarrow }0$.
This establishes result (\ref{hb}).
\end{proof}

\bigskip

\begin{lemma}
\label{sle} Let Assumptions \ref{As1}-\ref{As4} hold, and consider $\xi
_{iT} $ defined by 
\begin{equation}
\xi _{iT}=\frac{\mathbf{\tilde{x}}_{i}^{\prime }\Delta \mathbf{\tilde{Z}}_{i}%
}{T}\left( \frac{\Delta \mathbf{\tilde{Z}}_{i}^{\prime }\mathbf{P}_{i}\Delta 
\mathbf{\tilde{Z}}_{i}}{T}\right) ^{-1}\frac{\Delta \mathbf{\tilde{Z}}%
_{i}^{\prime }\mathbf{P}_{i}\mathbf{\tilde{v}}_{i}}{\alpha _{i}\sqrt{T}}%
\text{.}  \label{exi}
\end{equation}%
where $\mathbf{P}_{i}$ is given by (\ref{pi}), and $\mathbf{\tilde{x}}_{i}$
and $\Delta \mathbf{\tilde{Z}}_{i}$ are defined below (\ref{i}). Then%
\begin{equation}
\frac{1}{\sqrt{nT}}\sum_{i=1}^{n}\xi _{iT}\rightarrow _{p}0\text{,}
\label{e1}
\end{equation}%
as $n,T\rightarrow \infty $ such that $n=\Theta \left( T^{\theta }\right) $
for some $0<\theta <2$.
\end{lemma}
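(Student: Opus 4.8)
The plan is to read $\xi_{iT}$ as a quadratic-type form in the three building blocks already controlled by Lemmas \ref{lb1}, \ref{lh2}, \ref{lh} and \ref{lra}. Using $\mathbf{P}_{i}\mathbf{\tilde{x}}_{i}=\mathbf{\tilde{x}}_{i}$ I would first write $\xi_{iT}=\mathbf{a}_{iT}^{\prime}\mathbf{B}_{iT}^{-1}\mathbf{q}_{iT}$, with $\mathbf{a}_{iT}$ as in (\ref{ait}), $\mathbf{B}_{iT}$ as in (\ref{bit}), and $\mathbf{q}_{iT}=\alpha_{i}^{-1}\Delta\mathbf{\tilde{Z}}_{i}^{\prime}\mathbf{P}_{i}\mathbf{\tilde{v}}_{i}/\sqrt{T}$ as in Lemma \ref{lra}. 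Since $\xi_{iT}$ depends only on the data of unit $i$, the summands are independent across $i$, and I would split $\frac{1}{\sqrt{nT}}\sum_{i=1}^{n}\xi_{iT}=\frac{1}{\sqrt{nT}}\sum_{i=1}^{n}[\xi_{iT}-E(\xi_{iT})]+\frac{1}{\sqrt{nT}}\sum_{i=1}^{n}E(\xi_{iT})$ into a fluctuation part and a mean part, showing each converges to zero.

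For the moment bounds I would replace the random inverse $\mathbf{B}_{iT}^{-1}$ by its deterministic probability limit $\mathbf{B}_{i}^{-1}$ from (\ref{bi}), which is bounded in norm because $\lambda_{\min}(\mathbf{B}_{i})$ is bounded away from zero (as used in Lemma \ref{lh}). Writing $\zeta_{iT}=\mathbf{a}_{iT}^{\prime}\mathbf{B}_{i}^{-1}\mathbf{q}_{iT}$, the Cauchy--Schwarz inequality together with the fourth-moment bounds $E\Vert\mathbf{a}_{iT}\Vert^{4}<K$ (Lemma \ref{lb1}) and $E\Vert\mathbf{q}_{iT}\Vert_{2}^{4}<K$ (Lemma \ref{lra}, eq. (\ref{rap})) gives $E(\zeta_{iT}^{2})<K$ uniformly in $i,T$, so the fluctuation part built from $\zeta_{iT}$ has variance $\frac{1}{nT}\sum_{i}\mathrm{Var}(\zeta_{iT})\leq K/T\rightarrow0$. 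For its mean I would decompose $\mathbf{q}_{iT}=[\mathbf{q}_{iT}-E(\mathbf{q}_{iT})]+E(\mathbf{q}_{iT})$: the second piece is controlled by $\Vert E(\mathbf{q}_{iT})\Vert<K/\sqrt{T}$ from (\ref{rbp}), while the first piece is the cross-covariance of $\mathbf{a}_{iT}$ and $\mathbf{q}_{iT}$, which is also $O(T^{-1/2})$ because $v_{it}$ is independent of $u_{x,it^{\prime}}$ with $E(v_{it})=0$, so the leading linear-in-$v$ contributions cancel and only the quadratic-in-$v$ term survives at order $T^{-1/2}$. This yields $|E(\zeta_{iT})|\leq K/\sqrt{T}$, so the mean part is bounded by $\frac{1}{\sqrt{nT}}\,n\,K/\sqrt{T}=K\sqrt{n}/T=O(T^{\theta/2-1})$, which vanishes precisely because $\theta<2$; this is where the rate restriction enters.

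It then remains to dispose of the replacement error $\frac{1}{\sqrt{nT}}\sum_{i}\mathbf{a}_{iT}^{\prime}(\mathbf{B}_{iT}^{-1}-\mathbf{B}_{i}^{-1})\mathbf{q}_{iT}$. Using $\mathbf{B}_{iT}^{-1}-\mathbf{B}_{i}^{-1}=\mathbf{B}_{iT}^{-1}(\mathbf{B}_{i}-\mathbf{B}_{iT})\mathbf{B}_{i}^{-1}$ and $\Vert\mathbf{B}_{i}^{-1}\Vert<K$, each summand is bounded by $K\Vert\mathbf{a}_{iT}\Vert\,\Vert\mathbf{B}_{iT}^{-1}\Vert\,\Vert\mathbf{B}_{iT}-\mathbf{B}_{i}\Vert\,\Vert\mathbf{q}_{iT}\Vert$. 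I would bound its expectation by a four-factor Hölder inequality, pairing $\Vert\mathbf{B}_{iT}^{-1}\Vert$ with exponent $2$ (so that $E\Vert\mathbf{B}_{iT}^{-1}\Vert^{2}=E\lambda_{\min}^{-2}(\mathbf{B}_{iT})<K$ under Assumption \ref{As4}), pairing $\Vert\mathbf{a}_{iT}\Vert$ and $\Vert\mathbf{q}_{iT}\Vert$ with their available higher-order moments, and pairing $\Vert\mathbf{B}_{iT}-\mathbf{B}_{i}\Vert$ with the remaining exponent; the factor $\Vert\mathbf{B}_{iT}-\mathbf{B}_{i}\Vert$ contributes an $O(T^{-1/2})$ rate, a strengthening of the in-probability statement of Lemma \ref{lh2} to a matching moment bound, available from the generous moment conditions in Assumption \ref{As2}. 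This again produces an $O(T^{\theta/2-1})$ bound on the replacement error, completing the argument.

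The main obstacle is the random inverse $\mathbf{B}_{iT}^{-1}$: Assumption \ref{As4} supplies only the second moment of $\lambda_{\min}^{-1}(\mathbf{B}_{iT})$, so a direct Hölder bound on $\mathbf{a}_{iT}^{\prime}\mathbf{B}_{iT}^{-1}\mathbf{q}_{iT}$ cannot close, since pairing $\Vert\mathbf{B}_{iT}^{-1}\Vert^{2}$ at exponent one leaves no room for the remaining factors; this is exactly why the deterministic replacement by $\mathbf{B}_{i}^{-1}$ is forced and why the error term must be absorbed through the convergence rate of $\mathbf{B}_{iT}$ to $\mathbf{B}_{i}$. The second delicate point is proving the sharp $|E(\xi_{iT})|=O(T^{-1/2})$ rather than merely $O(1)$: this is what upgrades the admissible range from $\theta<1$ to $\theta<2$, and it hinges on the cancellation of the linear-in-$v$ cross terms guaranteed by the independence of $v_{it}$ and $u_{x,it}$ in Assumption \ref{As2}.
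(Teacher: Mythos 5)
Most of your proposal reproduces the paper's own argument: the representation $\xi_{iT}=\mathbf{a}_{iT}^{\prime }\mathbf{B}_{iT}^{-1}\mathbf{q}_{iT}$ with $\mathbf{a}_{iT}$, $\mathbf{B}_{iT}$, $\mathbf{q}_{iT}$ as in (\ref{ait}), (\ref{bit}) and Lemma \ref{lra}; the replacement of $\mathbf{B}_{iT}^{-1}$ by the deterministic $\mathbf{B}_{i}^{-1}$; the variance bound $K/T$ for the centered sum via Cauchy--Schwarz and the fourth-moment bounds of Lemmas \ref{lb1} and \ref{lra}; and the mean bound $\left\vert E\left( \mathbf{a}_{iT}^{\prime }\mathbf{B}_{i}^{-1}\mathbf{q}_{iT}\right) \right\vert \leq K/\sqrt{T}$, which after summation gives $K\sqrt{n}/T\rightarrow 0$ exactly when $\theta <2$. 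This is the paper's route through (\ref{zb1})--(\ref{en2}); your ``linear-in-$v$ terms cancel'' heuristic is what the paper's lengthy verification of (\ref{zb1}) makes rigorous (note that it also requires a geometric-decay argument for the serial covariances $E(\eta _{v,is}v_{it})$, not only the independence of $v_{it}$ and $u_{x,it^{\prime }}$).

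The genuine gap is in your disposal of the replacement error $\frac{1}{\sqrt{nT}}\sum_{i}\mathbf{a}_{iT}^{\prime }\left( \mathbf{B}_{iT}^{-1}-\mathbf{B}_{i}^{-1}\right) \mathbf{q}_{iT}$. Your four-factor H\"{o}lder bound cannot close arithmetically with the moments the paper actually supplies: $E\left[ \lambda _{\min }^{-2}\left( \mathbf{B}_{iT}\right) \right] <K$ (Assumption \ref{As4}, exponent $2$), $E\left\Vert \mathbf{a}_{iT}\right\Vert ^{4}<K$ (Lemma \ref{lb1}, exponent $4$), and $E\left\Vert \mathbf{q}_{iT}\right\Vert ^{4}<K$ (Lemma \ref{lra}, exponent $4$) already give $1/2+1/4+1/4=1$, so the ``remaining exponent'' for $\left\Vert \mathbf{B}_{iT}-\mathbf{B}_{i}\right\Vert $ is $\infty $; you would need an almost-sure bound $\left\Vert \mathbf{B}_{iT}-\mathbf{B}_{i}\right\Vert \leq KT^{-1/2}$, which is unavailable (the convergence holds only in probability or in low-order moments). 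Nor can you buy room elsewhere: eighth moments of $\mathbf{a}_{iT}$ would require sixteenth moments of $u_{x,it}$, whereas Assumption \ref{As2} gives only eighth (Lemma \ref{lb1}'s fourth-moment bound already consumes them), and eighth moments of $q_{iT,1}$ would require inverse-eigenvalue moments beyond Assumption \ref{As4}. Ironically, this is exactly the obstruction you articulate in your final paragraph for the direct bound; your fix inherits it. The paper avoids taking expectations of this term altogether: Lemma \ref{lh} gives $T^{\varphi }\left\Vert \mathbf{B}_{iT}^{-1}-\mathbf{B}_{i}^{-1}\right\Vert \rightarrow _{p}0$ for any $\varphi <1/2$; setting $\varphi =\left( \theta -1\right) /2$ and factoring the sum as $\frac{\sqrt{n}}{T^{\theta /2}}$ (bounded, since $n=\Theta \left( T^{\theta }\right) $) times $\frac{1}{n}\sum_{i}\left\Vert \mathbf{a}_{iT}\right\Vert \left( T^{\varphi }\left\Vert \mathbf{B}_{iT}^{-1}-\mathbf{B}_{i}^{-1}\right\Vert \right) \left\Vert \mathbf{q}_{iT}\right\Vert $, it concludes that the product vanishes in probability, see (\ref{an0}). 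Replacing your H\"{o}lder step with this in-probability argument (or strengthening Assumption \ref{As4}) is what is needed for the rest of your proposal to go through.
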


\begin{proof}
Term $\xi _{iT}$ can be written as 
\begin{equation}
\xi _{iT}=\mathbf{a}_{iT}^{\prime }\mathbf{B}_{iT}^{-1}\mathbf{q}_{iT}\text{,%
}
\end{equation}%
where $\mathbf{a}_{iT}$ is given by (\ref{ait}), $\mathbf{B}_{iT}$ is given
by (\ref{bit}), and 
\begin{equation}
\mathbf{q}_{iT}=\frac{\Delta \mathbf{\tilde{Z}}_{i}^{\prime }\mathbf{P}_{i}%
\mathbf{\tilde{v}}_{i}}{\alpha _{i}\sqrt{T}}\text{.}
\end{equation}%
We have%
\begin{equation}
\frac{1}{\sqrt{nT}}\sum_{i=1}^{n}\xi _{i,T}=\frac{1}{\sqrt{nT}}\sum_{i=1}^{n}%
\mathbf{a}_{iT}^{\prime }\left( \mathbf{B}_{iT}^{-1}-\mathbf{B}%
_{i}^{-1}\right) \mathbf{q}_{iT}+\frac{1}{\sqrt{nT}}\sum_{i=1}^{n}\mathbf{a}%
_{iT}^{\prime }\mathbf{B}_{i}^{-1}\mathbf{q}_{iT}\text{.}  \label{ea}
\end{equation}%
Consider the two terms on the right side of (\ref{ea}) in turn. Lemma \ref%
{lb1} established fourth moments of $\mathbf{a}_{iT}$ are bounded, which is
sufficient for $\left\Vert \mathbf{a}_{iT}\right\Vert =O_{p}\left( 1\right) $%
. Result (\ref{rap}) of Lemma \ref{lra} established second moments of
individual elements of $\mathbf{q}_{iT}$ are bounded, which is sufficient
for $\left\Vert \mathbf{q}_{iT}\right\Vert =O_{p}\left( 1\right) $. In
addition, Lemma \ref{lh} established 
\begin{equation*}
T^{\varphi }\left\Vert \mathbf{B}_{iT}^{-1}-\mathbf{B}_{i}^{-1}\right\Vert
\rightarrow _{p}0\text{ as }T\rightarrow \infty \text{, for any }\varphi <1/2%
\text{.}
\end{equation*}%
Set $\varphi =\left( \theta -1\right) /2<1/2$ . Then we obtain%
\begin{eqnarray}
\frac{1}{\sqrt{nT}}\sum_{i=1}^{n}\mathbf{a}_{iT}^{\prime }\left( \mathbf{B}%
_{iT}^{-1}-\mathbf{B}_{i}^{-1}\right) \mathbf{q}_{iT} &=&\frac{\sqrt{n}}{%
\sqrt{T}}\frac{1}{T^{\varphi }}\frac{1}{n}\sum_{i=1}^{n}\mathbf{a}%
_{iT}^{\prime }\left[ T^{\varphi }\left( \mathbf{B}_{iT}^{-1}-\mathbf{B}%
_{i}^{-1}\right) \right] \mathbf{q}_{iT}\text{,}  \notag \\
&\leq &\frac{\sqrt{n}}{T^{\theta /2}}\left( \frac{1}{n}\sum_{i=1}^{n}\left%
\Vert \mathbf{a}_{iT}\right\Vert \left( T^{\varphi }\left\Vert \mathbf{B}%
_{iT}^{-1}-\mathbf{B}_{i}^{-1}\right\Vert \right) \left\Vert \mathbf{q}%
_{iT}\right\Vert \right) \rightarrow _{p}0\text{,}  \label{an0}
\end{eqnarray}%
as $n,T\rightarrow \infty $ such that $\theta <2$, where we used $\sqrt{T}%
T^{\varphi }=T^{1/2+\varphi }=T^{\theta /2}$, and $\frac{\sqrt{n}}{T^{\theta
/2}}<K$ since $n=\Theta \left( T^{\theta }\right) $.

Consider next the second term on the right side of (\ref{ea}). Let $\mu
_{iT}^{\ast }=E\left( \mathbf{a}_{iT}^{\prime }\mathbf{B}_{i}^{-1}\mathbf{q}%
_{iT}\right) $, and consider the variance of $\left( nT\right)
^{-1/2}\sum_{i=1}^{n}\mathbf{a}_{iT}^{\prime }\mathbf{B}_{i}^{-1}\mathbf{q}%
_{iT}$. By independence of $\mathbf{a}_{iT}^{\prime }\mathbf{B}_{i}^{-1}%
\mathbf{q}_{iT}$ across $i$, 
\begin{eqnarray}
Var\left( \frac{1}{\sqrt{nT}}\sum_{i=1}^{n}\mathbf{a}_{iT}^{\prime }\mathbf{B%
}_{i}^{-1}\mathbf{q}_{iT}\right) &=&\frac{1}{nT}\sum_{i=1}^{n}Var\left( 
\mathbf{a}_{iT}^{\prime }\mathbf{B}_{i}^{-1}\mathbf{q}_{iT}\right) \text{,} 
\notag \\
&\leq &\frac{1}{nT}\sum_{i=1}^{n}E\left( \mathbf{a}_{iT}^{\prime }\mathbf{B}%
_{i}^{-1}\mathbf{q}_{iT}\right) ^{2}\text{.}  \label{ubb}
\end{eqnarray}%
Denoting individual elements of $\mathbf{B}_{i}^{-1}$ as $b_{i,sj}^{-}$,
individual elements of $\mathbf{a}_{iT}$ as $a_{iT,j}$, and individual
elements of $\mathbf{q}_{iT}$ as $q_{iT,s}$, for $s,j=1,2$, we have%
\begin{eqnarray}
\mathbf{a}_{iT}^{\prime }\mathbf{B}_{i}^{-1}\mathbf{q}_{iT}
&=&\sum_{s=1}^{2}\sum_{j=1}^{2}b_{i,sj}^{-}a_{iT,s}q_{iT,j}\text{,}  \notag
\\
&=&b_{i,11}^{-}a_{iT,1}q_{iT,1}+b_{i,21}^{-}a_{iT,2}q_{iT,1}+b_{i,12}^{-}a_{iT,1}q_{iT,2}+b_{i,22}^{-}a_{iT,2}q_{iT,2}%
\text{,}  \label{z3}
\end{eqnarray}%
where%
\begin{equation}
a_{iT,1}=\frac{1}{T}\sum_{t=1}^{T}\tilde{x}_{it}\Delta \tilde{y}_{it}=\frac{1%
}{T}\sum_{t=1}^{T}\left( x_{it}-\bar{x}_{i}\right) \Delta y_{it}\text{,}
\label{ai1}
\end{equation}%
\begin{equation}
a_{iT,2}=\frac{1}{T}\sum_{t=1}^{T}\tilde{x}_{it}\Delta \tilde{x}_{it}=\frac{1%
}{T}\sum_{t=1}^{T}\left( x_{it}-\bar{x}_{i}\right) u_{x,it}\text{,}
\label{ai2}
\end{equation}%
\begin{equation}
q_{iT,1}=\frac{\Delta \mathbf{\tilde{y}}_{i}^{\prime }\mathbf{P}_{i}\mathbf{%
\tilde{v}}_{i}}{\alpha _{i}\sqrt{T}}\text{,}  \label{qi1}
\end{equation}%
and%
\begin{equation}
q_{iT,2}=\frac{1}{\sqrt{T}}\sum_{t=1}^{T}\frac{\tilde{u}_{x,it}\tilde{v}_{it}%
}{\alpha _{i}}\text{.}  \label{qi2}
\end{equation}%
Note that $\mathbf{B}_{i}$ is invertible and $\inf_{i}\lambda _{\min }\left( 
\mathbf{B}_{i}\right) $ is bounded away from zero (this follows from
observing that both $\sigma _{xi}^{2}$ and $E\left( \xi _{it}^{2}\right) $
as well as $\alpha _{i}^{2}$ in (\ref{bi}) are bounded away from zero). It
follows $\sup_{i}\left\Vert \mathbf{B}_{i}^{-1}\right\Vert <K$, and
therefore $\left\vert \left( b_{i,sj}^{-}\right) ^{2}\right\vert <K$. Using
this result and Cauchy-Schwarz inequality for the individual summands on the
right side of (\ref{z3}), we obtain%
\begin{equation}
E\left( \mathbf{a}_{iT}^{\prime }\mathbf{B}_{i}^{-1}\mathbf{q}_{iT}\right)
^{2}\leq K\sum_{s=1}^{2}\sum_{j=1}^{2}\sqrt{E\left( a_{iT,s}^{4}\right) }%
\sqrt{E\left( q_{iT,j}^{4}\right) }<K\text{,}  \label{uba}
\end{equation}%
where $E\left( a_{iT,s}^{4}\right) <K$ by Lemma \ref{lb1}, and $E\left(
q_{iT,j}^{4}\right) <K$ by result (\ref{rap}) of Lemma \ref{lra}. Using (\ref%
{uba}) in (\ref{ubb}), it follows that%
\begin{equation*}
Var\left( \frac{1}{\sqrt{nT}}\sum_{i=1}^{n}\mathbf{a}_{iT}^{\prime }\mathbf{B%
}_{i}^{-1}\mathbf{q}_{iT}\right) <\frac{K}{T}\text{,}
\end{equation*}%
and therefore%
\begin{equation}
\frac{1}{\sqrt{nT}}\sum_{i=1}^{n}\left( \mathbf{a}_{iT}^{\prime }\mathbf{B}%
_{i}^{-1}\mathbf{q}_{iT}-\mu _{iT}^{\ast }\right) \rightarrow _{q.m.}0\text{
as }n,T\rightarrow \infty \text{.}  \label{en1}
\end{equation}

We establish an upper bound for $\left\vert \mu _{iT}^{\ast }\right\vert $
next. We have (using (\ref{z3}) and noting that $\left\vert
b_{i,sj}^{-}\right\vert <K$) 
\begin{equation*}
\left\vert \mu _{iT}^{\ast }\right\vert <K\cdot
\sum_{s=1}^{2}\sum_{j=1}^{2}\left\vert E\left( a_{iT,s}q_{iT,j}\right)
\right\vert \text{.}
\end{equation*}%
It follows that if we can show that%
\begin{equation}
\left\vert E\left( a_{iT,s}q_{iT,j}\right) \right\vert <\frac{K}{\sqrt{T}}%
\text{, }  \label{zb1}
\end{equation}%
holds for all $s,j=1,2$, then 
\begin{equation}
\left\vert \mu _{iT}^{\ast }\right\vert <\frac{K}{\sqrt{T}}\text{,}
\label{fr}
\end{equation}%
hold. We establish (\ref{zb1}) for $s=j=2$, first, which is the most
convenient case to consider. We have%
\begin{equation}
E\left( a_{iT,2}q_{iT,2}\right) =E\left( \frac{1}{T}\sum_{t=1}^{T}\left(
x_{it}-\bar{x}_{i}\right) u_{x,it}\cdot \frac{1}{\sqrt{T}}\sum_{t=1}^{T}%
\frac{u_{x,it}v_{it}}{\alpha _{i}}\right) =0\text{,}  \label{rp1}
\end{equation}%
since $v_{it}\,$is independently distributed of $u_{x,it^{\prime }}$ for any 
$t,t^{\prime }$. Consider next $s=1$, $j=2$. We have%
\begin{equation}
E\left( a_{iT,1}q_{iT,2}\right) =E\left( \frac{1}{T}\sum_{t=1}^{T}\left(
x_{it}-\bar{x}_{i}\right) \Delta y_{it}\cdot \frac{1}{\sqrt{T}}\sum_{t=1}^{T}%
\frac{u_{x,it}v_{it}}{\alpha _{i}}\right) \text{,}
\end{equation}%
where (first-differencing (\ref{ma_y}) and substituting (\ref{uy}))%
\begin{eqnarray}
\Delta y_{it} &=&\delta _{i}u_{x,it}+v_{it}-\alpha _{i}\sum_{\ell
=1}^{\infty }\left( 1-\alpha _{i}\right) ^{\ell -1}\left[ v_{i,t-\ell
}+\left( \delta _{i}-\beta \right) u_{x,i,t-\ell }\right] \text{,}  \notag \\
&=&\eta _{u,it}+\eta _{v,it}\text{,}  \label{dyd}
\end{eqnarray}%
in which%
\begin{equation}
\eta _{u,it}=\delta _{i}u_{x,it}-\alpha _{i}\sum_{\ell =1}^{\infty }\left(
1-\alpha _{i}\right) ^{\ell -1}\left( \delta _{i}-\beta \right)
u_{x,i,t-\ell }\text{,}
\end{equation}%
and%
\begin{equation}
\eta _{v,it}=v_{it}-\alpha _{i}\sum_{\ell =1}^{\infty }\left( 1-\alpha
_{i}\right) ^{\ell -1}v_{i,t-\ell }\text{.}
\end{equation}%
Hence, $E\left( a_{iT,1}q_{iT,2}\right) $ can be written as%
\begin{eqnarray*}
E\left( a_{iT,1}q_{iT,2}\right) &=&E\left( \frac{1}{T}\sum_{t=1}^{T}\left(
x_{it}-\bar{x}_{i}\right) \eta _{u,it}\cdot \frac{1}{\sqrt{T}}\sum_{t=1}^{T}%
\frac{u_{x,it}v_{it}}{\alpha _{i}}\right) \\
&&+E\left( \frac{1}{T}\sum_{t=1}^{T}\left( x_{it}-\bar{x}_{i}\right) \eta
_{v,it}\cdot \frac{1}{\sqrt{T}}\sum_{t=1}^{T}\frac{u_{x,it}v_{it}}{\alpha
_{i}}\right) \text{.}
\end{eqnarray*}%
The first term is equal to $0$, since $v_{it}\,$is independently distributed
of $u_{x,it^{\prime }}$ for any $t,t^{\prime }$. Consider the second term.
Noting that $E\left[ \left( x_{it}-\bar{x}_{i}\right) u_{x,is}\right] <K$
and $\left\vert \alpha _{i}^{-1}\right\vert <K$ for any $i,t,s$, we obtain 
\begin{eqnarray*}
E\left( \frac{1}{T}\sum_{t=1}^{T}\left( x_{it}-\bar{x}_{i}\right) \eta
_{v,it}\cdot \frac{1}{\sqrt{T}}\sum_{t=1}^{T}\frac{u_{x,it}v_{it}}{\alpha
_{i}}\right) &=&\frac{1}{T^{3/2}}\sum_{t=1}^{T}\sum_{s=1}^{T}\alpha
_{i}^{-1}E\left[ \left( x_{it}-\bar{x}_{i}\right) u_{x,is}\right] E\left(
\eta _{v,is}v_{it}\right) \text{,} \\
&\leq &\frac{K}{T^{3/2}}\sum_{t=1}^{T}\sum_{s=1}^{T}E\left( \eta
_{v,is}v_{it}\right) \text{.}
\end{eqnarray*}%
But 
\begin{equation*}
E\left( \eta _{v,is}v_{it}\right) =\left\{ 
\begin{array}{c}
0\text{, for }s<t\text{,} \\ 
\sigma _{vi}^{2}<K\text{, for }s=t\text{,} \\ 
\leq K\rho ^{s-t}\text{, for }s>t\text{,}%
\end{array}%
\right.
\end{equation*}%
where $\rho \equiv \sup_{i}\left\vert 1-\alpha _{i}\right\vert <1$ by\
Assumption \ref{As1}. Hence $\left\vert \sum_{s=1}^{T}E\left( \eta
_{v,is}v_{it}\right) \right\vert <K$ for any $t=1,2,...T$, and%
\begin{equation}
\left\vert E\left( \frac{1}{T}\sum_{t=1}^{T}\left( x_{it}-\bar{x}_{i}\right)
\eta _{v,it}\cdot \frac{1}{\sqrt{T}}\sum_{t=1}^{T}\frac{u_{x,it}v_{it}}{%
\alpha _{i}}\right) \right\vert \leq \frac{K}{\sqrt{T}}\text{,}  \label{rp2}
\end{equation}%
as desired. This establish (\ref{zb1}) hold for $s=1$, $j=2$.

Consider next (\ref{zb1}) for $s\in \left\{ 1,2\right\} $ and $j=1$. Using
expression (\ref{q1e}), we can write $a_{iT,s}q_{iT,1}$, for $s=1,2$, as%
\begin{equation}
a_{iT,s}q_{iT,1}=a_{iT,s}\varsigma _{a,iT}+a_{iT,s}\varsigma
_{b,iT}+a_{iT,s}\varsigma _{c,iT}\text{,}
\end{equation}%
where as in the proof of Lemma \ref{lra} $\varsigma _{a,iT}=-\mathbf{\tilde{%
\xi}}_{i,-1}^{\prime }\mathbf{\tilde{v}}_{i}/\sqrt{T}$, $\varsigma
_{b,iT}=\alpha _{i}^{-1}\Delta \mathbf{\tilde{x}}_{i}^{\prime }\mathbf{%
\tilde{v}}_{i}/\sqrt{T}$ and $\varsigma _{c,iT}=\alpha _{i}^{-1}\Delta 
\mathbf{\tilde{v}}_{i}^{\prime }\mathbf{P}_{i}\mathbf{\tilde{v}}_{i}/\sqrt{T}
$. Using similar arguments as in establishing (\ref{rp2}), we obtain%
\begin{equation*}
\left\vert E\left( a_{iT,s}\varsigma _{a,iT}\right) \right\vert <\frac{K}{%
\sqrt{T}}\text{, for }s=1,2\text{.}
\end{equation*}%
Noting next that $\varsigma _{b,iT}=\alpha _{i}^{-1}q_{i,T,2}$, it directly
follows from results (\ref{rp1}) and (\ref{rp2}) that%
\begin{equation*}
\left\vert E\left( a_{iT,s}\varsigma _{b,iT}\right) \right\vert <\frac{K}{%
\sqrt{T}}\text{, for }s=1,2\text{.}
\end{equation*}%
Consider the last term, $a_{i,T,s}\varsigma _{c,iT}$, for $s=1,2$. Using
Cauchy-Schwarz inequality we have%
\begin{equation*}
\left\vert E\left( a_{iT,s}\varsigma _{c,iT}\right) \right\vert \leq \sqrt{%
E\left( a_{iT,s}^{2}\right) }\sqrt{E\left( \varsigma _{c,iT}^{2}\right) }%
\text{, for }s=1,2\text{. }
\end{equation*}%
But $E\left( a_{iT,s}^{2}\right) <K$, for $s=1,2$ by Lemma \ref{lb1}, and $%
E\left( \varsigma _{c,iT}^{2}\right) <K/T$ is implied by (\ref{zc}). Hence%
\begin{equation*}
\left\vert E\left( a_{iT,s}\varsigma _{c,iT}\right) \right\vert \leq \frac{K%
}{\sqrt{T}}\text{, for }s=1,2\text{. }
\end{equation*}%
This completes the proof of (\ref{zb1}) for all $s,j=1,2$, and therefore (%
\ref{fr}) holds. Using (\ref{fr}), we 
\begin{equation}
\left\vert \frac{1}{\sqrt{nT}}\sum_{i=1}^{n}\mu _{iT}^{\ast }\right\vert
\leq \frac{1}{\sqrt{nT}}\sum_{i=1}^{n}\left\vert \mu _{iT}^{\ast
}\right\vert <\frac{1}{\sqrt{nT}}\sum_{i=1}^{n}\frac{K}{\sqrt{T}}=K\frac{%
\sqrt{n}}{T}\rightarrow 0\text{,}  \label{en2}
\end{equation}%
as $n,T\rightarrow \infty $ such that $\sqrt{n}/T\rightarrow 0$. Results (%
\ref{en1}) and (\ref{en2}) imply 
\begin{equation}
\frac{1}{\sqrt{nT}}\sum_{i=1}^{n}\mathbf{a}_{iT}^{\prime }\mathbf{B}_{i}^{-1}%
\mathbf{q}_{iT}\rightarrow _{p}0\text{,}  \label{an1}
\end{equation}%
as $n,T\rightarrow \infty $ such that $\sqrt{n}/T\rightarrow 0$. Finally,
using (\ref{an0}) and (\ref{an1}) in (\ref{ea}), we obtain (\ref{en1}), as
desired.
\end{proof}

\bigskip

\begin{lemma}
\label{leb}\textbf{\ }Let Assumptions \ref{As1}-\ref{As4} hold. Then 
\begin{equation*}
\frac{1}{\sqrt{n}}\sum_{i=1}^{n}E\left( \frac{\mathbf{\tilde{x}}_{i}^{\prime
}\mathbf{M}_{i}\mathbf{\tilde{v}}_{i}}{\alpha _{i}T}\right) \rightarrow 0,
\end{equation*}%
as $n,T\rightarrow \infty $ such that $n=\Theta \left( T^{\theta }\right) $
for some $0<\theta <2$, where $\mathbf{M}_{i}$ is defined in (\ref{mi}), and 
$\mathbf{\tilde{x}}_{i}$ and $\mathbf{\tilde{v}}_{i}$ are defined below (\ref%
{i}).
\end{lemma}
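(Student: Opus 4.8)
The plan is to reduce the statement to a bias bound on the single-index quantity $\xi _{iT}$ of Lemma \ref{sle} and then to sum it. First I would substitute the definition (\ref{mi}) of $\mathbf{M}_{i}$ and use that $\mathbf{\tilde{x}}_{i}$ is a column of $\mathbf{\tilde{H}}_{i}$, so $\mathbf{P}_{i}\mathbf{\tilde{x}}_{i}=\mathbf{\tilde{x}}_{i}$ (as in the proof of Lemma \ref{slp}). This yields the exact decomposition
\begin{equation*}
\frac{\mathbf{\tilde{x}}_{i}^{\prime }\mathbf{M}_{i}\mathbf{\tilde{v}}_{i}}{\alpha _{i}T}=\frac{\mathbf{\tilde{x}}_{i}^{\prime }\mathbf{\tilde{v}}_{i}}{\alpha _{i}T}-\frac{\xi _{iT}}{\sqrt{T}},
\end{equation*}
where $\xi _{iT}=\mathbf{a}_{iT}^{\prime }\mathbf{B}_{iT}^{-1}\mathbf{q}_{iT}$ is exactly the term defined in (\ref{exi}). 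Taking expectations, the first term vanishes identically: $\mathbf{\tilde{x}}_{i}^{\prime }\mathbf{\tilde{v}}_{i}=\mathbf{\tilde{x}}_{i}^{\prime }\mathbf{v}_{i}$ by idempotency of $\mathbf{M}_{\tau }$ (as in Lemma \ref{sld}), and $E(\tilde{x}_{it}v_{it})=0$ because $\tilde{x}_{it}$ is a linear function of $\{u_{x,is}\}$, which is independent of the mean-zero $v_{it}$ under Assumption \ref{As2}. Hence the whole statement is equivalent to showing $\tfrac{1}{\sqrt{nT}}\sum_{i=1}^{n}E\left( \xi _{iT}\right) \rightarrow 0$.

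Next I would split $E(\xi _{iT})$ exactly as in (\ref{ea}), writing $E\left( \xi _{iT}\right) =\mu _{iT}^{\ast }+E[ \mathbf{a}_{iT}^{\prime }( \mathbf{B}_{iT}^{-1}-\mathbf{B}_{i}^{-1}) \mathbf{q}_{iT}]$ with $\mu _{iT}^{\ast }=E( \mathbf{a}_{iT}^{\prime }\mathbf{B}_{i}^{-1}\mathbf{q}_{iT})$. The deterministic-weight part is already controlled: bound (\ref{fr}) of Lemma \ref{sle} gives $\left\vert \mu _{iT}^{\ast }\right\vert <K/\sqrt{T}$, so exactly as in (\ref{en2}) its contribution is at most $K\sqrt{n}/T\rightarrow 0$ whenever $n=\Theta (T^{\theta })$ with $\theta <2$. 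For the random-weight remainder I would use $\left\vert E[\mathbf{a}_{iT}^{\prime }(\mathbf{B}_{iT}^{-1}-\mathbf{B}_{i}^{-1})\mathbf{q}_{iT}]\right\vert \leq E[\Vert \mathbf{a}_{iT}\Vert \,\Vert \mathbf{B}_{iT}^{-1}-\mathbf{B}_{i}^{-1}\Vert \,\Vert \mathbf{q}_{iT}\Vert ]$ and H\"{o}lder's inequality with exponents $(4,2,4)$, giving the bound
\begin{equation*}
\left( E\Vert \mathbf{a}_{iT}\Vert ^{4}\right) ^{1/4}\left( E\Vert \mathbf{B}_{iT}^{-1}-\mathbf{B}_{i}^{-1}\Vert ^{2}\right) ^{1/2}\left( E\Vert \mathbf{q}_{iT}\Vert ^{4}\right) ^{1/4}.
\end{equation*}
The first and third factors are $O(1)$ by Lemma \ref{lb1} and by (\ref{rap}) of Lemma \ref{lra}, so everything reduces to a mean-square rate for the inverse difference.

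The heart of the argument---and the main obstacle---is therefore a quantitative, $L^{2}$ strengthening of Lemma \ref{lh}, namely $E\Vert \mathbf{B}_{iT}^{-1}-\mathbf{B}_{i}^{-1}\Vert ^{2}=O(T^{-1})$ uniformly in $i$. I would obtain it from the identity $\mathbf{B}_{iT}^{-1}-\mathbf{B}_{i}^{-1}=-\mathbf{B}_{iT}^{-1}(\mathbf{B}_{iT}-\mathbf{B}_{i})\mathbf{B}_{i}^{-1}$ and $\Vert \mathbf{B}_{i}^{-1}\Vert <K$ (the eigenvalues of $\mathbf{B}_{i}$ in (\ref{bi}) are bounded away from zero), which give $\Vert \mathbf{B}_{iT}^{-1}-\mathbf{B}_{i}^{-1}\Vert \leq K\Vert \mathbf{B}_{iT}^{-1}\Vert \,\Vert \mathbf{B}_{iT}-\mathbf{B}_{i}\Vert $; a Cauchy-Schwarz step then bounds $E\Vert \mathbf{B}_{iT}^{-1}-\mathbf{B}_{i}^{-1}\Vert ^{2}$ by $K\,(E\Vert \mathbf{B}_{iT}^{-1}\Vert ^{4})^{1/2}(E\Vert \mathbf{B}_{iT}-\mathbf{B}_{i}\Vert ^{4})^{1/2}$. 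The factor $E\Vert \mathbf{B}_{iT}-\mathbf{B}_{i}\Vert ^{4}=O(T^{-2})$ follows by upgrading the componentwise arguments of Lemma \ref{lh2} to $L^{4}$, using the finite eighth moments of $u_{x,it}$ and sixteenth moments of $v_{it}$ in Assumption \ref{As2}. The genuinely delicate factor is $E\Vert \mathbf{B}_{iT}^{-1}\Vert ^{4}=E[\lambda _{\min }^{-4}(\mathbf{B}_{iT})]$: since $\mathbf{B}_{iT}^{-1}$ is large precisely on the low-probability event where $\lambda _{\min }(\mathbf{B}_{iT})$ is near zero, boundedness of the inverse moments cannot be read off the convergence $\mathbf{B}_{iT}\rightarrow _{p}\mathbf{B}_{i}$ and must be supplied by Assumption \ref{As4}($i$), invoked---as already in the proof of Lemma \ref{lra}---at the inverse order needed to close Cauchy-Schwarz. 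Granting $E\Vert \mathbf{B}_{iT}^{-1}-\mathbf{B}_{i}^{-1}\Vert ^{2}=O(T^{-1})$, the remainder is $O(T^{-1/2})$ uniformly in $i$, so $\vert E(\xi _{iT})\vert =O(T^{-1/2})$ and $\tfrac{1}{\sqrt{nT}}\sum_{i=1}^{n}E(\xi _{iT})=O(\sqrt{n}/T)=O(T^{(\theta -2)/2})\rightarrow 0$, which is exactly where the restriction $\theta <2$ enters.
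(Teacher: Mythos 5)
Your proposal is correct, and it is in fact tighter than the paper's own proof at the decisive step. The paper uses the same reduction (its equation (\ref{tsb}), which incidentally carries a sign typo: given the definition (\ref{exi}), the $\xi_{iT}$ term enters with a minus sign, as you have it), notes $E\left( \mathbf{\tilde{x}}_{i}^{\prime }\mathbf{\tilde{v}}_{i}\right) =0$, and then disposes of the remaining term by citing Lemma \ref{sle}'s conclusion that $\frac{1}{\sqrt{nT}}\sum_{i=1}^{n}\xi _{iT}\rightarrow _{p}0$, from which it asserts $\frac{1}{\sqrt{nT}}\sum_{i=1}^{n}E\left( \xi _{iT}\right) \rightarrow 0$. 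That inference (convergence in probability implying convergence of means) does not hold without uniform integrability, and within the proof of Lemma \ref{sle} only the expectation $\mu _{iT}^{\ast }=E\left( \mathbf{a}_{iT}^{\prime }\mathbf{B}_{i}^{-1}\mathbf{q}_{iT}\right) $ is actually bounded, via (\ref{fr}); the expectation of $\mathbf{a}_{iT}^{\prime }\left( \mathbf{B}_{iT}^{-1}-\mathbf{B}_{i}^{-1}\right) \mathbf{q}_{iT}$ is controlled nowhere --- only its in-probability behavior is, via (\ref{an0}). Your H\"{o}lder-$(4,2,4)$ bound on exactly that expectation is the piece that turns the paper's argument into a complete proof, and your rate accounting is right: $\left\vert E\left( \xi _{iT}\right) \right\vert =O\left( T^{-1/2}\right) $ uniformly in $i$ yields a contribution of order $\sqrt{n}/T=O\left( T^{\left( \theta -2\right) /2}\right) \rightarrow 0$, which is the same place where $\theta <2$ bites in the paper's handling of $\mu _{iT}^{\ast }$. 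So the skeleton is the paper's, but the key step is done by a genuine bias bound rather than an (invalid) appeal to convergence in probability; what your route buys is correctness, at the cost of extra moment machinery.

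Two caveats on the ingredients you invoke. First, $E\left\Vert \mathbf{B}_{iT}-\mathbf{B}_{i}\right\Vert ^{4}=O\left( T^{-2}\right) $ is obtainable by redoing Lemma \ref{lh2} in $L^{4}$: each entry of $\mathbf{B}_{iT}-\mathbf{B}_{i}$ is a normalized sum of martingale-difference or geometrically dependent terms whose eighth moments exist under Assumption \ref{As2}, and the quadratic-form piece $\mathbf{\tilde{v}}_{i}^{\prime }\mathbf{P}_{i}\mathbf{\tilde{v}}_{i}/T$ already has fourth moment $O\left( T^{-4}\right) $ by (\ref{zc}); so this claim, though only sketched, does go through. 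Second, $E\left[ \lambda _{\min }^{-4}\left( \mathbf{B}_{iT}\right) \right] <K$ is strictly stronger than Assumption \ref{As4}($i$), which asserts only the inverse second moment. You are right that the paper itself invokes inverse fourth moments ``under Assumption \ref{As4}'' in the proof of Lemma \ref{lra}, so your usage matches the authors' own reading of their assumption; but strictly speaking, both your proof and theirs require Assumption \ref{As4} to be restated with the exponent actually used.
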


\begin{proof}
We have 
\begin{equation}
\frac{1}{\sqrt{n}}\sum_{i=1}^{n}\frac{\mathbf{\tilde{x}}_{i}^{\prime }%
\mathbf{M}_{i}\mathbf{\tilde{v}}_{i}}{\alpha _{i}T}=\frac{1}{\sqrt{n}}%
\sum_{i=1}^{n}\frac{\mathbf{\tilde{x}}_{i}^{\prime }\mathbf{\tilde{v}}_{i}}{%
\alpha _{i}T}+\frac{1}{\sqrt{nT}}\sum_{i=1}^{n}\xi _{iT}\text{,}  \label{tsb}
\end{equation}%
where $\xi _{iT}$ is defined by (\ref{exi}). For the first term, $E\left( 
\mathbf{\tilde{x}}_{i}^{\prime }\mathbf{\tilde{v}}_{i}\right) =0$. For the
second term, Lemma \ref{sle} established $\frac{1}{\sqrt{nT}}%
\sum_{i=1}^{n}\xi _{iT}\rightarrow _{p}0$, as $n,T\rightarrow \infty $\ such
that $n=\Theta \left( T^{\theta }\right) $, for some $0<\theta <2$. Hence it
follows that $\frac{1}{\sqrt{n}}\sum_{i=1}^{n}E\left( \frac{\mathbf{\tilde{x}%
}_{i}^{\prime }\mathbf{M}_{i}\mathbf{\tilde{v}}_{i}}{\alpha _{i}T}\right)
\rightarrow 0$, as $n$,$T\rightarrow \infty $\ such that $n=\Theta \left(
T^{\theta }\right) $, for some $0<\theta <2$.
\end{proof}

\bigskip

\begin{lemma}
\label{lad} \textbf{\ }Suppose conditions of Theorem \ref{T1} hold. Then 
\begin{equation}
\frac{1}{\sqrt{n}}\sum_{i=1}^{n}\left[ \frac{\mathbf{\tilde{x}}_{i}^{\prime }%
\mathbf{M}_{i}\mathbf{\tilde{v}}_{i}}{\alpha _{i}T}-E\left( \frac{\mathbf{%
\tilde{x}}_{i}^{\prime }\mathbf{M}_{i}\mathbf{\tilde{v}}_{i}}{\alpha _{i}T}%
\right) \right] \rightarrow _{d}N\left( 0,\omega _{v}^{2}\right) \text{,}
\label{rcd}
\end{equation}%
where $\omega _{v}^{2}=\lim {}_{n\rightarrow \infty
}n^{-1}\sum_{i=1}^{n}\sigma _{xi}^{2}\sigma _{vi}^{2}/\left( 6\alpha
_{i}^{2}\right) $, $\mathbf{M}_{i}$ is defined by (\ref{mi}), and $\mathbf{%
\tilde{x}}_{i}$ and $\mathbf{\tilde{v}}_{i}$ are defined below (\ref{i}).
\end{lemma}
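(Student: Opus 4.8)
The plan is to reduce the statement to two asymptotic results already established, one for the ``raw'' score $\mathbf{\tilde{x}}_{i}^{\prime }\mathbf{\tilde{v}}_{i}$ and one for the instrument-correction term $\xi _{iT}$. The starting point is an algebraic decomposition of the quadratic form. Since $\mathbf{\tilde{x}}_{i}$ is one of the columns spanning $\mathbf{\tilde{H}}_{i}$, we have $\mathbf{P}_{i}\mathbf{\tilde{x}}_{i}=\mathbf{\tilde{x}}_{i}$, so that $\mathbf{\tilde{x}}_{i}^{\prime }\mathbf{P}_{i}=\mathbf{\tilde{x}}_{i}^{\prime }$. Substituting the definition (\ref{mi}) of $\mathbf{M}_{i}$ and the definition (\ref{exi}) of $\xi _{iT}$, one obtains exactly the identity (\ref{tsb}) used in the proof of Lemma \ref{leb}, namely
\begin{equation*}
\frac{1}{\sqrt{n}}\sum_{i=1}^{n}\frac{\mathbf{\tilde{x}}_{i}^{\prime }\mathbf{M}_{i}\mathbf{\tilde{v}}_{i}}{\alpha _{i}T}=\frac{1}{\sqrt{n}}\sum_{i=1}^{n}\frac{\mathbf{\tilde{x}}_{i}^{\prime }\mathbf{\tilde{v}}_{i}}{\alpha _{i}T}+\frac{1}{\sqrt{nT}}\sum_{i=1}^{n}\xi _{iT}.
\end{equation*}

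Next I would handle the centering. Because $u_{x,it}$ is independent of $v_{i^{\prime }t^{\prime }}$ under Assumption \ref{As2}, we have $E(\mathbf{\tilde{x}}_{i}^{\prime }\mathbf{\tilde{v}}_{i})=0$, so subtracting the mean affects only the correction term. Consequently the quantity in (\ref{rcd}) can be written as
\begin{equation*}
\frac{1}{\sqrt{n}}\sum_{i=1}^{n}\frac{\mathbf{\tilde{x}}_{i}^{\prime }\mathbf{\tilde{v}}_{i}}{\alpha _{i}T}+\frac{1}{\sqrt{nT}}\sum_{i=1}^{n}\left[ \xi _{iT}-E\left( \xi _{iT}\right) \right].
\end{equation*}
The first sum is precisely the object treated in Lemma \ref{sld}, which delivers $\frac{1}{\sqrt{n}}\sum_{i=1}^{n}\mathbf{\tilde{x}}_{i}^{\prime }\mathbf{\tilde{v}}_{i}/(\alpha _{i}T)\rightarrow _{d}N(0,\omega _{v}^{2})$.

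For the remaining term I would show it is $o_{p}(1)$ by splitting it into its recentred sum and its mean. Lemma \ref{sle} gives $\frac{1}{\sqrt{nT}}\sum_{i=1}^{n}\xi _{iT}\rightarrow _{p}0$ under $n=\Theta(T^{\theta })$ with $0<\theta <2$; and the averaged mean $\frac{1}{\sqrt{nT}}\sum_{i=1}^{n}E(\xi _{iT})$ coincides with $\frac{1}{\sqrt{n}}\sum_{i=1}^{n}E\big(\mathbf{\tilde{x}}_{i}^{\prime }\mathbf{M}_{i}\mathbf{\tilde{v}}_{i}/(\alpha _{i}T)\big)$ (again because $E(\mathbf{\tilde{x}}_{i}^{\prime }\mathbf{\tilde{v}}_{i})=0$), which tends to $0$ by Lemma \ref{leb}. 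Hence $\frac{1}{\sqrt{nT}}\sum_{i=1}^{n}[\xi _{iT}-E(\xi _{iT})]\rightarrow _{p}0$, and an application of Slutsky's theorem to the two displays above yields (\ref{rcd}).

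Finally, the obstacle. The genuinely hard analysis --- the joint central limit theorem (obtained in Lemma \ref{sld} via Theorem~3 of Phillips and Moon) and the negligibility of the instrument-correction term (Lemmas \ref{sle} and \ref{leb}, which in turn rest on the moment and eigenvalue bounds of Lemmas \ref{lb1}, \ref{lra} and \ref{lh}) --- is carried entirely by the preceding results. The only points requiring care here are the verification of the projection identity $\mathbf{P}_{i}\mathbf{\tilde{x}}_{i}=\mathbf{\tilde{x}}_{i}$ that produces the clean decomposition, and the observation that recentring does not perturb the limit because the bias of the leading term is exactly zero while that of the correction term vanishes at the admissible rates. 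It is precisely at this point that the restriction $0<\theta <2$ on the relative expansion of $n$ and $T$ is inherited, since Lemmas \ref{sle} and \ref{leb} require it.
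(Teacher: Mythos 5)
Your proposal is correct and follows essentially the same route as the paper's own proof: it uses the identity (\ref{tsb}) together with $E\left( \mathbf{\tilde{x}}_{i}^{\prime }\mathbf{\tilde{v}}_{i}\right) =0$ to reduce the claim to the CLT of Lemma \ref{sld} for the leading term, and to Lemmas \ref{sle} and \ref{leb} for the negligibility of the recentred correction term $\frac{1}{\sqrt{nT}}\sum_{i=1}^{n}\left[ \xi _{iT}-E\left( \xi _{iT}\right) \right] $. The only difference is expository: you spell out the projection identity $\mathbf{P}_{i}\mathbf{\tilde{x}}_{i}=\mathbf{\tilde{x}}_{i}$ and the equality $\frac{1}{\sqrt{nT}}\sum_{i=1}^{n}E\left( \xi _{iT}\right) =\frac{1}{\sqrt{n}}\sum_{i=1}^{n}E\left( \mathbf{\tilde{x}}_{i}^{\prime }\mathbf{M}_{i}\mathbf{\tilde{v}}_{i}/\left( \alpha _{i}T\right) \right) $, which the paper leaves implicit.
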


\begin{proof}
It is convenient to use (\ref{tsb}) in (\ref{rcd}) to obtain%
\begin{equation*}
\frac{1}{\sqrt{n}}\sum_{i=1}^{n}\left[ \frac{\mathbf{\tilde{x}}_{i}^{\prime }%
\mathbf{M}_{i}\mathbf{\tilde{v}}_{i}}{\alpha _{i}T}-E\left( \frac{\mathbf{%
\tilde{x}}_{i}^{\prime }\mathbf{M}_{i}\mathbf{\tilde{v}}_{i}}{\alpha _{i}T}%
\right) \right] =\frac{1}{\sqrt{n}}\sum_{i=1}^{n}\frac{\mathbf{\tilde{x}}%
_{i}^{\prime }\mathbf{\tilde{v}}_{i}}{\alpha _{i}T}+\frac{1}{\sqrt{nT}}%
\sum_{i=1}^{n}\left[ \xi _{iT}-E\left( \xi _{iT}\right) \right] \text{,}
\end{equation*}%
where $E\left( \mathbf{\tilde{x}}_{i}^{\prime }\mathbf{\tilde{v}}_{i}\right)
=0$. It follows from Lemmas \ref{sle} and \ref{leb} that $\frac{1}{\sqrt{nT}}%
\sum_{i=1}^{n}\left[ \xi _{iT}-E\left( \xi _{iT}\right) \right] \rightarrow
_{p}0$, as $n,T\rightarrow \infty $\ such that $n=\Theta \left( T^{\theta
}\right) $, for some $0<\theta <2$. Hence, under the conditions of Theorem %
\ref{T1}, the asymptotic distribution of $\frac{1}{\sqrt{n}}\sum_{i=1}^{n}%
\left[ \frac{\mathbf{\tilde{x}}_{i}^{\prime }\mathbf{M}_{i}\mathbf{\tilde{v}}%
_{i}}{\alpha _{i}T}-E\left( \frac{\mathbf{\tilde{x}}_{i}^{\prime }\mathbf{M}%
_{i}\mathbf{\tilde{v}}_{i}}{\alpha _{i}T}\right) \right] $ is given by the
first term, $\frac{1}{\sqrt{n}}\sum_{i=1}^{n}\frac{\mathbf{\tilde{x}}%
_{i}^{\prime }\mathbf{\tilde{v}}_{i}}{\alpha _{i}T}$, alone. Lemma \ref{sld}
establishes the asymptotic normality of this term.%
\begin{equation}
\frac{1}{\sqrt{n}}\sum_{i=1}^{n}\frac{\mathbf{\tilde{x}}_{i}^{\prime }%
\mathbf{M}_{i}\mathbf{\tilde{v}}_{i}}{\alpha _{i}T}\rightarrow _{d}N\left(
0,\omega _{v}^{2}\right) \text{,}
\end{equation}%
as $n,T\rightarrow \infty $. This completes the proof.
\end{proof}

\bigskip

\section{Estimation algorithms}

\noindent This appendix describes implementation of the Pooled Mean Group
(PMG) estimator, which we compute iteratively, in Section \ref{EAP1}.
Section \ref{Sbcp} discusses implementation of bias-correction methods and
bootstrapping of critical values for the PMG, PDOLS and FMOLS estimators.
Section \ref{A4} provides tables with Monte Carlo findings for experiments
with cross-sectionally dependent errors.

\subsection{\label{EAP1}Computation of PMG estimator}

Consider the same illustrative panel ARDL model as in the paper, namely the
model given by equations (\ref{y})-(\ref{x}). PMG estimator of the long-run
coefficient $\beta $, as originally proposed by 
\citeN{PesaranShinSmith1999}%
, is computed by solving the following equations iteratively:%
\begin{equation}
\hat{\beta}_{PMG}=-\left( \sum_{i=1}^{n}\frac{\hat{\phi}_{i}^{2}}{\hat{\sigma%
}_{i}^{2}}\mathbf{x}_{i}^{\prime }\mathbf{H}_{x,i}\mathbf{x}_{i}\right)
^{-1}\sum_{i=1}^{n}\frac{\hat{\phi}_{i}^{2}}{\hat{\sigma}_{i}^{2}}\mathbf{x}%
_{i}^{\prime }\mathbf{H}_{x,i}\left( \Delta \mathbf{y}_{i}-\hat{\phi}_{i}%
\mathbf{y}_{i,-1}\right) \text{,}  \label{pmg1}
\end{equation}%
\begin{equation}
\hat{\phi}_{i}=\left( \mathbf{\hat{\xi}}_{i}^{\prime }\mathbf{H}_{x,i}%
\mathbf{\hat{\xi}}_{i}\right) ^{-1}\mathbf{\hat{\xi}}_{i}^{\prime }\mathbf{H}%
_{x,i}\Delta \mathbf{y}_{i}\text{, }i=1,2,...,n\text{,}  \label{pmg2}
\end{equation}%
and%
\begin{equation}
\hat{\sigma}_{i}^{2}=T^{-1}\left( \Delta \mathbf{y}_{i}-\hat{\phi}_{i}%
\mathbf{\hat{\xi}}_{i}\right) ^{\prime }\mathbf{H}_{x,i}\left( \Delta 
\mathbf{y}_{i}-\hat{\phi}_{i}\mathbf{\hat{\xi}}_{i}\right) \text{, }%
i=1,2,...,n\text{,}  \label{pmg3}
\end{equation}%
where $\mathbf{\hat{\xi}}_{i}=\mathbf{y}_{i,-1}-\mathbf{x}_{i}\hat{\beta}%
_{PMG}$, $\mathbf{x}_{i}=\left( x_{i,1},x_{i,2},...,x_{i,T}\right) ^{\prime
} $, $\Delta \mathbf{y}_{i}=\mathbf{y}_{i}-\mathbf{y}_{i,-1}$, $\mathbf{y}%
_{i}=\left( y_{i,1},y_{i,2},...,y_{i,T}\right) ^{\prime }$, $\mathbf{y}%
_{i,-1}=\left( y_{i,0},y_{i,1},...,y_{i,T-1}\right) ^{\prime }$, $\mathbf{H}%
_{x,i}=\mathbf{I}_{T}-\Delta \mathbf{x}_{i}\left( \Delta \mathbf{x}%
_{i}^{\prime }\Delta \mathbf{x}_{i}\right) ^{-1}\Delta \mathbf{x}%
_{i}^{\prime }$, $\Delta \mathbf{x}_{i}=\mathbf{x}_{i}-\mathbf{x}_{i,-1}$,
and $\mathbf{x}_{i,-1}\mathbf{=}\left( x_{i,0},x_{i,1},...,x_{i,T-1}\right)
^{\prime }$.

To solve (\ref{pmg1})-(\ref{pmg3}) iteratively, we set $\hat{\beta}%
_{PMG,\left( 0\right) }$ to the pooled Engle-Granger estimator, and given
the initial estimate $\hat{\beta}_{PMG,\left( 0\right) }$, we compute $%
\mathbf{\hat{\xi}}_{i,\left( 0\right) }=\mathbf{y}_{i,-1}-\mathbf{x}_{i}\hat{%
\beta}_{PMG,\left( 0\right) }$, $\hat{\phi}_{i,\left( 0\right) }$ and $\hat{%
\sigma}_{i,\left( 0\right) }^{2}$, for $i=1,2,...,n$ using (\ref{pmg2})-(\ref%
{pmg3}). Next we compute $\hat{\beta}_{PMG,\left( 1\right) }$ using (\ref%
{pmg1}) and given values $\hat{\phi}_{i,\left( 0\right) }$ and $\hat{\sigma}%
_{i,\left( 0\right) }^{2}$. Then we iterate - for a given value of $\hat{%
\beta}_{PMG,\left( \ell \right) }$ we compute $\mathbf{\hat{\xi}}_{i,\left(
\ell \right) }$, $\hat{\phi}_{i,\left( \ell \right) }$ and $\hat{\sigma}%
_{i,\left( \ell \right) }^{2}$; and for given values of $\hat{\phi}%
_{i,\left( \ell \right) }$ and $\hat{\sigma}_{i,\left( \ell \right) }^{2}$
we compute $\hat{\beta}_{PMG,\left( \ell +1\right) }$. If convergence is not
achieved, we increase $\ell $ by one and repeat. We define convergence by $%
\left\vert \hat{\beta}_{PMG,\left( \ell +1\right) }-\hat{\beta}_{PMG,\left(
\ell \right) }\right\vert <10^{-4}$.

Inference is conducted using equation (17) of 
\citeN{PesaranShinSmith1999}%
. In particular, 
\begin{equation*}
T\sqrt{n}\left( \hat{\beta}_{PMG}-\beta _{0}\right) \sim N\left( 0,\Omega
_{PMG}\right) \text{, }
\end{equation*}%
where%
\begin{equation*}
\Omega _{PMG}=\left( \frac{1}{n}\sum_{i=1}^{n}\frac{\phi _{i,0}}{\sigma
_{i,0}^{2}}r_{x_{i},x_{i}}\right) ^{-1}\text{, and }r_{x_{i},x_{i}}=plim_{T%
\rightarrow \infty }T^{-2}\mathbf{x}_{i}^{\prime }\mathbf{H}_{x,i}\mathbf{x}%
_{i}\text{ .}
\end{equation*}%
Standard error of $\hat{\beta}_{PMG}$, denoted as $se\left( \hat{\beta}%
_{PMG}\right) $, is estimated as 
\begin{equation*}
\widehat{se}\left( \hat{\beta}_{PMG}\right) =T^{-1}n^{-1/2}\hat{\Omega}_{PMG}%
\text{, }
\end{equation*}%
where%
\begin{equation}
\hat{\Omega}_{PMG}=\left( \frac{1}{n}\sum_{i=1}^{n}\frac{\hat{\phi}_{i,0}}{%
\hat{\sigma}_{i,0}^{2}}\hat{r}_{x_{i},x_{i}}\right) ^{-1}\text{ and }\hat{r}%
_{x_{i},x_{i}}=T^{-2}\mathbf{x}_{i}^{\prime }\mathbf{H}_{x,i}\mathbf{x}_{i}%
\text{.}  \label{omp}
\end{equation}

\subsection{Bias-corrected PMG, PDOLS and FMOLS estimators, and bootstrapped
critical values\label{Sbcp}}

Similarly to the bootstrap bias-corrected PB estimator, we consider the
following bootstrap bias-corrected PMG, PDOLS and FMOLS estimators. Let the
original (uncorrected) estimators be denoted as $\hat{\beta}_{e}$ for $%
e=PMG,PDOLS$, and $FMOLS$, respectively. Bootstrap bias corrected version of
these estimators is given by%
\begin{equation}
\tilde{\beta}_{e}=\hat{\beta}_{e}-\hat{b}_{e}\text{,}  \label{bspmg}
\end{equation}%
for $e=PMG,PDOLS$, and $FMOLS$, where $\hat{b}_{e}$ an estimate of the bias
obtained by the following sieve wild bootstrap algorithm, which resembles
the algorithm in Subsection \ref{Sbbc}. For e$=PMG,PDOLS$, and $FMOLS:$

\begin{enumerate}
\item Compute $\hat{\beta}_{e}$. Given $\hat{\beta}_{e}$, estimate the
remaining unknown coefficients of (\ref{y})-(\ref{x}) by least squares, and
compute residuals $\hat{u}_{y,it}^{e},\hat{u}_{x,it}^{e}$.

\item For each $r=1,2,...,R$, generate new draws for $\hat{u}%
_{y,it}^{e,\left( r\right) }=a_{t}^{\left( r\right) }\hat{u}_{y,it}^{e}$,
and $\hat{u}_{x,it}^{e,\left( r\right) }=a_{t}^{\left( r\right) }\hat{u}%
_{x,it}^{e}$, where $a_{t}^{\left( r\right) }$ are randomly drawn from
Rademacher distribution (%
\citeANP{Liu1988}, \citeyearNP{Liu1988}%
) namely%
\begin{equation*}
a_{t}^{\left( r\right) }=\left\{ 
\begin{array}{cc}
-1\text{,} & \text{with probability 1/2} \\ 
1\text{,} & \text{with probability 1/2}%
\end{array}%
\right. \text{.}
\end{equation*}%
Given the estimated parameters of (\ref{y})-(\ref{x}) from Step 1, and
initial values $y_{i1},x_{i1}$ generate simulated data $y_{it}^{e,\left(
r\right) },x_{it}^{e,\left( r\right) }$ for $t=2,3,...,T$ and $i=1,2,...,n$.
Using the generated data compute $\hat{\beta}_{e}^{\left( r\right) }$.

\item Compute $\hat{b}_{e}=\left[ R^{-1}\sum_{r=1}^{R}\hat{\beta}%
_{e}^{\left( r\right) }-\hat{\beta}_{e}\right] $.
\end{enumerate}

The possibility of iterating the algorithm above by using the bias-corrected
estimate $\tilde{\beta}_{e}$ in Step 1 is not considered in this paper.

We conduct inference by using the $1-\alpha $ confidence interval $%
C_{1-\alpha }\left( \tilde{\beta}_{e}\right) =\tilde{\beta}_{e}\pm \hat{k}%
_{e}\widehat{se}\left( \hat{\beta}_{e}\right) =\tilde{\beta}_{e}\pm
T^{-1}n^{-1/2}\hat{k}_{e}\hat{\Omega}_{e}$, where $\hat{k}_{e}$ is the $%
1-\alpha $ percent quantile of $\left\{ \left\vert t_{e}^{\left( r\right)
}\right\vert \right\} _{r=1}^{R}$, in which $t_{e}^{\left( r\right) }=\tilde{%
\beta}_{e}^{\left( r\right) }/\widehat{se}\left( \hat{\beta}_{e}^{\left(
r\right) }\right) =T^{-1}n^{-1/2}\tilde{\beta}_{e}^{\left( r\right) }/\hat{%
\Omega}_{e}^{\left( r\right) }$, $\tilde{\beta}_{e}^{\left( r\right) }=\hat{%
\beta}_{e}^{\left( r\right) }-\hat{b}_{e}$ is the bias-corrected PMG
estimate of $\beta $ in the $r$-$th$ draw of the bootstrap data in the
algorithm above, and $\hat{\Omega}_{e}^{\left( r\right) }$ is estimated
standard error using the bootstrap data.

\subsubsection{Jackknife bias-corrections}

We consider similar jackknife bias correction for PMG, PDOLS and FMOLS
estimator as for the PB estimator in Section \ref{BM}. In particular,

\begin{equation*}
\tilde{\beta}_{jk,e}=\tilde{\beta}_{jk,e}\left( \kappa \right) =\hat{\beta}%
_{e}-\kappa \left( \frac{\hat{\beta}_{e,a}+\hat{\beta}_{e,b}}{2}-\hat{\beta}%
_{e}\right) \text{,}
\end{equation*}%
for $e=PMG,PDOLS$, and $FMOLS$, where $\hat{\beta}_{e}$ is the full sample
estimator, $\hat{\beta}_{e,a}$ and $\hat{\beta}_{e,b}$ are the first and the
second half sub-sample estimators, and $\kappa =1/3$ is the same weighting
parameter as in Section \ref{BM}.

We conduct inference by using the $1-\alpha $ confidence interval $%
C_{1-\alpha }\left( \tilde{\beta}_{jk,e}\right) =\tilde{\beta}_{jk,e}\pm 
\hat{k}_{jk,e}\widehat{se}\left( \hat{\beta}_{e}\right) =\tilde{\beta}%
_{jk,e}\pm \hat{k}_{jk,e}T^{-1}n^{-1/2}\hat{\Omega}_{e}$, where $\hat{k}%
_{jk} $ is the $1-\alpha $ percent quantile of $\left\{ \left\vert
t_{jk,e}^{\left( r\right) }\right\vert \right\} _{r=1}^{R}$, in which $%
t_{jk,e}^{\left( r\right) }=\tilde{\beta}_{jk,e}^{\left( r\right) }/\widehat{%
se}\left( \hat{\beta}_{e}^{\left( r\right) }\right) =T^{-1}n^{-1/2}\tilde{%
\beta}_{jk,e}^{\left( r\right) }/\hat{\Omega}_{e}^{\left( r\right) }$, $%
\tilde{\beta}_{jk,e}^{\left( r\right) }$ is the jackknife bias-corrected
estimate of $\beta $ using the $r$-$th$ draw of the bootstrap data generated
using the same algorithm as in Subsection \ref{Sbcp}, and $\hat{\Omega}%
_{e}^{\left( r\right) }$ is estimated standard error using the bootstrap
data.

\bigskip

\subsection{\label{A4}Monte Carlo results for experiments with
cross-sectionally dependent errors}

\bigskip \pagebreak

\begin{center}
\textbf{Table B1: }MC findings for the estimation of long-run coefficient $%
\beta $ in experiments with cross-sectionally dependent errors.

Estimators without bias correction and inference conducted using standard
critical values.

\bigskip

\footnotesize%

\renewcommand{\arraystretch}{1.0}\setlength{\tabcolsep}{3pt}%
\scriptsize%
\begin{tabular}{rrrrrccrrrrccrrrrccrrrr}
\hline\hline
& \multicolumn{4}{c}{\textbf{Bias (}$\mathbf{\times }$\textbf{\ 100)}} &  & 
& \multicolumn{4}{c}{\textbf{RMSE (}$\mathbf{\times }$\textbf{\ 100)}} &  & 
& \multicolumn{4}{c}{\textbf{Size (5\% level)}} &  &  & \multicolumn{4}{c}{%
\textbf{Power (5\% level)}} \\ 
\cline{2-5}\cline{8-11}\cline{14-17}\cline{20-23}
$n\backslash T$ & \textbf{20} & \textbf{30} & \textbf{40} & \textbf{50} & 
\multicolumn{1}{r}{} & \multicolumn{1}{r}{} & \textbf{20} & \textbf{30} & 
\textbf{40} & \textbf{50} & \multicolumn{1}{r}{} & \multicolumn{1}{r}{} & 
\textbf{20} & \textbf{30} & \textbf{40} & \textbf{50} & \multicolumn{1}{r}{}
& \multicolumn{1}{r}{} & \textbf{20} & \textbf{30} & \textbf{40} & \textbf{50%
} \\ \cline{2-23}
& \multicolumn{22}{l}{PB} \\ \hline
\textbf{20} & -3.83 & -1.93 & -1.01 & -0.66 & \multicolumn{1}{r}{} & 
\multicolumn{1}{r}{} & 7.58 & 4.92 & 3.57 & 2.78 & \multicolumn{1}{r}{} & 
\multicolumn{1}{r}{} & 28.00 & 21.95 & 17.55 & 16.80 & \multicolumn{1}{r}{}
& \multicolumn{1}{r}{} & 37.30 & 64.50 & 86.90 & 96.85 \\ 
\textbf{30} & -3.57 & -1.98 & -1.17 & -0.69 & \multicolumn{1}{r}{} & 
\multicolumn{1}{r}{} & 6.67 & 4.43 & 3.32 & 2.46 & \multicolumn{1}{r}{} & 
\multicolumn{1}{r}{} & 28.95 & 23.55 & 22.10 & 18.80 & \multicolumn{1}{r}{}
& \multicolumn{1}{r}{} & 45.10 & 75.25 & 92.70 & 99.00 \\ 
\textbf{40} & -3.94 & -2.04 & -1.10 & -0.74 & \multicolumn{1}{r}{} & 
\multicolumn{1}{r}{} & 6.58 & 4.20 & 3.08 & 2.37 & \multicolumn{1}{r}{} & 
\multicolumn{1}{r}{} & 33.60 & 27.25 & 23.75 & 22.90 & \multicolumn{1}{r}{}
& \multicolumn{1}{r}{} & 46.60 & 82.00 & 96.70 & 99.65 \\ 
\textbf{50} & -3.88 & -1.99 & -1.10 & -0.76 & \multicolumn{1}{r}{} & 
\multicolumn{1}{r}{} & 6.35 & 4.01 & 2.89 & 2.23 & \multicolumn{1}{r}{} & 
\multicolumn{1}{r}{} & 38.30 & 29.55 & 25.30 & 23.30 & \multicolumn{1}{r}{}
& \multicolumn{1}{r}{} & 50.70 & 85.60 & 97.75 & 99.95 \\ \hline
& \multicolumn{22}{l}{PMG} \\ \hline
\textbf{20} & -1.96 & -0.90 & -0.40 & -0.26 & \multicolumn{1}{r}{} & 
\multicolumn{1}{r}{} & 9.05 & 5.49 & 3.86 & 3.00 & \multicolumn{1}{r}{} & 
\multicolumn{1}{r}{} & 45.75 & 34.00 & 26.30 & 23.90 & \multicolumn{1}{r}{}
& \multicolumn{1}{r}{} & 62.45 & 78.50 & 92.45 & 98.15 \\ 
\textbf{30} & -1.60 & -1.05 & -0.56 & -0.31 & \multicolumn{1}{r}{} & 
\multicolumn{1}{r}{} & 7.76 & 4.83 & 3.54 & 2.58 & \multicolumn{1}{r}{} & 
\multicolumn{1}{r}{} & 46.35 & 37.25 & 32.75 & 27.35 & \multicolumn{1}{r}{}
& \multicolumn{1}{r}{} & 70.50 & 85.25 & 95.60 & 99.65 \\ 
\textbf{40} & -1.79 & -1.02 & -0.50 & -0.32 & \multicolumn{1}{r}{} & 
\multicolumn{1}{r}{} & 7.07 & 4.51 & 3.27 & 2.44 & \multicolumn{1}{r}{} & 
\multicolumn{1}{r}{} & 49.40 & 40.80 & 35.00 & 29.90 & \multicolumn{1}{r}{}
& \multicolumn{1}{r}{} & 73.05 & 90.25 & 97.65 & 99.85 \\ 
\textbf{50} & -1.76 & -0.89 & -0.46 & -0.32 & \multicolumn{1}{r}{} & 
\multicolumn{1}{r}{} & 6.70 & 4.21 & 3.00 & 2.28 & \multicolumn{1}{r}{} & 
\multicolumn{1}{r}{} & 54.30 & 42.95 & 34.20 & 31.55 & \multicolumn{1}{r}{}
& \multicolumn{1}{r}{} & 76.40 & 93.90 & 98.90 & 100.00 \\ \cline{2-23}
& \multicolumn{22}{l}{PDOLS} \\ \hline
\textbf{20} & -6.20 & -4.10 & -3.08 & -2.40 & \multicolumn{1}{r}{} & 
\multicolumn{1}{r}{} & 9.50 & 6.31 & 4.67 & 3.67 & \multicolumn{1}{r}{} & 
\multicolumn{1}{r}{} & 31.40 & 27.30 & 25.70 & 26.30 & \multicolumn{1}{r}{}
& \multicolumn{1}{r}{} & 22.35 & 39.95 & 65.20 & 85.80 \\ 
\textbf{30} & -5.76 & -4.15 & -3.08 & -2.37 & \multicolumn{1}{r}{} & 
\multicolumn{1}{r}{} & 8.65 & 5.89 & 4.39 & 3.41 & \multicolumn{1}{r}{} & 
\multicolumn{1}{r}{} & 34.85 & 33.45 & 32.15 & 33.35 & \multicolumn{1}{r}{}
& \multicolumn{1}{r}{} & 27.95 & 48.60 & 77.05 & 93.55 \\ 
\textbf{40} & -6.29 & -4.24 & -3.12 & -2.44 & \multicolumn{1}{r}{} & 
\multicolumn{1}{r}{} & 8.67 & 5.82 & 4.27 & 3.35 & \multicolumn{1}{r}{} & 
\multicolumn{1}{r}{} & 42.95 & 41.85 & 38.60 & 38.80 & \multicolumn{1}{r}{}
& \multicolumn{1}{r}{} & 26.95 & 52.35 & 82.15 & 97.05 \\ 
\textbf{50} & -6.12 & -4.19 & -3.09 & -2.47 & \multicolumn{1}{r}{} & 
\multicolumn{1}{r}{} & 8.46 & 5.62 & 4.16 & 3.27 & \multicolumn{1}{r}{} & 
\multicolumn{1}{r}{} & 47.90 & 45.45 & 45.15 & 45.05 & \multicolumn{1}{r}{}
& \multicolumn{1}{r}{} & 31.50 & 59.55 & 88.25 & 98.65 \\ 
& \multicolumn{22}{l}{FMOLS} \\ \hline
\textbf{20} & -10.89 & -7.25 & -5.42 & -4.20 & \multicolumn{1}{r}{} & 
\multicolumn{1}{r}{} & 13.27 & 9.09 & 6.79 & 5.37 & \multicolumn{1}{r}{} & 
\multicolumn{1}{r}{} & 85.10 & 76.30 & 68.35 & 63.05 & \multicolumn{1}{r}{}
& \multicolumn{1}{r}{} & 54.40 & 54.40 & 73.30 & 88.35 \\ 
\textbf{30} & -10.32 & -7.26 & -5.29 & -4.06 & \multicolumn{1}{r}{} & 
\multicolumn{1}{r}{} & 12.28 & 8.70 & 6.45 & 5.04 & \multicolumn{1}{r}{} & 
\multicolumn{1}{r}{} & 88.60 & 82.70 & 76.00 & 71.35 & \multicolumn{1}{r}{}
& \multicolumn{1}{r}{} & 57.00 & 60.70 & 82.20 & 95.25 \\ 
\textbf{40} & -10.94 & -7.60 & -5.45 & -4.29 & \multicolumn{1}{r}{} & 
\multicolumn{1}{r}{} & 12.57 & 8.76 & 6.41 & 5.10 & \multicolumn{1}{r}{} & 
\multicolumn{1}{r}{} & 91.90 & 88.20 & 82.85 & 77.50 & \multicolumn{1}{r}{}
& \multicolumn{1}{r}{} & 60.10 & 61.75 & 84.95 & 96.65 \\ 
\textbf{50} & -10.72 & -7.42 & -5.38 & -4.30 & \multicolumn{1}{r}{} & 
\multicolumn{1}{r}{} & 12.26 & 8.52 & 6.28 & 5.00 & \multicolumn{1}{r}{} & 
\multicolumn{1}{r}{} & 93.50 & 89.15 & 85.40 & 82.35 & \multicolumn{1}{r}{}
& \multicolumn{1}{r}{} & 63.55 & 68.15 & 89.30 & 98.80 \\ \hline\hline
\end{tabular}%
\vspace{-0.2in}
\end{center}

\begin{flushleft}
\footnotesize%
\singlespacing%
Notes: DGP is given by $\Delta y_{it}=c_{i}-\alpha _{i}\left(
y_{i,t-1}-\beta x_{i,t-1}\right) +u_{y,it}$ and $\Delta x_{it}=u_{x,it}$,
for $i=1,2,...,n,$ $T=1,2,...,T$, with $\beta =1$ and $\alpha _{i}\sim IIDU%
\left[ 0.2,0.3\right] $. Errors $u_{y,it}$, $u_{x,it}$ are cross-sectionally
dependent, heteroskedastic over $i$, and also correlated over $y$ \& $x$
equations. See Section \ref{MCD} for complete description of the DGP. The
pooled Bewley estimator is given by (\ref{pmg}), with variance estimated
using (\ref{ve}). PMG is the Pooled Mean Group estimator proposed by 
\citeN{PesaranShinSmith1999}%
. PDOLS is panel dynamic OLS estimator by 
\citeN{MarkSul2003}%
. FMOLS is the group-mean fully modified OLS estimator by 
\citeANP{Pedroni1996} (\citeyearNP{Pedroni1996}, \citeyearNP{Pedroni2001ReStat})%
. The size and power findings are computed using 5\% nominal level and the
reported power is the rejection frequency for testing the hypothesis $\beta
=0.9$. 
\normalsize%
\pagebreak
\end{flushleft}

\begin{center}
\textbf{Table B2: }MC findings for the estimation of long-run coefficient $%
\beta $ in experiments with cross-sectionally dependent errors.

Bias corrected estimators and inference conducted using bootstrapped
critical values.

\bigskip

\footnotesize%

\renewcommand{\arraystretch}{1.0}\setlength{\tabcolsep}{3pt}%
\scriptsize%
\begin{tabular}{rrrrrccrrrrccrrrrccrrrr}
\hline\hline
& \multicolumn{4}{c}{\textbf{Bias (}$\mathbf{\times }$\textbf{\ 100)}} &  & 
& \multicolumn{4}{c}{\textbf{RMSE (}$\mathbf{\times }$\textbf{\ 100)}} &  & 
& \multicolumn{4}{c}{\textbf{Size (5\% level)}} &  &  & \multicolumn{4}{c}{%
\textbf{Power (5\% level)}} \\ 
\cline{2-5}\cline{8-11}\cline{14-17}\cline{20-23}
$n\backslash T$ & \textbf{20} & \textbf{30} & \textbf{40} & \textbf{50} & 
\multicolumn{1}{r}{} & \multicolumn{1}{r}{} & \textbf{20} & \textbf{30} & 
\textbf{40} & \textbf{50} & \multicolumn{1}{r}{} & \multicolumn{1}{r}{} & 
\textbf{20} & \textbf{30} & \textbf{40} & \textbf{50} & \multicolumn{1}{r}{}
& \multicolumn{1}{r}{} & \textbf{20} & \textbf{30} & \textbf{40} & \textbf{50%
} \\ \hline
& \multicolumn{22}{l}{\textbf{Jackknife bias-corrected estimators}} \\ \hline
& \multicolumn{22}{l}{PB} \\ \hline
\textbf{20} & -1.57 & -0.54 & -0.11 & -0.01 & \multicolumn{1}{r}{} & 
\multicolumn{1}{r}{} & 7.83 & 5.24 & 3.92 & 3.04 & \multicolumn{1}{r}{} & 
\multicolumn{1}{r}{} & 7.65 & 6.55 & 5.55 & 4.65 & \multicolumn{1}{r}{} & 
\multicolumn{1}{r}{} & 20.55 & 41.35 & 65.35 & 83.50 \\ 
\textbf{30} & -1.29 & -0.67 & -0.34 & -0.05 & \multicolumn{1}{r}{} & 
\multicolumn{1}{r}{} & 6.76 & 4.63 & 3.59 & 2.72 & \multicolumn{1}{r}{} & 
\multicolumn{1}{r}{} & 6.30 & 6.60 & 6.40 & 5.25 & \multicolumn{1}{r}{} & 
\multicolumn{1}{r}{} & 25.80 & 51.35 & 73.90 & 91.45 \\ 
\textbf{40} & -1.67 & -0.62 & -0.20 & -0.06 & \multicolumn{1}{r}{} & 
\multicolumn{1}{r}{} & 6.48 & 4.35 & 3.36 & 2.58 & \multicolumn{1}{r}{} & 
\multicolumn{1}{r}{} & 6.75 & 6.15 & 6.20 & 5.40 & \multicolumn{1}{r}{} & 
\multicolumn{1}{r}{} & 25.55 & 54.30 & 80.25 & 94.45 \\ 
\textbf{50} & -1.54 & -0.58 & -0.21 & -0.09 & \multicolumn{1}{r}{} & 
\multicolumn{1}{r}{} & 6.16 & 4.12 & 3.15 & 2.43 & \multicolumn{1}{r}{} & 
\multicolumn{1}{r}{} & 6.80 & 6.20 & 5.85 & 4.65 & \multicolumn{1}{r}{} & 
\multicolumn{1}{r}{} & 27.65 & 59.40 & 84.95 & 96.70 \\ \hline
& \multicolumn{22}{l}{PMG} \\ \hline
\textbf{20} & -0.54 & -0.15 & 0.02 & 0.06 & \multicolumn{1}{r}{} & 
\multicolumn{1}{r}{} & 10.66 & 6.26 & 4.45 & 3.44 & \multicolumn{1}{r}{} & 
\multicolumn{1}{r}{} & 16.05 & 10.80 & 8.55 & 7.90 & \multicolumn{1}{r}{} & 
\multicolumn{1}{r}{} & 29.10 & 45.05 & 69.15 & 84.75 \\ 
\textbf{30} & -0.25 & -0.39 & -0.19 & 0.03 & \multicolumn{1}{r}{} & 
\multicolumn{1}{r}{} & 9.10 & 5.50 & 4.02 & 2.94 & \multicolumn{1}{r}{} & 
\multicolumn{1}{r}{} & 14.80 & 10.40 & 9.05 & 6.90 & \multicolumn{1}{r}{} & 
\multicolumn{1}{r}{} & 36.30 & 53.85 & 76.40 & 93.40 \\ 
\textbf{40} & -0.57 & -0.27 & -0.08 & 0.03 & \multicolumn{1}{r}{} & 
\multicolumn{1}{r}{} & 8.25 & 5.13 & 3.78 & 2.78 & \multicolumn{1}{r}{} & 
\multicolumn{1}{r}{} & 14.10 & 10.50 & 9.15 & 7.80 & \multicolumn{1}{r}{} & 
\multicolumn{1}{r}{} & 35.10 & 59.50 & 81.25 & 95.55 \\ 
\textbf{50} & -0.41 & -0.10 & -0.06 & 0.03 & \multicolumn{1}{r}{} & 
\multicolumn{1}{r}{} & 7.81 & 4.81 & 3.44 & 2.63 & \multicolumn{1}{r}{} & 
\multicolumn{1}{r}{} & 14.85 & 11.05 & 8.65 & 7.65 & \multicolumn{1}{r}{} & 
\multicolumn{1}{r}{} & 39.55 & 64.65 & 85.90 & 97.15 \\ \cline{2-23}
& \multicolumn{22}{l}{PDOLS} \\ \hline
\textbf{20} & -4.61 & -2.84 & -2.06 & -1.54 & \multicolumn{1}{r}{} & 
\multicolumn{1}{r}{} & 9.89 & 6.23 & 4.51 & 3.51 & \multicolumn{1}{r}{} & 
\multicolumn{1}{r}{} & 8.75 & 7.50 & 6.45 & 4.95 & \multicolumn{1}{r}{} & 
\multicolumn{1}{r}{} & 10.05 & 21.55 & 39.00 & 63.80 \\ 
\textbf{30} & -4.09 & -2.96 & -2.15 & -1.55 & \multicolumn{1}{r}{} & 
\multicolumn{1}{r}{} & 8.86 & 5.72 & 4.21 & 3.20 & \multicolumn{1}{r}{} & 
\multicolumn{1}{r}{} & 8.70 & 7.30 & 6.55 & 5.40 & \multicolumn{1}{r}{} & 
\multicolumn{1}{r}{} & 12.20 & 24.05 & 46.10 & 73.10 \\ 
\textbf{40} & -4.66 & -2.94 & -2.13 & -1.57 & \multicolumn{1}{r}{} & 
\multicolumn{1}{r}{} & 8.62 & 5.53 & 4.00 & 3.06 & \multicolumn{1}{r}{} & 
\multicolumn{1}{r}{} & 8.80 & 6.55 & 6.45 & 4.55 & \multicolumn{1}{r}{} & 
\multicolumn{1}{r}{} & 10.00 & 23.90 & 47.70 & 77.80 \\ 
\textbf{50} & -4.41 & -2.91 & -2.11 & -1.63 & \multicolumn{1}{r}{} & 
\multicolumn{1}{r}{} & 8.30 & 5.29 & 3.84 & 2.95 & \multicolumn{1}{r}{} & 
\multicolumn{1}{r}{} & 9.85 & 6.50 & 6.25 & 5.10 & \multicolumn{1}{r}{} & 
\multicolumn{1}{r}{} & 11.95 & 24.40 & 52.10 & 80.80 \\ 
& \multicolumn{22}{l}{FMOLS} \\ \hline
\textbf{20} & -8.65 & -5.12 & -3.61 & -2.69 & \multicolumn{1}{r}{} & 
\multicolumn{1}{r}{} & 12.18 & 7.99 & 5.79 & 4.56 & \multicolumn{1}{r}{} & 
\multicolumn{1}{r}{} & 11.25 & 5.65 & 4.20 & 3.85 & \multicolumn{1}{r}{} & 
\multicolumn{1}{r}{} & 2.00 & 4.55 & 9.50 & 25.50 \\ 
\textbf{30} & -8.06 & -5.27 & -3.57 & -2.55 & \multicolumn{1}{r}{} & 
\multicolumn{1}{r}{} & 11.00 & 7.43 & 5.42 & 4.14 & \multicolumn{1}{r}{} & 
\multicolumn{1}{r}{} & 9.65 & 6.85 & 5.20 & 3.35 & \multicolumn{1}{r}{} & 
\multicolumn{1}{r}{} & 2.10 & 3.95 & 12.65 & 31.60 \\ 
\textbf{40} & -8.65 & -5.53 & -3.70 & -2.74 & \multicolumn{1}{r}{} & 
\multicolumn{1}{r}{} & 11.08 & 7.35 & 5.25 & 4.08 & \multicolumn{1}{r}{} & 
\multicolumn{1}{r}{} & 11.50 & 6.50 & 4.55 & 3.85 & \multicolumn{1}{r}{} & 
\multicolumn{1}{r}{} & 1.45 & 3.35 & 12.25 & 32.05 \\ 
\textbf{50} & -8.44 & -5.36 & -3.63 & -2.78 & \multicolumn{1}{r}{} & 
\multicolumn{1}{r}{} & 10.72 & 7.07 & 5.07 & 3.94 & \multicolumn{1}{r}{} & 
\multicolumn{1}{r}{} & 12.45 & 6.70 & 4.55 & 3.40 & \multicolumn{1}{r}{} & 
\multicolumn{1}{r}{} & 1.75 & 4.40 & 13.70 & 34.30 \\ \hline
& \multicolumn{22}{l}{\textbf{Bootstrap bias-corrected estimators}} \\ \hline
& \multicolumn{22}{l}{PB} \\ \hline
\textbf{20} & -1.26 & -0.43 & -0.05 & -0.01 & \multicolumn{1}{r}{} & 
\multicolumn{1}{r}{} & 7.34 & 4.80 & 3.56 & 2.78 & \multicolumn{1}{r}{} & 
\multicolumn{1}{r}{} & 10.10 & 7.95 & 6.45 & 5.85 & \multicolumn{1}{r}{} & 
\multicolumn{1}{r}{} & 29.35 & 52.15 & 76.10 & 90.15 \\ 
\textbf{30} & -1.04 & -0.51 & -0.23 & -0.05 & \multicolumn{1}{r}{} & 
\multicolumn{1}{r}{} & 6.29 & 4.21 & 3.24 & 2.42 & \multicolumn{1}{r}{} & 
\multicolumn{1}{r}{} & 10.20 & 7.35 & 7.20 & 5.40 & \multicolumn{1}{r}{} & 
\multicolumn{1}{r}{} & 37.20 & 64.30 & 85.00 & 95.90 \\ 
\textbf{40} & -1.31 & -0.51 & -0.12 & -0.08 & \multicolumn{1}{r}{} & 
\multicolumn{1}{r}{} & 5.98 & 3.93 & 3.01 & 2.32 & \multicolumn{1}{r}{} & 
\multicolumn{1}{r}{} & 10.15 & 7.50 & 6.75 & 5.80 & \multicolumn{1}{r}{} & 
\multicolumn{1}{r}{} & 39.00 & 69.80 & 89.80 & 98.15 \\ 
\textbf{50} & -1.26 & -0.45 & -0.12 & -0.09 & \multicolumn{1}{r}{} & 
\multicolumn{1}{r}{} & 5.78 & 3.72 & 2.78 & 2.17 & \multicolumn{1}{r}{} & 
\multicolumn{1}{r}{} & 10.20 & 7.85 & 7.05 & 6.00 & \multicolumn{1}{r}{} & 
\multicolumn{1}{r}{} & 42.90 & 75.25 & 93.50 & 99.20 \\ \hline
& \multicolumn{22}{l}{PMG} \\ \hline
\textbf{20} & -1.23 & -0.44 & -0.09 & -0.03 & \multicolumn{1}{r}{} & 
\multicolumn{1}{r}{} & 9.24 & 5.54 & 3.91 & 3.03 & \multicolumn{1}{r}{} & 
\multicolumn{1}{r}{} & 16.00 & 10.50 & 7.60 & 6.70 & \multicolumn{1}{r}{} & 
\multicolumn{1}{r}{} & 31.35 & 51.80 & 76.10 & 90.80 \\ 
\textbf{30} & -0.92 & -0.60 & -0.27 & -0.10 & \multicolumn{1}{r}{} & 
\multicolumn{1}{r}{} & 7.89 & 4.84 & 3.57 & 2.59 & \multicolumn{1}{r}{} & 
\multicolumn{1}{r}{} & 14.70 & 9.95 & 8.45 & 5.85 & \multicolumn{1}{r}{} & 
\multicolumn{1}{r}{} & 38.20 & 60.85 & 83.80 & 97.10 \\ 
\textbf{40} & -1.10 & -0.56 & -0.19 & -0.11 & \multicolumn{1}{r}{} & 
\multicolumn{1}{r}{} & 7.14 & 4.51 & 3.30 & 2.45 & \multicolumn{1}{r}{} & 
\multicolumn{1}{r}{} & 14.80 & 10.00 & 8.90 & 7.00 & \multicolumn{1}{r}{} & 
\multicolumn{1}{r}{} & 39.35 & 67.05 & 88.90 & 98.10 \\ 
\textbf{50} & -1.06 & -0.45 & -0.16 & -0.10 & \multicolumn{1}{r}{} & 
\multicolumn{1}{r}{} & 6.77 & 4.23 & 3.02 & 2.29 & \multicolumn{1}{r}{} & 
\multicolumn{1}{r}{} & 14.90 & 10.45 & 8.25 & 6.00 & \multicolumn{1}{r}{} & 
\multicolumn{1}{r}{} & 44.90 & 71.40 & 91.75 & 99.15 \\ \hline
& \multicolumn{22}{l}{PDOLS} \\ \hline
\textbf{20} & -2.34 & -1.13 & -0.66 & -0.37 & \multicolumn{1}{r}{} & 
\multicolumn{1}{r}{} & 8.73 & 5.46 & 3.92 & 3.05 & \multicolumn{1}{r}{} & 
\multicolumn{1}{r}{} & 12.05 & 10.05 & 7.75 & 6.90 & \multicolumn{1}{r}{} & 
\multicolumn{1}{r}{} & 22.40 & 45.25 & 71.30 & 87.75 \\ 
\textbf{30} & -2.03 & -1.29 & -0.74 & -0.41 & \multicolumn{1}{r}{} & 
\multicolumn{1}{r}{} & 7.73 & 4.85 & 3.53 & 2.68 & \multicolumn{1}{r}{} & 
\multicolumn{1}{r}{} & 12.15 & 8.60 & 8.45 & 7.65 & \multicolumn{1}{r}{} & 
\multicolumn{1}{r}{} & 26.95 & 51.50 & 79.30 & 94.15 \\ 
\textbf{40} & -2.46 & -1.28 & -0.71 & -0.41 & \multicolumn{1}{r}{} & 
\multicolumn{1}{r}{} & 7.40 & 4.68 & 3.31 & 2.54 & \multicolumn{1}{r}{} & 
\multicolumn{1}{r}{} & 12.65 & 10.20 & 9.05 & 7.75 & \multicolumn{1}{r}{} & 
\multicolumn{1}{r}{} & 26.15 & 55.55 & 84.05 & 96.90 \\ 
\textbf{50} & -2.32 & -1.22 & -0.68 & -0.44 & \multicolumn{1}{r}{} & 
\multicolumn{1}{r}{} & 7.28 & 4.41 & 3.15 & 2.39 & \multicolumn{1}{r}{} & 
\multicolumn{1}{r}{} & 15.00 & 9.60 & 8.40 & 7.40 & \multicolumn{1}{r}{} & 
\multicolumn{1}{r}{} & 30.15 & 59.85 & 88.40 & 98.55 \\ \hline
& \multicolumn{22}{l}{FMOLS} \\ \hline
\textbf{20} & -4.39 & -2.01 & -1.24 & -0.77 & \multicolumn{1}{r}{} & 
\multicolumn{1}{r}{} & 10.37 & 6.73 & 4.85 & 3.83 & \multicolumn{1}{r}{} & 
\multicolumn{1}{r}{} & 18.45 & 10.90 & 7.80 & 6.85 & \multicolumn{1}{r}{} & 
\multicolumn{1}{r}{} & 17.80 & 31.75 & 50.55 & 70.75 \\ 
\textbf{30} & -4.00 & -2.25 & -1.25 & -0.71 & \multicolumn{1}{r}{} & 
\multicolumn{1}{r}{} & 9.02 & 6.02 & 4.42 & 3.43 & \multicolumn{1}{r}{} & 
\multicolumn{1}{r}{} & 17.25 & 12.00 & 9.60 & 7.70 & \multicolumn{1}{r}{} & 
\multicolumn{1}{r}{} & 20.95 & 37.25 & 61.40 & 82.55 \\ 
\textbf{40} & -4.46 & -2.44 & -1.27 & -0.84 & \multicolumn{1}{r}{} & 
\multicolumn{1}{r}{} & 8.82 & 5.67 & 4.11 & 3.21 & \multicolumn{1}{r}{} & 
\multicolumn{1}{r}{} & 19.35 & 12.10 & 9.60 & 9.05 & \multicolumn{1}{r}{} & 
\multicolumn{1}{r}{} & 20.10 & 38.50 & 67.65 & 85.95 \\ 
\textbf{50} & -4.30 & -2.27 & -1.22 & -0.86 & \multicolumn{1}{r}{} & 
\multicolumn{1}{r}{} & 8.48 & 5.41 & 3.92 & 3.02 & \multicolumn{1}{r}{} & 
\multicolumn{1}{r}{} & 21.20 & 13.90 & 10.10 & 7.80 & \multicolumn{1}{r}{} & 
\multicolumn{1}{r}{} & 22.85 & 44.85 & 72.20 & 90.40 \\ \hline\hline
\end{tabular}%
\vspace{-0.2in}
\end{center}

\begin{flushleft}
\footnotesize%
\singlespacing%
Notes: See the notes to Table B1. Bias-corrected versions of the PB
estimator are described in Subsection \ref{BM}. Bias-corrected versions of
the PMG, PDOLS and FMOLS estimator are described in Appendix B. Inference is
conducted using bootstrapped critical values. 
\normalsize%
\end{flushleft}

\end{document}